\title{When does Metropolized Hamiltonian Monte Carlo provably outperform Metropolis-adjusted Langevin algorithm?}
\author[1]{Yuansi Chen}
\author[2]{Khashayar Gatmiry}
\author[3]{Minhui Jiang}
\affil[1,3]{ETH Z\"urich}
\affil[2]{MIT}
\date{}
\newcommand\numberthis{\addtocounter{equation}{1}\tag{\theequation}}
\newcommand{\target}{\mu}
\newcommand{\isop}{\psi}
\newcommand{\dist}{\mathfrak{D}}
\newcommand{\cq}{\mathfrak{q}}
\newcommand{\cp}{\mathfrak{p}}
\newcommand{\vq}{\mathbf{q}}
\newcommand{\tilq}{\tilde{q}}
\newcommand{\tilp}{\tilde{p}}
\newcommand{\proposal}{\mathcal{P}}
\newcommand{\transition}{\mathcal{T}}
\newcommand{\step}{\eta}
\newcommand{\tole}{\epsilon}
\newcommand{\taghmc}{\text{HMC}}
\newcommand{\tagmala}{\text{MALA}}
\newcommand{\tmix}{\tau_{\text{mix}}}
\newcommand{\tvdis}{{\rm{d}}_{{\rm{TV}}}}
\newcommand{\gradf}{\nabla f}
\newcommand{\hessf}{\nabla^2 f}
\newcommand{\jerkf}{\nabla^3 f}
\newcommand{\Ham}{\mathcal{H}}
\newcommand{\Accept}{\mathcal{A}}
\newcommand{\D}{\mathbf{D}}
\newcommand{\Lparam}{L}
\newcommand{\mparam}{m}
\newcommand{\condi}{\kappa}
\newcommand{\SSC}{\gamma}
\newcommand{\Cl}{\Upsilon_\ell}
\newcommand{\dl}{\dims_\ell}
\newcommand{\F}{F}
\newcommand{\bF}{\bar{F}}
\newcommand{\Fo}{\mathbb{F}}
\newcommand{\Go}{\mathbb{G}}
\newcommand{\trH}{\Upsilon}
\newcommand{\warm}{M}
\newcommand{\defn}{:=}
\newcommand{\polylog}{\text{polylog}}
\theoremstyle{plain}
\newtheorem{theorem}{Theorem}
\newtheorem{lemma}{Lemma}
\newtheorem{corollary}{Corollary}
\newtheorem{assumption}{Assumption}
\newlength{\widebarargwidth}
\newlength{\widebarargheight}
\newlength{\widebarargdepth}
\long\def\@makecaption#1#2{
        \vskip 0.8ex
        \setbox\@tempboxa\hbox{\small {\bf #1:} #2}
        \parindent 1.5em  
        \dimen0=\hsize
        \advance\dimen0 by -3em
        \ifdim \wd\@tempboxa >\dimen0
                \hbox to \hsize{
                        \parindent 0em
                        \hfil
                        \parbox{\dimen0}{\def\baselinestretch{0.96}\small
                                {\bf #1.} #2
                                }
                        \hfil}
        \else \hbox to \hsize{\hfil \box\@tempboxa \hfil}
        \fi
        }
\long\def\comment#1{}
\definecolor{battleshipgrey}{rgb}{0.52, 0.52, 0.51}
\definecolor{darkgray}{rgb}{0.66, 0.66, 0.66}
\definecolor{darkgreen}{rgb}{0.0, 0.2, 0.13}
\definecolor{darkspringgreen}{rgb}{0.09, 0.45, 0.27}
\definecolor{dukeblue}{rgb}{0.0, 0.0, 0.61}
\definecolor{olivedrab7}{rgb}{0.24, 0.2, 0.12}
\definecolor{darkblue}{rgb}{0.0, 0.0, 0.55}
\definecolor{darkscarlet}{rgb}{0.34, 0.01, 0.1}
\definecolor{candyapplered}{rgb}{1.0, 0.03, 0.0}
\definecolor{ao(english)}{rgb}{0.0, 0.5, 0.0}
\definecolor{applegreen}{rgb}{0.55, 0.71, 0.0}
\def\moverlay{\mathpalette\mov@rlay}
\def\mov@rlay#1#2{\leavevmode\vtop{%
   \baselineskip\z@skip \lineskiplimit-\maxdimen
   \ialign{\hfil$\m@th#1##$\hfil\cr#2\crcr}}}
\newcommand{\charfusion}[3][\mathord]{
    #1{\ifx#1\mathop\vphantom{#2}\fi
        \mathpalette\mov@rlay{#2\cr#3}
      }
    \ifx#1\mathop\expandafter\displaylimits\fi}
\DeclareMathOperator{\trace}{trace}
\newcommand{\dims}{\ensuremath{d}}
\newcommand{\real}{\ensuremath{\mathbb{R}}}
\newcommand{\naturalnum}{\ensuremath{\mathbb{N}}}
\newcommand{\Ind}{\ensuremath{\mathbb{I}}}
\newcommand{\borel}{\ensuremath{\mathcal{B}}}
\newcommand{\Exs}{\ensuremath{{\mathbb{E}}}}
\newcommand{\Prob}{\ensuremath{{\mathbb{P}}}}
\newcommand{\Law}{\mathcal{L}}
\newcommand{\Normal}{\ensuremath{\mathcal{N}}}
\DeclarePairedDelimiterX{\infdivx}[2]{(}{)}{%
  #1\;\delimsize\|\;#2%
}
\newcommand{\kldiv}{\mathrm{KL}\infdivx}
\newcommand{\tvdist}[2]{\ensuremath{ d_{\text{\tiny{TV}}}\parenth{#1, #2} }}
\newcommand{\Ot}{\widetilde{O}}
\newcommand{\brackets}[1]{\left[ #1 \right]}
\newcommand{\parenth}[1]{\left( #1 \right)}
\newcommand{\braces}[1]{\left\{ #1 \right \}}
\newcommand{\abss}[1]{\left| #1 \right |}
\newcommand{\angles}[1]{\left\langle #1 \right \rangle}
\newcommand{\ceils}[1]{\left\lceil #1 \right \rceil}
\newcommand{\tp}{^\top}
\newcommand{\matsnorm}[2]{\left|\!\left|\!\left| #1 \right|\!\right|\!\right|_{{#2}}}
\newcommand{\vecnorm}[2]{\left\| #1\right\|_{#2}}
\newcommand{\enorm}[1]{\vecnorm{#1}{2}}
\newcommand{\trltwo}{\text{trace}-\ell_2}
\begin{document}

\maketitle

\begin{abstract}
  We analyze the mixing time of Metropolized Hamiltonian Monte Carlo (HMC) with the leapfrog integrator to sample from a distribution on $\real^\dims$ whose log-density is smooth, has Lipschitz Hessian in Frobenius norm and satisfies isoperimetry. We bound the gradient complexity to reach $\epsilon$ error in total variation distance from a warm start by $\tilde O(d^{1/4}\polylog(1/\epsilon))$ and demonstrate the benefit of choosing the number of leapfrog steps to be larger than 1. To surpass the previous analysis on Metropolis-adjusted Langevin algorithm (MALA) that has $\tilde{O}(d^{1/2}\polylog(1/\epsilon))$ dimension dependency~\cite{wu2021minimax}, we reveal a key feature in our proof that the joint distribution of the location and velocity variables of the discretization of the continuous HMC dynamics stays approximately invariant. This key feature, when shown via induction over the number of leapfrog steps, enables us to obtain estimates on moments of various quantities that appear in the acceptance rate control of Metropolized HMC. Notably, our analysis does not require log-concavity or independence of the marginals, and only relies on an isoperimetric inequality. To illustrate the relevance of the Lipschitz Hessian in Frobenius norm assumption, several examples that fall into our framework are discussed. 
\end{abstract}

\section{Introduction}

The Hamiltonian Monte Carlo (HMC) algorithm is a popular Markov chain Monte Carlo (MCMC) algorithm for sampling from a smooth distribution on $\real^\dims$. The algorithm originates from the physics literature~\cite{alder1959studies,duane1987hybrid} and has important applications in statistical
physics, computational chemistry, etc. Since its introduction to the statistics by Neal~\cite{neal1994improved}, HMC has since become very popular as a generic tool for sampling smooth distributions in Bayesian statistical inference. Notably, HMC is implemented the default sampling algorithm for sampling complex smooth distribution in many state-of-the-art software packages, including Stan~\cite{carpenter2017stan}, PyMC3~\cite{salvatier2016probabilistic}, Mamba and Tensorflow~\cite{abadi2016tensorflow}. 

HMC's popularity can be attributed to its superior empirical performance when compared to traditional MCMC sampling algorithms like Metropolized Random Walk (MRW) and Metropolis-adjusted Langevin Algorithm (MALA). There are several heuristic arguments that suggest why HMC might perform better. For example, based on intuition from the continuous-time limit of the Hamiltonian dynamics, it was suggested that HMC can suppress random walk behavior using momentum~\cite{neal2011mcmc}. Others highlighted HMC's ability to maintain a
constant asymptotic accept-reject rate with large step-size~\cite{creutz1988global}. Although these intuitive arguments may be compelling, their quantitative understanding is limited, and they cannot be relied on to determine the optimal free parameters of HMC. 

There are two fundamental quantitative questions that are crucial to understanding the fast convergence of HMC. The first question concerns the conditions under which HMC will converge faster than MALA. While HMC is expected to perform better, the specific conditions of this advantage are not well-understood. This lack of clarity poses difficulties when one has to choose between MALA and HMC.  The second question is related to the sensitivity of HMC to its free parameters, such as the step-size and number of integration steps. If too few integration steps are taken, HMC can behave similarly to MALA, whereas too many steps can cause HMC to wander back to the initial state's neighborhood, resulting in slower mixing and wasted computation~\cite{betancourt2014optimizing}. Although analysis over simplified product distributions exists~\cite{beskos2013optimal}, it remains unclear how to achieve the optimal free parameter choices under general settings. 

In the paper, we investigate the mixing time of Metropolized HMC for sampling a smooth target density satisfying an isoperimetric inequality in $\real^\dims$. With the help of the theoretical mixing time analysis, we aim to offer guidance over the optimal free parameter specification of HMC.  

\subsection{Related work}
We start by reviewing the existing literature on the Metropolized Adjusted Langevin Algorithm (MALA). MALA is essentially a special case of Metropolized HMC, where only one leapfrog integrator step is taken.  The unadjusted version of MALA, the unadjusted Langevin algorithm (ULA), has been extensively studied in the literature, with numerous papers discussing its properties~\cite{parisi1981correlation, grenander1994representations, dalalyan2017theoretical, durmus2017nonasymptotic, cheng2018convergence, durmus2019high, vempala2019rapid,altschuler_resolving_2023,bou-rabee_mixing_2023}. However, unadjusted sampling algorithms suffer from a significant limitation: their mixing time depends polynomially on the error tolerance. As a result, they are not suitable to obtain high-quality samples.

MALA was introduced by Besag~\cite{besag1994comments} as the adjusted Langevin algorithm that uses the Metropolis-Hastings step to ensure its convergence to the correct stationary distribution. Unlike ULA, MALA has a logarithmic dependence on the error tolerance~\cite{dwivedi2018log}. Extensive work has been done on establishing the mixing time of MALA for various distributions~\cite{roberts1996geometric,roberts1998optimal,bou2013nonasymptotic,dwivedi2018log,chen2020fast,lee2021structured,chewi2021optimal,wu2021minimax}. Specifically, the mixing time of MALA for sampling a $\mparam$-strongly log-concave distribution with second-order smoothness parameter $\Lparam$ is settled at $\tilde{O}(\dims^{\frac12} \Lparam/\mparam)$ from a warm start. This bound is shown to be tight up to logarithmic factors~\cite{wu2021minimax}. However, for general distributions with higher-order smoothness, it remains an open question how the mixing time scales with the dimension $\dims$. Roberts and Rosenthal~\cite{roberts1998optimal} suggested a $\dims^{\frac13}$ scaling limit for sufficiently regular product distributions, but more research is needed to extend these results beyond product distributions. 

While the mixing time of the Metropolized Adjusted Langevin Algorithm (MALA) has been well studied in recent years, much less is known about MCMC methods that rely on Hamiltonian Ordinary Differential Equations (ODEs). To transform HMC's continuous ODE into an algorithm, numerical integrators are often used to discretize the ODE and produce a Markov chain based on a finite number of gradient evaluations. The leapfrog or St\"ormer-Verlet integrator is a commonly used choice that performs well in practice, and it discretizes the integral of the force using a trapezoidal rule~\cite{neal2011mcmc}. 

For exact HMC with zero numerical error, \cite{mangoubi2017rapid} used a coupling argument to derive Wasserstein-1 distance convergence for strongly logconcave target distributions, which was sharpened in~\cite{chen_optimal_2022} and~\cite{bou2020coupling}. 

For unadjusted Hamiltonian Monte Carlo (uHMC), \cite{gouraud_hmc_2025} demonstrated that the leapfrog integrator has a gradient complexity of $\tilde{O}(d^{1/2}\epsilon^{-1/2})$ under the Lipschitz property of the Hessian. The result is obtained by examining a more general family of algorithms that are discretization of underdamped Langevin diffusions (ULD). Shen et al.~\cite{shen2019randomized} incorporate the idea of Randomized Midpoint Method (RMM) to ULD, obtained a gradient complexity of $\tilde O(d^{1/3}\epsilon^{-2/3})$. This dimension dependency is shown to tight in Cao et al.~\cite{cao2020complexity}. Recently, \cite{bou-rabee_unadjusted_2025} used this idea of RMM to propose a random-time integrator, which they call ``Stratified Monte Carlo'' for uHMC. This algorithm achieves a gradient complexity of $\tilde O(d^{1/3}\epsilon^{-2/3})$ without high-order derivative assumptions. All above results for uHMC are stated for convergence in the Wasserstein distance. Explicit total variation (TV) bounds for uHMC are scarce, \cite{bou-rabee_mixing_2023} showed that $O(d^{3/4}\epsilon^{-1/2} \log(d/\epsilon))$ gradient evaluations suffices for $\epsilon$-TV distance accuracy. 

Similar to the situation in Langevin algorithms, adding a Metropolis-Hastings step to uHMC can lead to a mixing time with logarithmic dependence on the error tolerance. In particular, Chen et al.~\cite{chen2020fast} demonstrate a gradient complexity of $\tilde{O}(d^{11/12}\polylog(1/\tole))$ for sampling from a strongly log-concave and third-order smooth target distribution in total variation (TV) distance under a warm start. It is worth noting that there is a gap between the above result and what is known for sampling a regular product distribution via Metropolized HMC, where the gradient complexity only has dimension dependency $\dims^{\frac14}$~\cite{beskos2013optimal}.


\subsection{Our contribution}
In this paper, we investigate the gradient complexity (or the number of gradient evaluations) needed for Metropolized HMC with leapfrog integrators to sample a distribution on $\real^\dims$ whose log-density is smooth, has Lipschitz Hessian in Frobenius norm and satisfies an isoperimetric inequality, from a warm start. We establish that to achieve $\tole$ error in total variation distance, the gradient complexity is $\tilde O\parenth{\dims^{\frac14} \log(1/\epsilon)}$. Because of the novelty in acceptance rate control, this is the first time that a $\dims^{\frac14}$ bound for Metropolized HMC with leapfrog integrators for general distributions, in contrasts with many previous bounds with at least $\dims^{\frac12}$ dependency~\cite{chen2020fast,chewi2021optimal,wu2021minimax}.

Additionally, our bound reveals that with the additional Hessian Lipschitz assumption, it is beneficial to choose the number of leapfrog steps $K > 1$.  Specifically, choosing $K=1$, which corresponds to MALA, results in a mixing upper bound of order $\dims^{\frac37}$ under the same assumptions. 

Our results only require isoperimetry and can deal with target distributions without assuming log-concavity. To compare with the existing work in the log-concave sampling setting, we translate our results and summarize them high accuracy sampling in Table~\ref{tab:1}. In the log-concave sampling setting, we assume that the target density is $\mparam$-strongly log-concave and $\Lparam$-log-smooth. Our results require an additional assumption that the target density is $\SSC \Lparam^{\frac32}$-Hessian Lipschitz in Frobenius norm (also called strongly Hessian Lipschitz), with $\SSC$ assumed to be constant for simplicity. In this setting, the target density satisfies an isoperimetric inequality with coefficient $\mparam^{\frac12}$.
\begin{table}[ht]
\centerline{
\begin{tabular}{ c c c c}
 \specialrule{.1em}{.05em}{.05em}
 HMC (\text{best} $K > 1$) & Initialization & Extra assumption & \#Gradient Evals \\\hline
\cite{beskos2013optimal} & warm & product distribution & $\dims^{\frac14}\dagger$ \\\hline
  \cite{chen2020fast} & $\Normal(x^*,\Lparam^{-1}\Ind_\dims)$ & Hessian Lipschitz & $\dims^{\frac{11}{12}} \kappa $  \\\hline
 this work & warm  & strongly Hessian Lipschitz & $\dims^{\frac{1}{4}}\kappa$ \\
\specialrule{.1em}{.05em}{.05em}
\specialrule{.1em}{.05em}{.05em}
MALA ($K=1$) & - & - & - \\\hline
\cite{roberts1998optimal} & warm & product distribution & $\dims^{\frac13}\dagger$ \\\hline
\cite{dwivedi2018log}& \multirow{ 2}{*}{warm / $\Normal(x^*,\Lparam^{-1}\Ind_\dims)$} & \multirow{ 2}{*}{N/A} &\multirow{ 2}{*}{ $\max\{d^{\frac12}\condi^{\frac32}, \dims\condi\}$}\\
 \cite{chen2020fast}& &\\\hline
 \cite{lee2020logsmooth} & $\Normal(x^*,\Lparam^{-1}\Ind_\dims)$ & N/A & $\dims \kappa $ 
 \\\hline
 \cite{chewi2021optimal}& warm & N/A & $ \dims^{\frac12}\condi^{\frac32}$\\\hline
 \cite{wu2021minimax}& warm & N/A & $ \dims^{\frac12}\condi$\\\hline
 \cite{altschuler_faster_2024}& $\Normal(x^*,\Lparam^{-1}\Ind_\dims)$ & N/A & $ \dims^{\frac12}\condi$\\\hline
  this work & warm  & strongly Hessian Lipschitz & $ \dims^{\frac37}\condi$\\
 \specialrule{.1em}{.05em}{.05em}
  \end{tabular}}
  \caption{Summary of $\tole$-mixing time in TV distance of Metropolized HMC and MALA for sampling a $\Lparam$-log-smooth and $\mparam$-strongly log-concave target under a warm start or a Gaussian initialization $\Normal(x^*,{\Lparam}^{-1}\Ind_\dims)$, where $x^*$ denotes the unique mode of the target density.  In the results with extra assumptions, the Hessian Lipschitz or strongly Hessian Lipschitz coefficients are considered constant. The MALA results can be considered as HMC results with the number of leapfrog steps $K$ chosen to be $1$. These statements hide constants and logarithmic factors in $\dims,\tole^{-1}$ or $\condi =\Lparam/\mparam$. $\dagger$ The $\kappa$ dependency is unknown and higher-order derivatives are assumed in~\cite{roberts1998optimal} and~\cite{beskos2013optimal}. }
  \label{tab:1}
\end{table}

The rest of paper is structured as follows. Section~\ref{sec:preliminaries} provides background information on Markov chains, the basics of Hamiltonian Monte Carlo and our main assumption on the target density. In Section~\ref{sec:main_results}, we present our main result, followed by a proof that assumes the proposal overlap bound and the acceptance rate control. Section \ref{sec:proposal_overlap} bounds the proposal overlap. In Section \ref{sec:acceptance}, we highlight our novel proof techniques to control the acceptance rate control.
Finally, in Section~\ref{sec:examples}, we provide examples of target
densities that satisfy our assumptions, and we discuss the implications of our findings.

\section{Preliminaries}
\label{sec:preliminaries}
In this section, we introduce the Metropolized HMC algorithm and provide necessary background knowledge to establish mixing time of Markov chains. 
\subsection{Markov chain basics}
We consider the problem of sampling from a target measure $\mu$ with density with respect to the Lebesgue measure on $\real^\dims$. Given a \textit{Markov chain with transition kernel} $P: \real^\dims \times \borel(\real^\dims) \to \real_{\geq 0}$ where $\borel(\real^\dims)$ denotes the Borel $\sigma$-algebra on $\real^\dims$, the $k$-step transition kernel $P^k$ is defined recursively by $P^k(x, dy) \defn \int_{z \in \real^\dims} P^{k-1}(x, dz) P(z, dy)$. 
The Markov chain is called \textit{reversible} with respect to the target measure $\mu$ if 
\begin{align*}
  \mu(dx) P(x, dy) =  \mu(dy) P(y, dx).
\end{align*}
One can associate the Markov chain with a transition operator $\transition_P$.
\begin{align*}
  \transition_P(\nu) (S) \defn \int_{y \in \real^\dims } d\nu(y) P(y, S), \quad \forall S\in \borel(\real^\dims).
\end{align*}
In words, when $\nu$ is the distribution of the current state, $\transition_P(\nu)$ gives the distribution of the next state. And $\transition_P^k(\nu) \defn \transition_{P^k}(\nu)$ is the distribution of the state after $k$ steps. 

\paragraph{$s$-conductance.} For $s \in (0, 1)$, we define the $s$-conductance $\Phi_s$ of the Markov chain $P$ with its stationary measure $\mu$ as follows
\begin{align}
  \label{eq:def_s_conductance}
  \Phi_s (P) \defn \inf_{S: s < \mu(S) < 1-s} \frac{\int_S P(x, S^c) \mu(dx)}{\min\braces{\mu(S), \mu(S^c)} - s}.
\end{align}
When compared to conductance (the case $s=0$), $s$-conductance allows us to ignore small parts of the distribution where the conductance is difficult to bound. 

\paragraph{Lazy chain.} Given a Markov chain with transition kernel $P$. We define its lazy variant $P^{\text{lazy}}$, which stays in the same state with probability at least $\frac{1}{2}$, as
\begin{align*}
  P^{\text{lazy}}(x, S) \defn \frac{1}{2} \delta_{x \to S} + \frac{1}{2} P(x,S).
\end{align*}
Here $\delta_{x \to \cdot}$ is the Dirac measure at $x$. 
Since the lazy variant only slows down the convergence rate by a constant factor, we study lazy Markov chains in this paper to simplify our
theoretical analysis.

\paragraph{Total variation distance.} Let the total variation (TV) distance between two probability distributions $\mathcal{P}_1, \mathcal{P}_2$ be 
\begin{align*}
  \tvdis(\mathcal{P}_1, \mathcal{P}_2) \defn \sup_{A \in \borel(\real^\dims)} \abss{\mathcal{P}_1(A) - \mathcal{P}_2(A)},
\end{align*}
where $\borel(\real^\dims)$ is the Borel sigma-algebra on $\real^\dims$. If $\mathcal{P}_1$ and $\mathcal{P}_2$ admit densities $p_1$ and $p_2$ respectively, we may write $\tvdis(\mathcal{P}_1, \mathcal{P}_2) = \frac{1}{2} \int \abss{p_1(x) - p_2(x)} dx$.

\paragraph{Warm start.} We say an initial measure $\mu_{0}$ is \textit{$\warm$-warm} if it satisfies
\begin{align*}
  \sup_{S \in \borel(\real^\dims)} \frac{\mu_{0}(S)}{\mu(S)} \leq \warm. 
\end{align*}

\paragraph{Mixing time.} For an error tolerance $\epsilon \in (0, 1)$, the total variation distance $\epsilon$-mixing time of the Markov chain $P$ with initial distribution $\mu_{0}$ and target distribution $\mu$ is defined as
\begin{align*}
  \tmix^P(\epsilon, \mu_{0}, \mu) \defn \inf\braces{k \in \naturalnum \mid \tvdis\parenth{\mathcal{T}^k_P (\mu_{0}), \mu} \leq \epsilon }. 
\end{align*}
The \textit{gradient complexity} of the Markov chain $P$ is defined as the product of the number of gradient evaluations per-iteration and the mixing time. 

\subsection{HMC basics}
In accordance with~\cite{neal2011mcmc}, we introduce a few basic facts about the HMC algorithm.
\paragraph{Continuous HMC dynamics.} The continuous HMC dynamics describes the evolution of a state vector $\cq_t \in \real^d$ and its momentum $\cp_t \in \real^d$ over time $t \geq 0$ based on a Hamiltonian function $\Ham: \real^\dims \times \real^\dims \to \real$ via Hamilton's equations:
\begin{align*}
  \frac{d \cq_t}{dt} &= \frac{\partial \Ham}{\partial p}(\cq_t, \cp_t)\\
  \frac{d \cp_t}{dt} &= -\frac{\partial \Ham}{\partial q}(\cq_t, \cp_t).
\end{align*}
Using our choice of the Hamiltonian function $\Ham(q, p) \defn f(q) + \frac{1}{2} \vecnorm{p}{2}^2$, it becomes
\begin{align}
  \label{eq:continuous_HMC_ODE}
  \frac{d \cq_t}{dt} &= \cp_t \notag \\
  \frac{d \cp_t}{dt} &= -\gradf (\cq_t).
\end{align}
We can write Eq.~\eqref{eq:continuous_HMC_ODE} in its integral form as follows
\begin{align}
  \label{eq:continuous_HMC_form}
  \cq_{t} &= \cq_0 + \int_0^t \cp_{s} \, ds \notag \\
  \cp_{t} &= \cp_0 - \int_0^t \gradf (\cq_s) \, ds.
\end{align}
We denote the forward mapping from $(\cq_0, \cp_0)$ to $(\cq_t, \cp_t)$ as $\bF_t$. That is, $\bF_t(\cq_0, \cp_0) = (\cq_t, \cp_t)$.
\paragraph{Metropolized HMC with leapfrog integrator.} To implement continuous HMC practically, it needs to be approximated through time discretization.  We use the leapfrog or St\"ormer-Verlet integrator to discretize the continuous HMC dynamics. For integer $K \geq 1$ and step-size $\step$, HMC with leapfrog integrator is an iterative algorithm that works as follows. At each iteration, given $q_0 \in \real^d$ and an independent $p_0 \sim \Normal(0, \Ind_d)$, it runs the following updates for $k = 0, 1, \ldots, K-1$:
\begin{align*}
  p_{k+\frac{1}{2} } &= p_{k } - \frac{\step}{2} \gradf(q_{k }) \\
  q_{k+1 } &= q_{k } + \step p_{k+\frac12 } \\
  p_{k+1 } &= p_{k+\frac{1}{2} } - \frac{\step}{2} \gradf(q_{k+1 }).
\end{align*}
That is,
\begin{equation}\label{eq:leapfrog_HMC_recursion}
\begin{aligned}
  q_{k+1 } &= q_{k } + \step p_{k } - \frac{\step^2}{2} \gradf(q_{k }) \notag \\
  p_{k+1 } &= p_{k } - \frac{\step}{2} \gradf(q_{k }) - \frac{\step}{2} \gradf(q_{k+1 }).
\end{aligned}
\end{equation}
The forward mapping from $(q_0, p_0)$ to $(q_{k }, p_{k })$ is denoted $\F_k$, thus $(q_k,p_k)=\F_k(q_0, p_0)$. Since discretizing the dynamics generates discretization errors at each iteration, a Metropolis-Hastings adjustment is used to ensure the correct stationary measure. The Metropolized HMC accepts the proposal $(q_K, p_K)$ with acceptance probability
\begin{align}
  \label{eq:HMC_acceptance}
  \Accept_{K, \step}(q_0, p_0) &\defn \min\braces{1, \frac{\exp(-\Ham(q_{K }, p_{K }))}{\exp(-\Ham(q_0, p_0)) }} \notag \\
  &= \min \braces{1, \exp\parenth{-f(q_{K }) - \frac{1}{2} \vecnorm{p_{K}}{2}^2 + f(q_0) + \frac{1}{2} \vecnorm{p_0}{2}^2 }}.
\end{align}
The resulting algorithm is what we call the Metropolized HMC with step-size $\step$ and $K$ leapfrog steps. Throughout the paper, this algorithm is what we refer to when we simply say the HMC algorithm. We denote its Markov transition kernel by $\transition_{\taghmc(K, \step)}$. The dependence on $K$ and $\step$ is omitted if it is clear from the context. Note that MALA can be seen as a special case of Metropolized HMC with a single leapfrog step (K=1), as described in~\cite{neal2011mcmc}. 

\subsection{Notation}
For a vector $q \in \real^\dims$, we use $\enorm{q} \defn \parenth{\sum_{i=1}^\dims q_i^2}^{\frac12}$ to denote its Euclidean norm. Given an integer $\ell \geq 1$, we introduce the $\ell$-th norm of a random variable $\vq$ on $\real^\dims$ as 
\begin{align*}
  \matsnorm{\vq}{\ell} \defn \parenth{\Exs \enorm{\vq}^\ell}^{\frac{1}{\ell}}.
\end{align*}
This $\matsnorm{\cdot}{\ell}$ norm satisfies the triangle inequality
\begin{align*}
  \matsnorm{\vq_1 + \vq_2}{\ell} \leq \matsnorm{\vq_1}{\ell} + \matsnorm{\vq_2}{\ell}.
\end{align*}
Additionally, for a function of random variables $s \mapsto \vq_s$, by Jensen's inequality, we have
\begin{align}
  \label{eq:random_vector_norm_integral}
  \matsnorm{\int_0^t \vq_s ds}{\ell}^{\ell} &\leq t^{\ell-1} \int_0^t  \matsnorm{\vq_s}{\ell}^{\ell} ds \leq \parenth{t \sup_{s \in [0,t]} {\matsnorm{\vq_s}{\ell}}}^\ell
\end{align}

For a matrix $A \in \real^{\dims \times \dims}$, we denote its spectral norm by $\vecnorm{A}{2} \defn \sup_{\enorm{x} \neq 0} \frac{\enorm{Ax}}{\enorm{x}}$ and its Frobenius norm by $\vecnorm{A}{F} \defn \parenth{\sum_{i,j=1}^\dims A_{ij}^2}^{\frac12}$. 
For a function $f: \real^\dims \to \real$ that is thrice differentiable, we denote its first, second and third derivatives at $x$ respectively by $\gradf(x) \in \real^\dims$, $\hessf_x \in \real^{\dims \times \dims}$ and $\nabla^3 f_x \in \real^{\dims \times \dims \times \dims}$. Here
\begin{align*}
  \brackets{\gradf(x)}_i = \frac{\partial}{\partial x_i} f(x), \brackets{\hessf_x}_{ij} = \frac{\partial^2}{\partial x_i \partial x_j} f(x), \brackets{\nabla^3 f_x}_{ijk} = \frac{\partial^3}{\partial x_i \partial x_j \partial x_k} f(x). 
\end{align*}
We use $[d]$ to denote the set $\braces{1, 2, \ldots, d}$.
For $A \in \real^{\dims \times \dims \times \dims}$ a 3-tensor, we use the notation $A[x, y, z] = \sum_{(i, j, k) \in [d]^3} A_{ijk} x_i y_j z_k$. For integer $k \geq 1$, for disjoint nonempty subsets $I_1, \ldots, I_k$ of $\braces{1, 2, 3}$, we define the $I_1 \ldots I_k$-norm of $A$ as
\begin{align}
  \label{eq:def_multi_index_norm}
  \vecnorm{A}{I_1 \ldots I_k} \defn \sup \braces{\sum_{\mathbf{i} \in [d]^3 } A_{\mathbf{i}} x^{(1)}_{\mathbf{i}_{I_1}} \cdots x^{(k)}_{\mathbf{i}_{I_k}} \middle|  x^{(j)}_{\mathbf{i}_{I_j}} \in \real^{ \dims \times \abss{I_j}}, \sum_{\mathbf{i}_{I_j} \in  [d]^{\abss{I_j}} } \parenth{ x^{(j)}_{\mathbf{i}_{I_j}}}^2 \leq 1, \forall j \in [k]}.
\end{align}
Here $\mathbf{i}_{I_j} \in [d]^{\abss{I_j}}$ is the tuple formed by $\mathbf{i}_\ell$ for $\ell \in I_j$. In particular, for $k=1$, $I_1 = \braces{1, 2, 3}$, we obtain
\begin{align*}
  \vecnorm{A}{\braces{1,2,3}} = \parenth{\sum_{\mathbf{i} \in [d]^3} A_{\mathbf{i}}^2 }^{1/2}.
\end{align*}
As another example, for $k=3, I_1 = \braces{1}, I_2 = \braces{2}, I_3 = \braces{3}$, we obtain
\begin{align*}
  \vecnorm{A}{\braces{1}\braces{2}\braces{3}} = \sup\braces{\sum_{i,j,k = 1}^d A_{ijk} x_i y_j z_k \middle| \vecnorm{x}{2}, \vecnorm{y}{2}, \vecnorm{z}{2} \leq 1 }.
\end{align*}
For any partition $\mathcal{P}$ of $[3]$, we have $\vecnorm{A}{\braces{1}\braces{2}\braces{3}} \leq \vecnorm{A}{\mathcal{P}} \leq \vecnorm{A}{[3]}$.
Additionally, if the $I_1\cdots I_k$-norm of the third derivative of a function $f$ bounded for any $x \in \real^\dims$, denote $\vecnorm{\jerkf_\cdot}{I_1\ldots I_k} \defn \sup_{x \in \real^\dims} \vecnorm{\jerkf_x}{I_1\ldots I_k}$. The norms for higher order tensors can be defined similarly (see e.g.~\cite{latala2006estimates}). 

For $F: \real^\dims \times \real^\dims \to \real^\dims$ a bivariate vector-valued function, we denote its derivative with respect to its first variable as $\D_1 F$, for $F = \parenth{F_1, \ldots, F_\dims}$,
\begin{align*}
  \D_1 F (x, y) = \brackets{\frac{\partial F_i(x, y)}{\partial x_j}  }_{1\leq i, j\leq \dims}.
\end{align*}
Similarly, $\D_2 F$ is the derivative with respect to its second variable.

The big-O notation $O(\cdot)$ and big-Omega notation $\Omega(\cdot)$ are used to denote bounds ignoring constants. For example, we write $g_1(x)=O(g_2(x))$ if there exists a universal constant $c>0$ such that $g_1(x)\leq c g_2(x)$. Adding a tilde above these notations such as $\tilde{O}(\cdot)$ and $ \tilde{\Omega}(\cdot)$ denotes bounds ignoring logarithmic factors for all symbols. We write $a \lesssim b$ if there exists a universal constant $c>0$ such that $a \leq c b$. We use $\text{poly(x)}$ to denote a polynomial of $x$ and $\polylog(x)$ to denote a polynomial of $\log(x)$.

\subsection{Regularity properties of a target density}
Below we list the regularity properties that one could impose on a target density.\label{subsec:regularityprop} 
\begin{itemize}

  \item A twice differentiable function $f: \real^\dims \to \real$ is \textit{$\Lparam$-smooth} if $ \vecnorm{\hessf_x}{2} \leq \Lparam, \forall x \in \real^\dims$. A target density $\target \propto e^{-f}$ is called \textit{$\Lparam$-log-smooth} if $f$ is $\Lparam$-smooth. 
  \item A twice differentiable function $f: \real^\dims \to \real$ is \textit{$\mparam$-strongly convex} if $\hessf_x \succeq \mparam \Ind_\dims, \forall x \in \real^\dims$,
  \item A thrice differentiable function $f: \real^\dims \to \real$ is \textit{$\SSC \Lparam^{\frac32}$-strongly Hessian Lipschitz} if its third derivative has its $\braces{1,2}\braces{3}$-norm bounded by $\SSC \Lparam^{\frac32}$
  \begin{align*}
    \vecnorm{\jerkf_x}{\braces{1,2}\braces{3}} \leq \SSC \Lparam^{\frac32}, \forall x \in \real^\dims.
  \end{align*}
  The above assumption implies that $\vecnorm{  \jerkf_{x}[h, \cdot, \cdot] }{F} \leq \SSC \Lparam^{\frac32} \enorm{h}, \forall h \in \real^\dims$, and 
  \begin{align*}
    \vecnorm{\hessf_{x} - \hessf_{y} }{F} \leq \SSC \Lparam^{\frac32} \enorm{x - y}, \forall y \in \real^\dims. 
  \end{align*}
  The last equation is the main reason we named this assumption ``strongly Hessian Lipschitz'', as the use of Frobenius norm makes it a stronger assumption the usual Hessian Lipschitz assumption which uses the spectral norm. 
  \item A thrice differentiable function $f: \real^\dims \to \real$ is \textit{$\SSC$-strongly self-concordant}~\cite{laddha2020strong} if for any $x, h \in \real^d$, we have
  \begin{align*}
    \vecnorm{ \hessf_{x}^{-\frac12} \jerkf_{x}[h, \cdot, \cdot] \hessf_{x}^{-\frac12}}{F} \leq \SSC \vecnorm{h}{x},
  \end{align*}
  where $\vecnorm{h}{x}^2 \defn h\tp \hessf_x h$. It is not hard to see that $\SSC$-strongly self-concordance implies $\SSC \Lparam^{\frac32}$-strongly Hessian Lipschitz. 
  \item A target density $\target \sim e^{-f}$ satisfies the \textit{Cheeger's isoperimetric inequality} with coefficient $\isop_\target$ if for any partition $(S_1, S_2, S_3)$ of $\real^\dims$, the following inequality is satisfied
  \begin{align*}
    \target(S_3)\geq \isop_\target \cdot \dist(S_1,S_2)\cdot \target(S_1)\target(S_2),
  \end{align*}
  where the distance between sets $S_1$, $S_2$ is defined as $\dist(S_1,S_2)=\inf_{x\in S_1,y\in S_2}\enorm{x-y}$.
  \item A target density $\mu \propto e^{-f}$ is \textit{subexponential-tailed} if there exists $\lambda > 0$ such that
  \begin{align*}
    \lim_{\enorm{x} \to \infty} e^{-\lambda \enorm{x}} e^{-f(x)} \to 0.
  \end{align*}
  Under the $\Lparam$-log-smooth assumption, the above definition is equivalent to the sub-exponential random vector definition (see~\cite{zajkowski2019norms}).
  \begin{align*}
    \sup_{\vecnorm{v}{1} \leq 1 } \vecnorm{\angles{v, X}}{\psi_1} \text{ is finite for $X \sim e^{-f}$, }
  \end{align*}
  where $\vecnorm{X}{\psi_1} = \inf\braces{t > 0: \Exs \exp(\abss{X}/t) \leq 2}$. It can be seen as an extension of sub-exponential random variable in one dimension~\cite{vershynin2018high}. It is well known that if $\mu$ satisfies the Cheeger isoperimetric inequality, then it is also subexponential-tailed. More precisely, the Poincar\'e constant of $\mu$ follows from Cheeger's isoperimetric coefficient by Cheeger's inequality~\cite{cheeger2015lower,maz1960classes} (see also Theorem 1.1 in~\cite{milman2009role}), and the Poincar\'e constant of $\mu$ implies subexponential tail bounds following Gromov and Milman~\cite{gromov1983topological} (see also Theorem 1.7 of~\cite{gozlan2015dimension}). 
\end{itemize}

\begin{assumption}
  \label{ass:assumption_main}
  The target density $\target \propto e^{-f}$ on $\real^\dims$ satisfies that $f$ is $\Lparam$-smooth, $\SSC\Lparam^{\frac32}$-strongly Hessian Lipschitz. Additionally, it satisfies the Cheeger's isoperimetric inequality with coefficient $\isop_\target$.
\end{assumption}

\section{Main results}
\label{sec:main_results}

\begin{theorem}
  \label{thm:HMC_main}
  Let $\mu\propto e^{-f}$ be a target density on $\real^d$ that satisfies Assumption~\ref{ass:assumption_main}. For any error tolerance $\tole \in (0, 1)$, from any $M$-warm initial measure $\mu_0$, if the HMC parameter choices are such that 
  \begin{align*}
    K\step \Lparam^{\frac12} &\leq \frac{1}{ 4 (\SSC +1)^{\frac13} } \\ 
    K^{1+\frac1\ell} \parenth{ \step^3  \Lparam^{\frac32} \dl^{\frac12} +  \step^5 \Lparam^{\frac{5}{2}} \dl + \step^7 \Lparam^{\frac72} \dl^{\frac32}} &\leq \frac{1}{c \parenth{\SSC + 1} \ell^{\frac32} },
  \end{align*}
  where $\ell \geq 2\ceils{c' \log\parenth{\max\braces{1,\frac{1}{K\step\isop_\target}}\frac{M}{\tole}}}$, $\dl = \dims + 2(\ell-1)$ and $c, c'$ are universal constants,
  then the $\tole$-mixing time of lazy Metropolized HMC satisfies
  \begin{align*}
    \tmix^\taghmc(\tole, \mu_0, \target) = O\parenth{\frac{1}{K^2\step^2 \isop_\target^2} \log\parenth{\frac{M}{\tole}}}.
  \end{align*}
  Consequently, the gradient complexity is $O\parenth{\frac{1}{K\step^2 \isop_\target^2} \log\parenth{\frac{M}{\tole}}}$.
\end{theorem}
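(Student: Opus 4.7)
The plan is to invoke the classical $s$-conductance mixing bound with $s = \tole/(2M)$: from an $M$-warm start, one has $\tmix^\taghmc(\tole, \mu_0, \target) \lesssim \Phi_s(\transition_\taghmc)^{-2} \log(M/\tole)$, so matching the claimed rate $O(K^{-2}\step^{-2}\isop_\target^{-2}\log(M/\tole))$ reduces to proving $\Phi_s(\transition_\taghmc) \gtrsim K\step\,\isop_\target$. Two ingredients will give this conductance bound: (i) a uniform lower bound on the HMC acceptance probability over a high-probability good set $\Omega$ with $\mu(\Omega) \geq 1 - s/2$, and (ii) a one-step TV overlap bound showing $\tvdis(\transition_\taghmc(q, \cdot), \transition_\taghmc(q', \cdot)) \leq \tfrac{1}{2}$ whenever $q, q' \in \Omega$ satisfy $\enorm{q - q'} \leq cK\step$ for a small universal constant $c$. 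The good set is built as a sublevel set of $f$, using the subexponential tails implied by Cheeger isoperimetry together with $\Lparam$-smoothness to control $\enorm{q}$, $\enorm{\gradf(q)}$, and the relevant tensor norms of $\hessf_q, \jerkf_q$ on $\Omega$.

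For the acceptance rate I would expand the one-step Hamiltonian difference $\Ham(q_{k+1}, p_{k+1}) - \Ham(q_k, p_k)$ using the leapfrog recursion~\eqref{eq:leapfrog_HMC_recursion}. The continuous flow $\bF_\step$ preserves $\Ham$ exactly, so only the discretization error remains: a leading cubic-in-$p_k$ contraction involving $\jerkf_{q_k}$ of order $\step^3$, together with $O(\step^5)$ and $O(\step^7)$ correction terms carrying additional contractions against $\gradf$ and $\hessf$; the three terms in the step-size hypothesis match precisely these three expansion orders. To bound the expected $\ell$-th powers of these contractions uniformly in $k \leq K$, the crux is the approximate joint invariance noted in the abstract: under the continuous Hamiltonian flow the product measure $\mu \otimes \Normal(0, \Ind_\dims)$ on $(q, p)$ is exactly stationary (since $\Ham$ is conserved and the flow is volume-preserving), and I would show by induction on $k \in \braces{0, 1, \ldots, K}$, following Section~\ref{sec:acceptance}, that the leapfrog iterate $(q_k, p_k)$ stays close to its continuous counterpart in an $\ell$-th-moment sense, so in particular $\matsnorm{p_k}{\ell} \lesssim \sqrt{\dl}$. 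Summing the one-step errors across $K$ steps via~\eqref{eq:random_vector_norm_integral} with $\ell = 2\ceils{c' \log(M/\tole)}$, and applying Markov's inequality, then yields $\Accept_{K,\step}(q_0, p_0) \geq \tfrac{1}{2}$ with probability at least $1 - \tole/(8M)$ over $p_0 \sim \Normal(0, \Ind_\dims)$ uniformly for $q_0 \in \Omega$.

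For the one-step overlap bound I would invoke the Section~\ref{sec:proposal_overlap} KL estimate, which controls the Kullback--Leibler divergence between the HMC proposal distributions initialized at $q_0$ versus $q_0'$ by a quantity of order $\enorm{q_0 - q_0'}^2 / (K\step)^2$. Pinsker's inequality combined with the acceptance-rate lower bound then gives the desired one-step TV overlap of the transition kernels. The conductance argument closes in standard Lov\'asz--Simonovits fashion: for any $S$ with $s < \mu(S) < 1 - s$, the conductance-violating subsets of $S \cap \Omega$ and $S^c \cap \Omega$ must be separated by Euclidean distance at least $cK\step$ (else the overlap bound contradicts their being bad-conductance), so Cheeger's isoperimetric inequality with coefficient $\isop_\target$ forces the separating middle region to have measure at least $\gtrsim K\step\,\isop_\target\cdot\mu(S\cap\Omega)\,\mu(S^c\cap\Omega)$, yielding $\Phi_s \gtrsim K\step\,\isop_\target$ after absorbing the $s$-tail.

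The principal obstacle will be the acceptance-rate induction. One must simultaneously track $\ell$-th moments of the coupled quantities $q_k - q_0$, $p_k$, $\gradf(q_k)$, and multilinear contractions of $\hessf$ and $\jerkf$ along the trajectory through $K$ nested leapfrog updates, and crucially exploit the Frobenius-norm strength of the strong Hessian Lipschitz hypothesis (rather than the weaker spectral-norm form) in order to avoid losing a $\sqrt{\dims}$ factor in the tensor-contraction moment bounds. The three-term step-size hypothesis is exactly what is needed to close this induction at each expansion order, and it is the source of the improved $\dims^{1/4}$ dimension dependency compared to MALA-style analyses.
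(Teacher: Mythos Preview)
Your high-level architecture matches the paper exactly: $s$-conductance with $s=\tole/(2M)$, Lemma~\ref{lem:proposal_KL_div_bound} for the overlap, Lemma~\ref{lem:acceptance_lower_bound} for the acceptance, $\ell$-th moments with $\ell \asymp \log(M/\tole)$ plus Markov, and the Lov\'asz--Simonovits isoperimetric conductance argument. Two points of your description diverge from what the paper actually does, and one of them matters.

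\textbf{The good set is not a sublevel set of $f$.} You propose to build $\Omega$ as a sublevel set of $f$ and then control $\enorm{q}$, $\enorm{\gradf(q)}$, and tensor norms of $\hessf_q$, $\jerkf_q$ \emph{pointwise} on $\Omega$. This is the approach of the earlier literature (e.g.\ \cite{chen2020fast}) and is precisely what the paper avoids: pointwise control of $\enorm{\gradf(q)}$ on a high-probability set costs a factor $\sqrt{\dims}$ that would ruin the $\dims^{1/4}$ rate. The paper instead defines the good set $\Lambda$ directly in the joint $(q_0,p_0)$ space as the event where the \emph{Hamiltonian differences along the full leapfrog trajectory} are small (the events $W_k$ in Section~\ref{sub:proof_of_lem:acceptance_lower_bound}), and then $\Xi = \{q_0: \Prob_{p_0}((q_0,p_0)\in\Lambda)\geq 15/16\}$. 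The acceptance bound on $\Xi$ is then automatic from the definition of $\Lambda$, and nothing about $\enorm{q}$ or $\enorm{\gradf(q)}$ is ever controlled pointwise---only $\ell$-th moments under $\mu\times\Normal(0,\Ind_\dims)$ appear (Lemmas~\ref{lem:grad_norm_bound}--\ref{lem:qp_discrete_continuous_diff}).

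\textbf{The induction is not a comparison to continuous dynamics across $K$ steps.} You describe the approximate invariance as ``the leapfrog iterate $(q_k,p_k)$ stays close to its continuous counterpart in an $\ell$-th-moment sense.'' The paper's mechanism is different and cleaner: it never tracks $\matsnorm{q_k-\cq_{k\step}}{\ell}$ across $k$. Instead it uses that the leapfrog map $F_k$ is \emph{exactly} volume-preserving (Jacobian $1$), so conditioned on $W_{k-1}$ one has $e^{-\Ham(q_0,p_0)}\leq 2\,e^{-\Ham(q_k,p_k)}$ and a change of variables gives Eq.~\eqref{eq:h_conditional_prob}: $\Exs_\mu[h(q_k,p_k)^\ell \mathbf{1}_{W_{k-1}}] \leq 2\,\Exs_\mu[h(q,p)^\ell]$ for \emph{any} test function $h$. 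This is a purely discrete argument. The comparison to continuous dynamics is used only \emph{within a single leapfrog step} (Lemma~\ref{lem:single_leapfrog_acceptance_lower_bound}) to bound the one-step Hamiltonian increment; the $K$-step chaining is done by the change-of-variables trick, which is what makes the induction close without error accumulation.
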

To make our result easier to understand, we state two corollaries for the best choice of $K, \step$ and another one for the choice $K=1$, which corresponds to the case of MALA. 
\begin{corollary}[Best HMC]
  \label{cor:best_HMC}
  Let $\mu\propto e^{-f}$ be a target density on $\real^d$ that satisfies Assumption~\ref{ass:assumption_main}. For any error tolerance $\tole \in (0, 1)$, from any $M$-warm initial measure $\mu_0$, if the HMC parameter choices are such that
  \begin{align*}
    \step^2 &\defn  \frac{c}{\Lparam \parenth{\dims + 2\ell}^{\frac{1}{2}} \parenth{\SSC + 1}^{\frac23} \ell^{\frac32}} \\
    K &\defn \step^{-1} \cdot \frac{c'}{2 \Lparam^{\frac12} \parenth{\SSC + 1}^{\frac13} }
  \end{align*}
  where we define $\ell\defn c''\max\braces{\log(\dims),\log\parenth{\frac{M} {\tole}},\log\parenth{\frac{\Lparam(\SSC+1)}{\isop_\target^2}\frac{M}{\tole}}}$ and $c,c',c''$ are universal constants, then the $\tole$-mixing time of lazy Metropolized HMC satisfies
  \begin{align*}
    \tmix^\taghmc(\tole, \mu_0, \target) = O\parenth{\frac{\Lparam \parenth{\SSC + 1}^{\frac23} }{\isop_\target^2} \log\parenth{\frac{M}{\tole}}}.
  \end{align*}
  Consequently, the gradient complexity is $O\parenth{\frac{\Lparam \parenth{\SSC + 1}^{\frac23}}{\isop_\target^2} \parenth{\dims + 2\ell}^{\frac14}\ell^{\frac34}\log\parenth{\frac{M}{\tole}}}$.
\end{corollary}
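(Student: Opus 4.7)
The plan is to derive Corollary~\ref{cor:best_HMC} directly from Theorem~\ref{thm:HMC_main} by verifying that the prescribed choices of $K$ and $\step$ satisfy both step-size constraints of the theorem and then substituting into the mixing-time and gradient-complexity conclusions. Because $\SSC$ is taken to be a universal constant, any factor of the form $(\SSC+1)^{a}$ can be folded into the constants hidden by $O(\cdot)$, which keeps the bookkeeping clean. No new analytic tools are needed beyond Theorem~\ref{thm:HMC_main}; the derivation is essentially two algebraic verifications followed by a substitution.

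For the first constraint $K\step\Lparam^{1/2}\leq 1/(2\SSC^{1/3})$, the coupling $K=c'/(2\Lparam^{1/2}(\SSC+1)^{1/3}\step)$ gives $K\step\Lparam^{1/2}=c'/(2(\SSC+1)^{1/3})$. Since $(\SSC+1)^{1/3}\geq \SSC^{1/3}$, the inequality reduces to $c'\leq 1$, which I arrange by choosing the universal constant $c'$ appropriately.

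For the second constraint I factor the left-hand side as
\begin{equation*}
K\step^{3}\Lparam^{3/2}\dl^{1/2}\bigl(1+\step^{2}\Lparam\dl^{1/2}+\step^{4}\Lparam^{2}\dl\bigr).
\end{equation*}
Using $K\step=c'/(2\Lparam^{1/2}(\SSC+1)^{1/3})$, the leading factor becomes $(c'/2)(\SSC+1)^{-1/3}\step^{2}\Lparam\dl^{1/2}$. Plugging in the prescribed $\step^{2}$ together with $\dl=\dims+2(\ell-1)$ and $\ell\asymp \log(\warm/\tole)$, I will show $\step^{2}\Lparam\dl^{1/2}=O\bigl((\SSC+1)^{-2/3}\ell^{-3/2}\bigr)$, so the leading factor is of order $(\SSC+1)^{-1}\ell^{-3/2}$, while the parenthesized correction is bounded by a constant because $\step^{2}\Lparam\dl^{1/2}\leq 1$. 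Taking the universal constant $c$ in the definition of $\step^2$ small enough then yields the required inequality.

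Finally, having cleared both hypotheses, I substitute into Theorem~\ref{thm:HMC_main}. Since $(K\step)^{2}=\Theta\bigl(\Lparam^{-1}(\SSC+1)^{-2/3}\bigr)$, the mixing-time bound collapses to $O\bigl(\Lparam(\SSC+1)^{2/3}\isop_\target^{-2}\log(\warm/\tole)\bigr)$. For the gradient complexity I compute, from the prescribed $\step$, that $K=\Theta\bigl((\dims+\log(\warm/\tole))^{1/4}\log^{3/4}(\warm/\tole)\bigr)$; this count is independent of $\SSC$, and multiplying $K\cdot\tmix$ produces the advertised expression (any stray $(\SSC+1)$ exponent is absorbed under the constant-$\SSC$ convention). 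The only subtlety is keeping $\dl$ and $\dims+\log(\warm/\tole)$ aligned up to constants, which is immediate from $\ell=2\lceil c'\log(\warm/\tole)\rceil$. There is no genuine obstacle here — the whole proof is a clean specialization of Theorem~\ref{thm:HMC_main}.
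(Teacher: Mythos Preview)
Your proposal is correct and follows exactly the approach the paper implicitly takes: the corollary is stated without an explicit proof and is meant to be a direct specialization of Theorem~\ref{thm:HMC_main}, so verifying the two step-size constraints algebraically and then substituting into the mixing-time and gradient-complexity formulas is precisely what is intended. Your handling of the details---the factorization of the second constraint, the alignment of $\dl$ with $\dims+\log(\warm/\tole)$, and the observation that the $(\SSC+1)$ exponent in the gradient complexity is absorbed under the constant-$\SSC$ assumption---is all sound.
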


From a warm start, assume $\SSC$ is of order constant, then the best parameter choice of HMC ($\step \approx \Lparam^{-1} \dims^{-\frac14}, K \approx \dims^{\frac14}$) gives a mixing time of order $\dims^{\frac14}$, considering only the dimension dependency. In the log-concave sampling setting where the target is $\mparam$-strongly log-concave and $L$-log-smooth ~\cite{wu2021minimax}, the isoperimetric coefficient $\isop_\target = \log(2) \mparam^{\frac{1}{2}}$. Hence, the gradient complexity becomes $\tilde{O}(\dims^{\frac14} \kappa)$, whereas the rate in~\cite{wu2021minimax} is $\tilde{O}(\dims^{\frac12} \kappa)$ where $\kappa = \Lparam/\mparam$.

Remark that to obtain the $\dims^{\frac14}$ rate for the class of target densities in the corollary, the warm start assumption cannot be removed. Because without the warm start assumption, HMC gradient complexity is known to be lower bounded by $\dims^{\frac12}$ for a Gaussian target density~\cite{lee2021lower}. Additionally, if we don't assume $\SSC$ is of order constant and for example assume $\SSC$ is of order $\dims^{\frac12}$, then the $K$ and $\step$ choices in Corollary~\ref{cor:best_HMC} are no longer optimal. According to the bounds in ~\cite{wu2021minimax}, it becomes better to choose $K=1$ and get back to MALA. 

\begin{corollary}[MALA]
  \label{cor:MALA}
  Let $\mu\propto e^{-f}$ be a target density on $\real^d$ that satisfies Assumption~\ref{ass:assumption_main}. For any error tolerance $\tole \in (0, 1)$, from any $M$-warm initial measure $\mu_0$, if MALA step-size is chosen as
  \begin{align*}
    \step^2 &\defn \frac{c}{\Lparam(\SSC+1)^{\frac23} \parenth{\dims + 2\ell}^{\frac{3}{7}} \ell},
  \end{align*}
  where we choose $\ell\defn c'\max\braces{\log(\dims),\log\parenth{\frac{M}{\tole}},\log\parenth{\frac{\Lparam(\SSC+1)}{\isop_\target^2}\frac{M}{\tole}}}$ for universal constants $c,c'$, then the $\tole$-mixing time of MALA satisfies
  \begin{align*}
    \tmix^\tagmala(\tole, \mu_0, \target) = O\parenth{\frac{\Lparam(\SSC+1)^{\frac23} \parenth{\dims + 2\ell}^{\frac{3}{7}}\ell}{\isop_\target^2} \log\parenth{\frac{M}{\tole}}}.
  \end{align*}
\end{corollary}

Corollary~\ref{cor:MALA} shows that when $\SSC$ is of order constant, MALA has a mixing rate $\dims^{\frac{3}{7}}$ from a warm start which is better than the previous bound $\dims^{\frac12}$ in~\cite{wu2021minimax}. However, when $\SSC \geq \dims^{\frac14}$, our bound no longer shows an improvement.

We note that our results generalize to weaker smoothness assumptions. For example, \cite{bou-rabee_mixing_2023} assumed the injective norm of $\nabla^3 f$ is bounded by $L_H$, that is, 
\begin{equation}
 \matsnorm{\nabla^3 f}{\braces{1}\braces{2}\braces{3}}\leq L_H.  
\end{equation}
Under this assumption, we can fix $\SSC \defn \Lparam^{-\frac23}\dims^{\frac12} L_H$, then Corollary \ref{cor:best_HMC} implies a mixing time of $\Ot({\dims^{\frac{7}{12}}\text{polylog}(\frac{1}{\epsilon})})$. Compared to the mixing time of $O(d^{\frac34}\epsilon^{-\frac12}\log(\frac{\dims}{\epsilon}))$ for unadjusted HMC in \cite{bou-rabee_mixing_2023}, our bound is tighter under a warm start, and we do not require fourth-order smoothness assumption.

\paragraph{Main proof strategy.} We follow the framework of lower bounding the conductance of Markov chains to analyze mixing times~\cite{sinclair1989approximate,lovasz1993random}. The following lemma reduces the problem of mixing time analysis to lower bounding the $s$-conductance $\Phi_s$. 

\begin{lemma}[Lovasz and Simonovits~\cite{lovasz1993random}]
  \label{lem:lovasz_lemma}
  Consider a reversible lazy Markov chain with kernel $P$ and stationary measure $\target$. Let $\mu_{0}$ be an $M$-warm initial measure. Let $0 < s < \frac{1}{2}$. Then 
  \begin{align*}
    \tvdis\parenth{\transition_P^k (\mu_0), \target} \leq Ms + M \parenth{1 - \frac{\Phi_s^2}{2}}^k .
  \end{align*}
\end{lemma}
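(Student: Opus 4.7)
The plan is to adapt the classical Lovász--Simonovits ``$h$-function'' argument to the lazy, $s$-conductance setting. Writing $\pi_t \defn \transition_P^t(\mu_0)$ for the law after $t$ steps, I would define
\[
h_t(x) \defn \sup\braces{\pi_t(A) : A \in \borel(\real^\dims),\ \target(A) = x}, \quad x \in [0,1],
\]
and check by a standard measurable-set-splitting argument that $h_t$ is concave in $x$ for every $t$. The $\warm$-warm hypothesis $\mu_0 \leq \warm \target$ gives $h_0(x) \leq \warm x$, so $h_0(x) - x$ is bounded by a constant multiple of $\sqrt{x(1-x)}$ on $[s, 1-s]$, up to the ``floor'' $\warm s$. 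The aim is to show that this concave excess shrinks geometrically in $t$, with per-step contraction factor $(1 - \Phi_s^2/2)$.

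\textbf{One-step inequality.} For any measurable $A$ with $s \leq \target(A) \leq 1-s$, reversibility of $P$ with respect to $\target$ gives
\[
\int_A P(x, A^c) \,\target(dx) = \int_{A^c} P(x, A)\, \target(dx) \geq \Phi_s\parenth{\min\braces{\target(A), \target(A^c)} - s},
\]
by the definition~\eqref{eq:def_s_conductance} of the $s$-conductance. Decomposing $\pi_{t+1}(A) = \int_A \pi_t(dy) P(y, A) + \int_{A^c} \pi_t(dy) P(y, A)$ and using laziness to retain half of the mass in place, one derives a two-point bound of the form
\[
h_{t+1}(x) \leq \tfrac{1}{2}\, h_t\parenth{x - r_s(x)} + \tfrac{1}{2}\, h_t\parenth{x + r_s(x)}, \qquad r_s(x) \defn \tfrac{1}{2}\Phi_s\parenth{\min(x,1-x)-s}.
\]
Concavity of $h_t$ plus a direct verification then show that the reference profile $\bar h_t(x) \defn \warm s + \warm (1 - \Phi_s^2/2)^t \sqrt{x(1-x)}$ propagates through this recursion: if $h_t \leq \bar h_t$ pointwise on $[s, 1-s]$, then $h_{t+1} \leq \bar h_{t+1}$ as well, the per-step loss $\Phi_s^2/2$ coming from a second-order Taylor expansion of $\sqrt{\,\cdot\,}$ around a symmetric midpoint of half-width $r_s(x) \lesssim \Phi_s$.

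\textbf{Conclusion and main obstacle.} Iterating the induction from $h_0 \leq \bar h_0$ and using $\tvdis(\pi_n, \target) \leq \sup_x \parenth{h_n(x) - x}$ yields the stated bound $\tvdis(\transition_P^n(\mu_0), \target) \leq \warm s + \warm(1 - \Phi_s^2/2)^n$. The delicate step is picking the right concave reference $\bar h_t$ so that the two-point averaging inequality propagates \emph{exactly} a quadratic factor: if one is careless one loses only a linear-in-$\Phi_s$ factor. Laziness is essential here, since without the $\tfrac12$ weight on the off-diagonal flow the two-point inequality would collapse to a single-point bound from which the quadratic improvement is unavailable. A mild technical issue is that $h_t$ is only defined on $[0,1]$; I would handle this by extending $h_t$ linearly outside $[s, 1-s]$ so the two-point formula makes sense near the endpoints, and by noting that sets with $\target(A) < s$ or $\target(A) > 1-s$ contribute at most $\warm s$ to the TV distance by warmness alone.
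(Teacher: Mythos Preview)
The paper does not supply its own proof of this lemma; it is quoted verbatim as a known result from Lov\'asz and Simonovits~\cite{lovasz1993random} and then applied in the proof of Theorem~\ref{thm:HMC_main}. Your proposal reconstructs the classical $h$-function argument from that source (extended to $s$-conductance as in, e.g., Corollary~1.5 of~\cite{lovasz1993random}), so in that sense it matches the intended proof by reference.

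A couple of minor technical remarks on the sketch itself. First, the original argument uses the reference profile $\min(\sqrt{x},\sqrt{1-x})$ rather than $\sqrt{x(1-x)}$; the two are equivalent up to a factor of $\sqrt{2}$, but the former makes the two-point averaging inequality cleaner to verify exactly (no Taylor remainder to control) and yields the stated constant $\Phi_s^2/2$ without slack. Second, the derivation of the two-point bound from laziness requires splitting $A$ into the subset where $P(x,A)\geq \tfrac12$ and its complement, then using the conductance lower bound on the ergodic flow together with concavity of $h_t$; you allude to this as ``measurable-set-splitting'' but it is the place where laziness enters essentially, so it is worth writing out. With those adjustments the argument goes through as you outline.
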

In order to lower bound the conductance in the high probability region, we follow the proof of Theorem~3 in~\cite{wu2021minimax}, which is an extension of a result by Lov\'asz~\cite{lovasz1999hit} (see also Lemma 2 in~\cite{dwivedi2018log}). Informally, it states that as long as $\tvdis\parenth{\transition_x, \transition_y} \leq 1-\rho$ whenever $\enorm{x-y} \leq \Delta$, we have $\Phi_s \geq \frac{\rho\Delta}{\psi_\mu}$. Consequently, it remains to bound the transition overlap, namely the TV-distance between $\transition_x$ and $\transition_y$ for two close points $x, y$.

We bound the TV-distance via the following triangle inequality,
\begin{align*}
  \tvdis\parenth{\transition_x, \transition_y} \leq \tvdis\parenth{\transition_x, \proposal_x} + \tvdis\parenth{\proposal_x, \proposal_y} + \tvdis\parenth{\proposal_y, \transition_y},
\end{align*}
where $\proposal_x$ is the proposal probability measure of HMC at $x$. 
The remaining part of the proof is naturally divided into two parts: in Section~\ref{sec:proposal_overlap}, we bound the proposal overlap, which is the TV-distance between $\proposal_x$ and $\proposal_y$ for two close points $x, y$; in Section~\ref{sec:acceptance}, we bound the acceptance rate in Eq.~\eqref{eq:HMC_acceptance}.

The following lemma controls the proposal overlap. Note that the statement is similar to Lemma 16 in \cite{bou-rabee_mixing_2023}. The difference is that we assumed $\vecnorm{\nabla^3 f}{\braces{12}\braces{3}}\leq \gamma L^{\frac32}$ while \cite{bou-rabee_mixing_2023} assumed $\vecnorm{\nabla^3 f}{\braces{1}\braces{2}\braces{3}}\leq L_H$. Since $\vecnorm{\nabla^3 f}{\braces{12}\braces{3}}\leq \sqrt{\dims} \vecnorm{\nabla^3 f}{\braces{1}\braces{2}\braces{3}}$, inserting $\gamma\defn  L^{-\frac32}\sqrt{\dims}L_H$ in Lemma \ref{lem:proposal_KL_div_bound} recovers Lemma 16 in \cite{bou-rabee_mixing_2023}.  
\begin{lemma}[Proposal overlap]
  \label{lem:proposal_KL_div_bound}
  Let the target density $\mu \propto e^{-f}$ be $\Lparam$-log-smooth and $\SSC\Lparam^{\frac32}$-strongly Hessian Lipschitz.
  For $\enorm{q_0-\tilq_0} \leq \frac{1}{64} K\step$, we have
  \begin{align*}
    \kldiv{\proposal_{q_0}}{\proposal_{\tilq_0}} \leq \frac{1}{64} + K^6 \step^6 \SSC^2 \Lparam^3. 
  \end{align*}
  By Pinsker's inequality, we also have
  \begin{align*}
    \tvdist{\proposal_{q_0}}{\proposal_{\tilq_0}} \leq \parenth{\frac{1}{32} +  2K^6 \step^6 \SSC^2 \Lparam^3}^{\frac12}.
  \end{align*}
\end{lemma}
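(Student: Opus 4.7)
The plan is to bound $\kldiv{\proposal_{q_0}}{\proposal_{\tilq_0}}$ by constructing a coupling on the momentum space and then using the standard Gaussian KL change-of-variables formula. Let $\gamma = \Normal(0, \Ind_\dims)$ and let $\phi_{q_0}\colon p_0 \mapsto q_K$ denote the $q$-coordinate of the leapfrog map $\F_K(q_0, p_0)$ with $q_0$ held fixed. Then $\proposal_{q_0} = \phi_{q_0}\sharp \gamma$. Since each leapfrog step is symplectic and, under the step-size conditions implied by the hypotheses, the map $p_0 \mapsto q_K$ is invertible (this is where we need $K\step L^{1/2}$ small, so that the leapfrog $q$-recursion is a small perturbation of $q_0 + K\step p_0$), the map $T \defn \phi_{\tilq_0}^{-1}\circ \phi_{q_0}$ is well-defined. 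By the data processing inequality,
\begin{align*}
\kldiv{\proposal_{q_0}}{\proposal_{\tilq_0}} = \kldiv{\phi_{\tilq_0}\sharp (T\sharp \gamma)}{\phi_{\tilq_0}\sharp \gamma} \leq \kldiv{T\sharp \gamma}{\gamma}.
\end{align*}

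Next I would compute $\kldiv{T\sharp\gamma}{\gamma}$ explicitly. Writing $\tilp_0 = T(p_0)$, the change of variables identity together with the Gaussian density gives
\begin{align*}
\kldiv{T\sharp\gamma}{\gamma} = \Exs_{p_0 \sim \gamma}\brackets{\frac{\enorm{\tilp_0}^2 - \enorm{p_0}^2}{2} - \log\abss{\det J_T(p_0)}}.
\end{align*}
So the problem splits into two estimates: (i) a bound on $\enorm{\tilp_0 - p_0}$ in $L^2(\gamma)$, and (ii) a bound on $|\det J_T|$. For (i), the defining equation $\phi_{q_0}(p_0) = \phi_{\tilq_0}(\tilp_0)$, combined with the closed form $q_K = q_0 + K\step p_0 - \step^2\sum_{k=0}^{K-1}(K-k-\tfrac12)\gradf(q_k)$ obtained by unrolling the recursion in Eq.~\eqref{eq:leapfrog_HMC_recursion}, yields
\begin{align*}
K\step(\tilp_0 - p_0) = -(\tilq_0 - q_0) + \step^2 \sum_{k=0}^{K-1}(K-k-\tfrac12)\bigl(\gradf(\tilq_k) - \gradf(q_k)\bigr).
\end{align*}
Applying $L$-smoothness to the gradient differences and bounding $\enorm{\tilq_k - q_k}$ inductively in $k$ (or equivalently iterating the discrete Grönwall argument along leapfrog steps), the dominant contribution is $-\frac{1}{K\step}(\tilq_0 - q_0)$, whose squared norm is at most $\frac{1}{64^2}$ by hypothesis.

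For (ii), the Jacobian $J_T(p_0) = [\D_2 \phi_{\tilq_0}(\tilp_0)]^{-1} \D_2 \phi_{q_0}(p_0)$, and a short induction over the leapfrog steps, using the Hessian Lipschitz assumption $\SSC L^{3/2}$, gives $\D_2 \phi_{q_0}(p_0) = K\step\,\Ind + O(K^3\step^3 L)$ with a correction whose \emph{difference} between the two trajectories is controlled by $\SSC L^{3/2}\enorm{\tilq_0 - q_0}$ times the moment estimates of the leapfrog iterates established by the authors' induction framework. This yields $\log|\det J_T| = O(K^3\step^3\SSC L^{3/2})$-sized deviations around zero, whose contribution to the KL is of order $K^6\step^6 \SSC^2 L^3$. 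Collecting the linear term from (i), which gives the $1/64$ constant, and the Jacobian contribution from (ii), which gives the $\tfrac14 K^6\step^6\SSC^2 L^3$ term, yields the claimed KL bound, and Pinsker then delivers the TV statement.

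The main obstacle is the tight quantitative control on the Jacobian factor: one needs to keep track of how the derivatives $\D_2\F_k$ evolve along two neighboring leapfrog trajectories while exploiting Hessian Lipschitzness (rather than just smoothness) to save a factor of $\step$, and to confirm that the randomness in $p_0$ does not blow up these derivative moments. This is precisely where the paper's inductive ``joint approximate invariance'' framework, together with the Frobenius-norm formulation of strong Hessian Lipschitzness, is essential to prevent the dimension from entering through $\enorm{\gradf(\tilq_k) - \gradf(q_k)}$ in an unfavorable way.
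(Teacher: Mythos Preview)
Your framing via the pushforward map $T=\phi_{\tilq_0}^{-1}\circ\phi_{q_0}$ and the data-processing identity is fine and is in fact \emph{equivalent} to the paper's computation (since $\phi_{\tilq_0}$ is a diffeomorphism, data processing gives equality, not just an inequality, and your $T(p_0)$ is exactly the paper's $\Go(\tilq_0,\Fo(q_0,p_0))$). But the way you propose to bound the two pieces of
\[
\kldiv{T\sharp\gamma}{\gamma}=\Exs_{p_0\sim\gamma}\Bigl[\tfrac{\enorm{\tilp_0}^2-\enorm{p_0}^2}{2}-\log\abss{\det J_T(p_0)}\Bigr]
\]
has a genuine gap that would introduce an unwanted $\sqrt{\dims}$. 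Writing $\tfrac{\enorm{\tilp_0}^2-\enorm{p_0}^2}{2}=p_0^\top(\tilp_0-p_0)+\tfrac12\enorm{\tilp_0-p_0}^2$, the squared-difference part is indeed controlled by your pointwise bound $\enorm{\tilp_0-p_0}\lesssim \enorm{q_0-\tilq_0}/(K\step)$. But the cross term $p_0^\top(\tilp_0-p_0)$ is \emph{not}: a Cauchy--Schwarz argument would give $\Exs\enorm{p_0}\cdot O(1/64)\sim\sqrt{\dims}/64$. Likewise, $-\log\det J_T\approx -\trace(J_T-\Ind_\dims)$ to leading order, and $\abss{\trace(\Ind_\dims-J_T)}\le\sqrt{\dims}\,\vecnorm{\Ind_\dims-J_T}{F}$, so bounding the log-det separately also leaks dimension. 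Your claimed order $\log\abss{\det J_T}=O(K^3\step^3\SSC\Lparam^{3/2})$ does not follow from the Frobenius bound on $\Ind_\dims-J_T$ alone.

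The paper's resolution is precisely to \emph{combine} these two problematic pieces rather than bound them separately: Gaussian integration by parts gives $\Exs[p_0^\top(\tilp_0-p_0)]=\Exs[\trace(J_T-\Ind_\dims)]$, which cancels the leading trace in $-\log\det J_T$, and the trace--logdet inequality (Lemma~\ref{lem:trace_logdet}) leaves only $\trace\bigl((\Ind_\dims-J_T)^2\bigr)=\vecnorm{\Ind_\dims-J_T}{F}^2$. It is this squared Frobenius norm that is then bounded by $O(K^4\step^4\SSC^2\Lparam^3\enorm{q_0-\tilq_0}^2)$ via the decomposition $Q_0^{-1}(Q_1+Q_2)$ (Lemma~\ref{lem:D2D2_expression}) together with the \emph{Frobenius}-norm Hessian Lipschitz assumption---this is exactly where ``strongly'' Hessian Lipschitz buys dimension-independence. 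Without the IBP cancellation, strong Hessian Lipschitzness does not save you.

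Finally, your closing paragraph misidentifies the difficulty. The proof of Lemma~\ref{lem:proposal_KL_div_bound} does \emph{not} use the ``joint approximate invariance'' framework or any moment estimates of $(q_k,p_k)$ under $\mu$; the bound is deterministic in $q_0,\tilq_0$ and holds for every $p_0$. The inductive invariance argument is used only for the acceptance-rate control (Lemma~\ref{lem:acceptance_lower_bound}), not here.
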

The proof of Lemma~\ref{lem:proposal_KL_div_bound} is provided in Subsection~\ref{sub:proof_of_lem:proposal_KL_div_bound} after we have introduced all the intermediate results.

\begin{lemma}[Acceptance rate control]
  \label{lem:acceptance_lower_bound}
  Let the target density $\mu \propto e^{-f}$ be $\Lparam$-log-smooth, $\SSC\Lparam^{\frac32}$-strongly Hessian Lipschitz and subexponential-tailed. There exists a universal constant $c, c' > 0$ such that for any $\delta \in (0, 1)$, if we  fix any $\ell \geq 2\ceils{c' \log(1/\delta)}$, $\dl = \dims + 2(\ell-1)$ and $K, \step$ such that 
  \begin{align*}
    K^{1+\frac{1}{\ell}} \parenth{\SSC + 1} \parenth{ \step^3 \ell^{\frac32} \Lparam^{\frac32} \dl^{\frac12} +  \step^5 \ell^{\frac12} \Lparam^{\frac{5}{2}} \dl + \step^7 \Lparam^{\frac72} \dl^{\frac32}} \leq \frac{1}{c}.
  \end{align*}
  Then there exists a set $\Lambda \subset \real^{\dims} \times \real^{\dims}$ with $\Prob_{(q_0, p_0) \sim \mu \times \Normal(0, \Ind_\dims)}((q_0, p_0) \in \Lambda) \geq 1-\delta$, such that for $(q_0, p_0) \in \Lambda$ and $(q_{K }, p_{K }) = F_K(q_0, p_0)$ as defined in Eq~\eqref{eq:leapfrog_HMC_recursion}, we have
  \begin{align*}
    - f(q_{K }) - \frac{1}{2} \enorm{p_{K }}^2 + f(q_0) + \frac{1}{2} \enorm{p_0}^2 \geq  - \frac{1}{4}.
  \end{align*}
\end{lemma}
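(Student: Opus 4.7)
The plan is to control the $\ell$-th moment of the cumulative Hamiltonian discrepancy $\Delta\Ham \defn \Ham(q_K, p_K) - \Ham(q_0, p_0) = \sum_{k=0}^{K-1} \Delta\Ham_k$ under the invariant joint law $\target \otimes \Normal(0, \Ind_\dims)$, and then apply Markov's inequality with $\ell = 2\ceils{c' \log(1/\delta)}$ on the set $\Lambda \defn \braces{(q_0, p_0): \abss{\Delta\Ham} \leq \tfrac14}$. Because continuous Hamiltonian flow conserves $\Ham$ exactly, each $\Delta\Ham_k$ is a pure leapfrog discretization artifact, and the strong Hessian Lipschitz assumption converts the $\jerkf$ contractions that appear into polynomial bounds in $\SSC\Lparam^{3/2}$ times norms of $p_k$ and $\gradf(q_k)$.

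First I would Taylor-expand the one-step increment $\Delta\Ham_k$ using the leapfrog recursion~\eqref{eq:leapfrog_HMC_recursion}. Rewriting $\tfrac12(\enorm{p_{k+1}}^2 - \enorm{p_k}^2)$ via the momentum update and $f(q_{k+1}) - f(q_k)$ via the trapezoidal approximation to $\gradf$ along $[q_k, q_{k+1}]$, the $O(\step)$ and $O(\step^2)$ pieces cancel. The remainder organises into three blocks, classified by how many factors of $p_k$ they carry: an $O(\step^3)$ block of the shape $\tfrac{\step^3}{12}\jerkf_{q_k}[p_k, p_k, p_k]$ (cubic in $p_k$), an $O(\step^5)$ block linear in $p_k$ built from $\jerkf$ contracted with $\gradf(q_k)$, and an $O(\step^7)$ block with no $p_k$-factor (a product of $\gradf$'s bound via $\jerkf$-Taylor residues). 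These three blocks generate, respectively, the $\ell^{3/2}\dl^{1/2}$, $\ell^{1/2}\dl$, and $\dl^{3/2}$ scalings in the hypothesis.

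The main obstacle is bounding $\matsnorm{\Delta\Ham_k}{\ell}$ uniformly in $k$. At $k=0$ the moment bounds are clean: Gaussian hypercontractivity applied to the degree-$3$ polynomial $\jerkf_{q_0}[p_0, p_0, p_0]$ in $p_0 \sim \Normal(0, \Ind_\dims)$ produces the $\ell^{3/2}$ factor, while $\vecnorm{\jerkf_{q_0}}{F}^2 \leq \dims\, \SSC^2\Lparam^3$ (bounding the Frobenius norm by $\sqrt{\dims}$ times the $\braces{1,2}\braces{3}$-spectral norm) produces the $\sqrt{\dl}$ factor. For the $\gradf(q_0)$-moments in the other two blocks, subexponential tails plus $\Lparam$-smoothness give $\matsnorm{\gradf(q_0)}{\ell} \lesssim \sqrt{\Lparam\dl}$, using $\Exs_\target\enorm{\gradf}^2 = \Exs_\target \trace(\hessf) \leq \Lparam\dims$ together with polynomial-moment integration of the subexponential tails. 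The delicate step is transferring these estimates to $k \geq 1$, where $(q_k, p_k)$ is no longer exactly $\target \otimes \Normal$-distributed; I would close this via the approximate-invariance induction advertised in the abstract. Leapfrog is symplectic and hence volume preserving, so the pushforward density of $(q_k, p_k)$ equals the target joint density times $\exp\parenth{\sum_{j<k}\Delta\Ham_j}$ along the reverse orbit; a bootstrap using the inductively controlled moments of that partial sum keeps this Radon--Nikodym factor within a constant of $1$, so every relevant $\matsnorm{\cdot}{\ell}$ at step $k$ is inflated by at most a universal constant.

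Finally, Minkowski's inequality gives $\matsnorm{\Delta\Ham}{\ell} \leq \sum_{k=0}^{K-1} \matsnorm{\Delta\Ham_k}{\ell} \lesssim K(\SSC+1)\parenth{\step^3 \ell^{3/2}\Lparam^{3/2}\dl^{1/2} + \step^5 \ell^{1/2}\Lparam^{5/2}\dl + \step^7 \Lparam^{7/2}\dl^{3/2}}$, which the lemma's hypothesis forces to be at most $\tfrac{1}{4c}$ for an appropriate universal constant. Then Markov's inequality $\Prob(\Lambda^c) \leq \parenth{4\matsnorm{\Delta\Ham}{\ell}}^\ell$ with the choice $\ell = 2\ceils{c' \log(1/\delta)}$ delivers $\Prob(\Lambda^c) \leq \delta$, completing the argument.
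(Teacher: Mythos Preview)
Your overall architecture matches the paper's, but the bootstrap step you describe to transfer moment bounds from $k=0$ to $k\geq 1$ hides the only genuinely nontrivial point in the proof, and as written it does not close.

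Concretely, the pushforward density identity you invoke says
\[
\Exs_{(q_0,p_0)\sim\mu}\bigl[h(q_k,p_k)^\ell\bigr]
=\Exs_{(q,p)\sim\mu}\bigl[h(q,p)^\ell\,e^{\sum_{j<k}\Delta\Ham_j}\bigr],
\]
where the partial sum on the right is evaluated along the reverse orbit. To say that ``the Radon--Nikodym factor stays within a constant of $1$'' you need a \emph{pointwise} bound on $\sum_{j<k}\Delta\Ham_j$, not an $\ell$-th moment bound. Polynomial moments of $\sum_{j<k}\Delta\Ham_j$ (which is all your induction hypothesis delivers) cannot control the exponential $\Exs[e^{\sum_{j<k}\Delta\Ham_j}]$; you would need exponential moments or moment bounds of every order simultaneously. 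If instead you try to use Markov's inequality to get a high-probability set on which the partial sum is small, you have now conditioned on an event, and the unconditional bound $\matsnorm{\Delta\Ham_k}{\ell}\lesssim(\ldots)$ that you feed into Minkowski at the end no longer holds.

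The paper resolves this exactly by abandoning the attempt to bound $\matsnorm{\Delta\Ham}{\ell}$ outright. Instead it introduces the nested events
\[
W_k=\bigcap_{j=0}^{k}\bigl\{|\Ham(q_{j+1},p_{j+1})-\Ham(q_j,p_j)|\le\alpha\bigr\},\qquad\alpha=\tfrac{\log 2}{K},
\]
so that on $W_{k-1}$ the partial Hamiltonian change is deterministically at most $\log 2$, whence the Radon--Nikodym factor is at most $2$. This gives
\[
\Exs_{\mu}\bigl[h(q_k,p_k)^\ell\,\mathbf 1_{W_{k-1}}\bigr]\le 2\,\Exs_{(q,p)\sim\mu}\bigl[h(q,p)^\ell\bigr],
\]
and then one bounds $\Prob(W_k^c)$ inductively by $\Prob(W_{k-1}^c)+2\,\Prob_\mu(|\Delta\Ham_0|>\alpha)$, applying Markov and the single-step Lemma~\ref{lem:single_leapfrog_acceptance_lower_bound} only to the stationary-distributed increment. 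The high-probability set $\Lambda$ is then $W_{K-1}$, on which each step's increment (not just the total) is controlled. Your Minkowski-based summation is replaced by a union-bound on complement probabilities; this is what makes the induction close.
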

The proof of Lemma~\ref{lem:acceptance_lower_bound} is provided in Subsection~\ref{sub:proof_of_lem:acceptance_lower_bound} after we have introduced all the intermediate results about the acceptance rate in Section~\ref{sec:acceptance}.

\subsection{Proof of Theorem~\ref{thm:HMC_main}}
Given the novel proposal overlap bound in Lemma~\ref{lem:proposal_KL_div_bound} and the novel acceptance rate control in Lemma~\ref{lem:acceptance_lower_bound}, the proof to bound $s$-conductance and then mixing time is quite standard. For completeness, we adapt the proof of Theorem 3 in~\cite{wu2021minimax} (see also~\cite{dwivedi2018log,lee2020logsmooth}) to incorporate our new proposal overlap and acceptance rate bounds. 

Applying Lemma~\ref{lem:lovasz_lemma}, with the choice
  \begin{align*}
    s = \frac{\tole}{2M}, n \geq \frac{2}{\Phi_s^2} \log \frac{2M}{\tole},
  \end{align*}
  it is sufficient to lower bound the conductance $\Phi_s$. Let $\delta \in (0, 1)$ to be chosen later, and $\Lambda$ be the set in Lemma~\ref{lem:acceptance_lower_bound}. Define 
  \begin{align*}
    \Xi \defn \braces{q_0 \in \real^\dims: \Prob_{p_0 \sim \Normal(0, \Ind_\dims)} \parenth{(q_0, p_0)\in \Lambda} \geq \frac{15}{16} }. 
  \end{align*}
  Then $\Prob_{q_0 \sim \target}(q_0 \in \Xi) \geq 1 - 16 \delta$. Applying Lemma~\ref{lem:acceptance_lower_bound}, we obtain
  \begin{align*}
    \Accept(q_0, p_0) \geq e^{-\frac{1}{4}}, \text{ for } (q_0, p_0) \in \Lambda.
  \end{align*}
  Consequently, we have
  \begin{align*}
    \tvdis\parenth{\transition_{q_0}, \proposal_{q_0}} = 1 - \Exs_{p_0 \sim \Normal(0, \Ind_\dims)}[\Accept(q_0, p_0)] \leq 1 - \frac{15}{16} e^{-\frac14} \leq \frac{1}{3}, \forall q_0 \in \Xi. 
  \end{align*}
  Together with Lemma~\ref{lem:proposal_KL_div_bound}, under the condition $K \step \SSC^{\frac13}\Lparam^{\frac12} \leq \frac{1}{4}$, whenever $\enorm{x-y} \leq \frac{1}{64} K\step$, we have
  \begin{align}
    \label{eq:tv_dist_lazyxy}
    \tvdis\parenth{\transition_x^{\text{lazy}}, \transition_y^{\text{lazy}}} &\leq \frac{1}{2} + \frac{1}{2} \tvdis\parenth{\transition_x, \transition_y} \notag \\
    &\leq \frac{1}{2} + \frac{1}{2} \parenth{\frac{2}{3} + \frac{1}{4}} \notag \\
    &= 1 - \frac{1}{24},
  \end{align}
  where $\transition_x^{\text{lazy}}$ is the lazy version of $\transition_x$. To lower bound the $s$-conductance, consider any $S$ measurable set with $\target(S) \in (s, \frac12]$. Define the sets 
  \begin{align*}
    S_1 \defn \braces{x \in S \mid \transition_x^{\text{lazy}}(S^c) < \frac{1}{64}}, S_2 \defn \braces{x \in S^c \mid \transition_x^{\text{lazy}}(S) < \frac{1}{64}}, S_3 \defn (S_1 \cup S_2)^c. 
  \end{align*}
  Note that $S_1$ and $S_2$ are the sets that do no conduct well. Now there are two cases.
  \paragraph{If $\target(S_1) \leq \frac12 \target(S)$ or $\target(S_2) \leq \frac12 \target(S^c)$,}then the sets that conduct well are large enough:
  \begin{align*}
    \int_S \transition^{\text{lazy}}_x (S^c) \target(dx) \geq \frac{1}{128} \target(S).
  \end{align*}
  \paragraph{Otherwise,} for any $x \in S_1 \cap \Xi$ and $y \in S_2 \cap \Xi$, we have
  \begin{align*}
    \tvdis\parenth{\transition_x^{\text{lazy}}, \transition_y^{\text{lazy}}} \geq \abss{\transition_x^{\text{lazy}}(S) - \transition_y^{\text{lazy}}(S)} \geq \frac{63}{64} - \frac{1}{64} \geq \frac{23}{24}.
  \end{align*}
  Together with Eq.~\eqref{eq:tv_dist_lazyxy}, we obtain $\dist(S_1 \cap \Xi,S_2 \cap \Xi) \geq \frac{K\step}{64}$. The isoperimetric inequality on $\target$ leads to 
  \begin{align*}
    \target(S_3 \cup \Xi^c) \geq \frac{K\step}{64} \isop_\target \target(S_1 \cap \Xi) \target(S_2 \cap \Xi). 
  \end{align*}
  Since $\target(\Xi) \geq 1 - 16 \delta$, after cutting off the parts outside $\Xi$ under the condition 
  \begin{align}
    \label{eq:delta_condition}
    16\delta\leq \min\braces{\frac{1}{4}s, \frac{1}{64*64} K \step \isop_\target s},
  \end{align}
  we obtain that
  \begin{align*}
    \target(S_3) \geq c \step \isop_\target \target(S),
  \end{align*}
  for a universal constant $c > 0$. In this case, we have
  \begin{align*}
    \int_S \transition^{\text{lazy}}_x (S^c) \target(dx) \geq \frac{1}{128} \target(S_3) \geq \frac{c}{128*64} K \step \isop_\target \target(S).
  \end{align*}
  Combining the two cases, the $s$-conductance is lower bounded as follows
  \begin{align*}
    \Phi_s \geq \frac{1}{128} \min\braces{1, c' K\step \isop_\target}. 
  \end{align*}
  $\delta$ has to be chosen according to Eq.~\eqref{eq:delta_condition} and the step-size $\step$ has to be chosen according to $K^6 \step^6 \SSC^2 \Lparam^3 \leq \frac{1}{16}$ and the first equation in Lemma~\ref{lem:acceptance_lower_bound}.

\section{Proposal overlap}
\label{sec:proposal_overlap}
Given $K$ and $\step$, recall that $\proposal_x$ is the proposal probability measure of HMC at $x$. We are interested in the proposal overlap. That is, how close should $x$ and $y$ be so that the TV distance $\tvdist{\proposal_x}{\proposal_y}$ is smaller than 1.

For $j\in [K]$, let $\Fo_j : \real^{\dims} \times \real^{\dims} \to \real^\dims$ be the first output of $F_j$. That is, $\Fo_j(q_0, p_0) = q_{j }$ as defined in Eq.~\eqref{eq:leapfrog_HMC_recursion}. For a fixed $q_0 \in \real^\dims$, the distribution of $q_{j }$ given $q_0$ is the push-forward of the standard Gaussian distribution through the map $x \mapsto \Fo_j(q_0, x)$. If this map is invertible, we can introduce the inverse map $x \mapsto \Go_j(q_0, x)$. Let $\Fo \defn \Fo_K$ and $\Go \defn \Go_K$. Then we obtain an explicit expression of the density $\rho_{q_0}$ for the distribution of $q_{K }$, 
\begin{align}
  \label{eq:density_proposal}
  \rho_{q_0}(y) = \varphi(\Go(q_0, y)) \det\parenth{\D_2 \Go(q_0, y)},
\end{align}
where $\varphi$ is the density of the standard Gaussian and $\D_2 \Go$ is the derivative of $\Go$ with respect to its second variable. 

For two initial points $q_0, \tilq_0 \in \real^\dims$, the TV distance between their proposal distributions is
\begin{align*}
  \tvdist{\proposal_{q_0}}{\proposal_{\tilq_0}} = \frac{1}{2} \int_{x \in \real^{\dims}} \abss{\rho_{q_0}(x) - \rho_{\tilq_0}(x) } dx.
\end{align*}
We bound the TV distance via KL divergence, after applying Pinsker's inequality. Using the expression of the proposal density in Eq.~\eqref{eq:density_proposal}, the KL divergence is as follows
\begin{align}
  \label{eq:KL_div_first_step}
  &\quad \kldiv{\proposal_{q_0}}{\proposal_{\tilq_0}} \notag \\
  &= \int \rho_{q_0}(y) \log\parenth{\frac{\rho_{q_0}(y)}{\rho_{\tilde{q}_0}(y) } } dy \notag \\
  &= \int \rho_{q_0}(y) \brackets{\log\parenth{\frac{\varphi(\Go(q_0, y))}{\varphi(\Go(\tilq_0, y))} } + \log\det \D_2 \Go(q_0, y) - \log\det \D_2 \Go(\tilq_0, y) } dy \notag \\
  &=  \int \rho_{q_0}(y) \brackets{ \frac{1}{2} \parenth{- \enorm{\Go(q_0, y)}^2 + \enorm{\Go(\tilq_0, y) }^2} + \log \det \D_2 \Go(q_0, y) - \log\det \D_2 \Go(\tilq_0, y)} dy.
\end{align}

\subsection{Intermediate results for the proposal overlap}
Looking at Eq.~\eqref{eq:KL_div_first_step}, we need to bound the derivatives of $\Fo$ before we provide a bound on the KL divergence. We state these bounds about $\Fo$ in this section. Their proofs are deferred to Appendix~\ref{app:add_proofs_proposal}. 

Lemma~\ref{lem:J_F_p_bound_better} ensures that $\Fo(q, \cdot)$ is invertible for any $q$. It is essentially a reformulation of Lemma~7 in~\cite{chen2020fast} in this paper's notation.
\begin{lemma}
  \label{lem:J_F_p_bound_better}
  Suppose $K\step \Lparam^{\frac12} \leq  \frac{1}{4}$, then for $1\leq j \leq K$, $\D_1 \Fo_j(q, p)$ and $\D_2 \Fo_j(q, p)$ satisfies 
  \begin{align}
    \label{eq:D2_Foj}
    &\D_1 \Fo_j(q,p) = \Ind_\dims-\frac{j\step^2}{2} \hessf_q-\step^2\sum_{\ell=1}^{j-1}(j-\ell) \hessf_{\Fo_\ell(q,p)}\D_1\Fo_\ell(q,p), \\
    &\D_2 \Fo_j(q, p) = j\step \Ind_\dims - \step^2 \sum_{\ell=1}^{j-1} (j-\ell) \hessf_{\Fo_\ell(q, p)} \D_2 \Fo_\ell(q, p).
  \end{align}
  For any $q,p\in\real^\dims$, we have:
  \begin{align}
    &\matsnorm{\D_1\Fo_j(q,p)-\Ind_\dims}{2}\leq j^2 \step^2 \Lparam,\\
    &\matsnorm{\D_2 \Fo_j(q, p) - j\step \Ind_\dims}{2} \leq j^3 \step^3 \Lparam. 
  \end{align}
  In particular, for $j=K$, we have
  \begin{align*}
    \matsnorm{\D_1 \Fo(q,p)-\Ind_\dims}{2} \leq \frac1{16},\quad 
    \matsnorm{\D_2 \Fo(q, p) - K\step \Ind_\dims}{2} \leq \frac{1}{16} K\step.
  \end{align*}
  Additionally, the inverse map $\Go$ is well-defined, and we have
  \begin{equation*}
      \matsnorm{\D_1 \Go(q,p)}{2}\leq \frac{17}{15K\step},\quad \matsnorm{\D_2\Go(q,p)-\frac{1}{K\step}\Ind_{\dims}}{2}\leq \frac{1}{15K\step}.
  \end{equation*}
\end{lemma}

Lemma~\ref{lem:Fj_q_p_bound} controls the Lipschitz constant of $\Fo_j$ as a function of two variables.
\begin{lemma}
  \label{lem:Fj_q_p_bound}
  Suppose $K\step \Lparam^{\frac12} \leq  \frac{1}{2}$, then for $1\leq j \leq K$ we have
  \begin{align*}
    \enorm{\Fo_j(q, p) - \Fo_j(\tilq, \tilp)} \leq 2 \enorm{q - \tilq} + 2 j \step \enorm{p - \tilp}.
  \end{align*}
\end{lemma}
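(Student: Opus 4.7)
The plan is to establish the bound by simultaneously tracking position and momentum differences along the two leapfrog trajectories and analyzing the resulting two-dimensional linear recurrence. Writing $\Delta q_j \defn q_j - \tilq_j$ and $\Delta p_j \defn p_j - \tilp_j$, I would first subtract the two instances of the leapfrog recursion~\eqref{eq:leapfrog_HMC_recursion} and apply $\Lparam$-smoothness of $f$ (which gives $\enorm{\gradf(q) - \gradf(\tilq)} \leq \Lparam \enorm{q-\tilq}$) to derive the coupled one-step inequalities
\begin{align*}
    \enorm{\Delta q_{j+1}} &\leq \parenth{1 + \tfrac{\step^2 \Lparam}{2}} \enorm{\Delta q_j} + \step \enorm{\Delta p_j}, \\
    \enorm{\Delta p_{j+1}} &\leq \enorm{\Delta p_j} + \tfrac{\step \Lparam}{2} \parenth{\enorm{\Delta q_j} + \enorm{\Delta q_{j+1}}}.
\end{align*}
Substituting the first bound into the second yields a matrix inequality $(u_{j+1}, v_{j+1})^\top \leq M (u_j, v_j)^\top$ on $(u_j, v_j) \defn (\enorm{\Delta q_j}, \step \enorm{\Delta p_j})$ for an explicit $2\times 2$ matrix $M$.

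Next I would diagonalize $M$. A direct computation shows that its eigenvalues $\mu_{\pm} = 1 + \tfrac{\step^2 \Lparam}{2} \pm \step\sqrt{\Lparam(1 + \tfrac{\step^2\Lparam}{4})}$ satisfy the symplectic identity $\mu_+ \mu_- = 1$, which is the linearized form of leapfrog's volume preservation and is essential in preventing an exponential-in-$j$ blow-up. Writing $\theta \defn \log \mu_+$, the closed-form solution of the linear recurrence reads
\begin{align*}
    u_j \leq \cosh(j\theta)\, u_0 + \frac{\sinh(j\theta)}{\sinh\theta}\, v_0.
\end{align*}

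Finally I would use $K\step\sqrt{\Lparam} \leq 1/2$ to bound both hyperbolic factors by constants smaller than $2$. Since $\sinh\theta = \step\sqrt{\Lparam(1 + \step^2\Lparam/4)}$ and $\theta \leq \sinh\theta$, one has $K\theta \leq 1/2 + O(\step^2\Lparam)$, so $\cosh(j\theta) < 2$ for all $j \leq K$. The identity $\sinh(jx)/\sinh(x) = e^{(j-1)x}(1 - e^{-2jx})/(1 - e^{-2x})$ combined with subadditivity of the concave map $t \mapsto 1 - e^{-2t}$ gives $\sinh(jx)/\sinh(x) \leq j\,e^{(j-1)x}$, which together with $K\theta \leq 1/2 + o(1)$ yields $\sinh(j\theta)/\sinh\theta \leq 2j$ for all $j \leq K$. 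Substituting $v_0 = \step \enorm{p - \tilp}$ closes the argument. The main obstacle will be pinning the prefactors at exactly ``$2$'': the hypothesis $K\step\sqrt{\Lparam} \leq 1/2$ is calibrated precisely so that the two hyperbolic bounds stay below $2$ and $2j$ respectively, and the symplectic identity $\mu_+ \mu_- = 1$ is indispensable, since otherwise $\mu_+^j$ alone would compound into an extra exponential factor that rules out any uniform constant bound.
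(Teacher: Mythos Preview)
Your argument is correct and complete: the two-step coupled recurrence, the matrix $M$ with $\det M=\mu_+\mu_-=1$, the hyperbolic closed form $u_j\le\cosh(j\theta)\,u_0+\dfrac{\sinh(j\theta)}{\sinh\theta}\,v_0$, and the final numerical check all go through under $K\step\Lparam^{1/2}\le\tfrac12$ (indeed $K\theta\le K\sinh\theta=K\step\Lparam^{1/2}\sqrt{1+\step^2\Lparam/4}\le\tfrac12\sqrt{17/16}<0.52$, which keeps both $\cosh(K\theta)$ and $e^{(K-1)\theta}$ below~$2$).

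The paper, however, proceeds differently and more directly. It uses the unrolled identity
\[
\Fo_j(q,p)=q+j\step p-\tfrac{j\step^2}{2}\gradf(q)-\step^2\sum_{\ell=1}^{j-1}(j-\ell)\gradf(\Fo_\ell(q,p)),
\]
subtracts the two trajectories, applies $\Lparam$-smoothness, and inducts on $j$ using the target inequality itself as the hypothesis. The sums $\sum_{\ell<j}(j-\ell)$ and $\sum_{\ell<j}(j-\ell)\ell$ are bounded crudely by $j^2/2$ and $j^3/6$, and $K^2\step^2\Lparam\le\tfrac14$ closes the induction. No momentum variable is ever tracked, no matrix is diagonalized, and no symplectic identity is invoked.

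The trade-off: the paper's route is shorter and entirely elementary, since the unrolled formula for $\Fo_j$ already eliminates the momentum recursion. Your route is more structural---it exposes \emph{why} the constants do not blow up (volume preservation forces $\mu_+\mu_-=1$, turning a potential exponential growth $\mu_+^j$ into the milder hyperbolic factors) and yields an exact closed form that could be reused for sharper constants or longer trajectories. For the purpose of this lemma, though, the paper's one-variable induction gets the same bound with less machinery.
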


Lemma~\ref{lem:trace_logdet} is a common inequality that bounds the log-determinant.

\begin{lemma}\label{lem:trace_logdet}
Let $A \in \real^{\dims \times \dims}$ be a matrix such that $\vecnorm{A}{2}\leq\frac12$, then
\begin{align*}
- \log\det(\Ind_\dims - A) - \trace(A) \leq 2\trace\parenth{AA\tp}.
\end{align*} 
\end{lemma}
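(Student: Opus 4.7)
The plan is to reduce the matrix inequality to a scalar inequality on the eigenvalues of $A$ and then verify that scalar inequality by elementary calculus. Concretely, let $\lambda_1,\dots,\lambda_\dims$ denote the eigenvalues of $A$ counted with algebraic multiplicity; the hypothesis that every eigenvalue is less than $\frac{1}{2}$ forces them to be real, and from the Jordan form of $A$ we obtain the identities
\begin{align*}
  -\log\det(\Ind_\dims - A) = -\sum_{i=1}^\dims \log(1-\lambda_i), \qquad \trace(A) = \sum_{i=1}^\dims \lambda_i, \qquad \trace(A^2) = \sum_{i=1}^\dims \lambda_i^2.
\end{align*}
Consequently, it suffices to prove the scalar inequality $0 \leq -\log(1-\lambda) - \lambda \leq \lambda^2$ for every $\lambda < \frac{1}{2}$ and then sum over $i$.

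For the lower bound, I would set $g(\lambda) = -\log(1-\lambda) - \lambda$ on $(-\infty, 1)$ and note that $g(0) = 0$ while $g'(\lambda) = \frac{\lambda}{1-\lambda}$ has the same sign as $\lambda$; hence $\lambda = 0$ is the unique global minimum of $g$ on $(-\infty, 1)$ and $g(\lambda) \geq 0$. For the upper bound, I would define $h(\lambda) = \lambda^2 - g(\lambda) = \lambda^2 + \log(1-\lambda) + \lambda$ and compute
\begin{align*}
  h'(\lambda) = 2\lambda + 1 - \frac{1}{1-\lambda} = \frac{\lambda(1-2\lambda)}{1-\lambda}.
\end{align*}
On $(-\infty, \tfrac12)$ the factor $(1-2\lambda)/(1-\lambda)$ is strictly positive, so $h'(\lambda)$ again has the same sign as $\lambda$; this makes $\lambda = 0$ the unique minimum of $h$ with $h(0) = 0$, giving $\lambda^2 \geq g(\lambda)$ and completing the scalar inequality.

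The step I would flag as the main subtlety, rather than a true obstacle, is the upper bound: the threshold $\tfrac12$ is sharp. Indeed, at $\lambda = \tfrac12$ the derivative $h'$ vanishes, and as $\lambda \to 1^-$ one has $g(\lambda) \to \infty$ while $\lambda^2$ stays bounded, so the inequality $\lambda^2 \geq g(\lambda)$ fails beyond the threshold. This confirms that the hypothesis ``every eigenvalue less than $\tfrac12$'' is exactly what the argument needs. Once the scalar inequality is in hand, summing over $i$ gives $0 \leq \sum_i [-\log(1-\lambda_i) - \lambda_i] \leq \sum_i \lambda_i^2$, which is precisely the matrix statement.
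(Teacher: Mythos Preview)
Your proof is correct and follows the same route as the paper: reduce to the eigenvalues via the Jordan normal form, then verify the scalar inequality $0 \leq -\log(1-\lambda) - \lambda \leq \lambda^2$ by differentiating. Your sign analysis is in fact slightly more careful than the paper's, since you explicitly handle negative $\lambda$ by noting that $g'(\lambda)=\lambda/(1-\lambda)$ and $h'(\lambda)=\lambda(1-2\lambda)/(1-\lambda)$ both share the sign of $\lambda$ on $(-\infty,\tfrac12)$, whereas the paper only writes out the case $0<x<\tfrac12$. One phrasing quibble: the hypothesis does not \emph{force} the eigenvalues to be real; rather, the comparison ``$\lambda<\tfrac12$'' is only meaningful for real $\lambda$, so the lemma implicitly assumes real spectrum (the paper makes the same tacit assumption).
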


\subsection{Proof of Lemma~\ref{lem:proposal_KL_div_bound}}
\label{sub:proof_of_lem:proposal_KL_div_bound}
In this subsection, we prove Lemma~\ref{lem:proposal_KL_div_bound} which bounds the KL divergence between two proposal distributions, which boils down to bounding the distance between a reference Gaussian and perturbed Gaussian as observed in~\cite{bou-rabee_mixing_2023}. Since different smoothness assumptions were used in~\cite{bou-rabee_mixing_2023}, we adapt the proof under our smoothness assumptions. Before that, we need to simplify the expression $\Ind_\dims - \D_2 \Go(\tilq_0, \Fo(q_0, p)) \D_2 \Fo(q_0, p)$ for any $\tilq_0, q_0, p \in \real^\dims$.

\begin{lemma}
  \label{lem:D2D2_expression}
  For any $\tilq_0, q_0, p \in \real^\dims$, we have 
  \begin{align}
    \label{eq:claim_form_D2D2}
    \Ind_\dims - \D_2 \Go(\tilq_0, \Fo(q_0, p)) \D_2 \Fo(q_0, p) = Q_0^{-1} \parenth{Q_1 + Q_2},  
  \end{align}
  where with the introduction of $\tilp \defn \Go(\tilq_0, \Fo(q_0, p))$, the three matrices have the form
  \begin{align*}
    Q_0 &= \Ind_\dims - \frac{\step}{K} \sum_{j=1}^{K-1} (K-j) \hessf_{\Fo_j(\tilq_0, \tilp)} \D_2 \Fo_j(\tilq_0, \tilp), \\
    Q_1 &= \frac{\step}{K} \sum_{j=1}^{K-1} (K-j) \parenth{ \hessf_{\Fo_j(q_0, p)}  - \hessf_{\Fo_j(\tilq_0, \tilp)} }\D_2 \Fo_j(q_0, p), \\
    Q_2 &= \frac{\step}{K} \sum_{j=1}^{K-1} (K-j)  \hessf_{\Fo_j(\tilq_0, \tilp)} \parenth{ \D_2 \Fo_j(q_0, p) - \D_2 \Fo_j(\tilq_0, \tilp)}.
  \end{align*}
\end{lemma}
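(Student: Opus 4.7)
The plan is to derive the identity by a two-step manipulation: first use the inverse function theorem to turn $\D_2 \Go(\tilq_0, \Fo(q_0,p))$ into an inverse Jacobian, then expand both instances of $\D_2 \Fo_K$ using the explicit leapfrog recursion in Eq.~\eqref{eq:D2_Foj}. Concretely, since $\Fo(\tilq_0, \cdot)$ is invertible under $K\step \Lparam^{1/2} \leq 1/4$ (Lemma~\ref{lem:J_F_p_bound_better}) with inverse $\Go(\tilq_0, \cdot)$, differentiating the defining identity $\Fo(\tilq_0, \Go(\tilq_0, y)) = y$ in $y$ at $y = \Fo(q_0, p)$ yields $\D_2 \Fo(\tilq_0, \tilp)\,\D_2 \Go(\tilq_0, \Fo(q_0,p)) = \Ind_\dims$, so $\D_2 \Go(\tilq_0, \Fo(q_0,p)) = [\D_2 \Fo(\tilq_0, \tilp)]^{-1}$ and
$$\Ind_\dims - \D_2 \Go(\tilq_0, \Fo(q_0,p))\, \D_2 \Fo(q_0,p) = [\D_2 \Fo(\tilq_0, \tilp)]^{-1}\bigl(\D_2 \Fo(\tilq_0, \tilp) - \D_2 \Fo(q_0, p)\bigr).$$

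Next, I apply Eq.~\eqref{eq:D2_Foj} at $j = K$ to each of the two base points $(\tilq_0, \tilp)$ and $(q_0, p)$. Pulling out the factor of $K\step$, the first one reads $\D_2 \Fo(\tilq_0, \tilp) = K\step \cdot Q_0$ (reading the subscripts in $Q_0$ as $\Fo_j(\tilq_0, \tilp)$, which appears to be the intended version of the statement), so that $[\D_2 \Fo(\tilq_0, \tilp)]^{-1} = \frac{1}{K\step} Q_0^{-1}$. Subtracting the two instances of the recursion yields
$$\D_2 \Fo(\tilq_0, \tilp) - \D_2 \Fo(q_0, p) = \step^2 \sum_{j=1}^{K-1} (K-j)\bigl[ \hessf_{\Fo_j(q_0, p)} \D_2 \Fo_j(q_0, p) - \hessf_{\Fo_j(\tilq_0, \tilp)} \D_2 \Fo_j(\tilq_0, \tilp) \bigr],$$
where the $K\step \Ind_\dims$ pieces cancel.

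The final step is the elementary "add-and-subtract" identity $AB - CD = (A-C)B + C(B-D)$ applied summand-by-summand, with $A = \hessf_{\Fo_j(q_0, p)}$, $B = \D_2 \Fo_j(q_0, p)$, $C = \hessf_{\Fo_j(\tilq_0, \tilp)}$, $D = \D_2 \Fo_j(\tilq_0, \tilp)$. This separates the telescoping difference into one piece carrying the Hessian difference times $\D_2 \Fo_j(q_0, p)$, and one piece carrying $\hessf_{\Fo_j(\tilq_0, \tilp)}$ times the $\D_2 \Fo_j$ difference. Using the algebraic rewrite $\step^2 = K\step \cdot (\step/K)$ so that the prefactor $1/(K\step)$ from $[\D_2 \Fo(\tilq_0, \tilp)]^{-1}$ cancels one power of $K\step$, the two pieces regroup exactly into $Q_1$ and $Q_2$ as defined, giving $Q_0^{-1}(Q_1 + Q_2)$.

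There is no real obstacle: the entire statement is a bookkeeping rearrangement that packages a "difference of products" in the form best suited for subsequent operator-norm and Frobenius-norm estimates (where $Q_0^{-1}$ will be controlled by a Neumann-series argument, $Q_1$ by the strongly-Hessian-Lipschitz assumption together with Lemma~\ref{lem:Fj_q_p_bound}, and $Q_2$ by a Jacobian-difference bound derived from the same recursion). The only small care needed is tracking the $K\step$ prefactors consistently, ensuring invertibility of $\D_2 \Fo(\tilq_0, \tilp)$ via Lemma~\ref{lem:J_F_p_bound_better}, and noting that $\tilp$ is well-defined precisely because $\Fo(\tilq_0, \cdot)$ is a diffeomorphism under the standing step-size condition.
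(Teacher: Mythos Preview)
Your proposal is correct and is essentially the same computation as the paper's, just organized slightly differently: the paper subtracts the two implicit equations for $\Fo(q_0,p)$, differentiates the resulting expression for $p-\Go(\tilq_0,\Fo(q_0,p))$ in $p$, and then rearranges the terms containing $\D_2\Go\,\D_2\Fo$, whereas you invoke $\D_2\Go(\tilq_0,\cdot)=[\D_2\Fo(\tilq_0,\cdot)]^{-1}$ at the outset and difference the two instances of Eq.~\eqref{eq:D2_Foj} directly. You are also right that the subscript in $Q_0$ should read $\Fo_j(\tilq_0,\tilp)$; the paper's own derivation produces $(\tilq_0,\tilp)$ there, and the $(q_0,\tilp)$ in the statement is a typo (harmless for the subsequent norm bound on $Q_0$, which only uses $\vecnorm{\hessf}{2}\leq L$ and Lemma~\ref{lem:J_F_p_bound_better}).
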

The proof of this Lemma is deferred to the end of this subsection. We now prove Lemma~\ref{lem:proposal_KL_div_bound}.  

\begin{proof}[Proof of Lemma~\ref{lem:proposal_KL_div_bound}]
After the change of variable $p = \Go(q_0, y)$ in Eq.~\eqref{eq:KL_div_first_step}, we obtain
\begin{align}
  \label{eq:KL_div_second_step}
  &\quad \kldiv{\proposal_{q_0}}{\proposal_{\tilq_0}} \notag \\
  &= \int \varphi(p) \brackets{\frac12 \parenth{- \enorm{p}^2 + \enorm{\Go(\tilq_0, \Fo(q_0, p))}^2 } + \log\det \D_2 \Go(q_0, \Fo(q_0, p)) - \log\det \D_2 \Go (\tilq_0, \Fo(q_0, p))  } dp
\end{align}
Note that
\begin{align*}
  \frac12 \parenth{- \enorm{p}^2 + \enorm{\Go(\tilq_0, \Fo(q_0, p))}^2 } = - p \tp \parenth{p - \Go(\tilq_0, \Fo(q_0, p))} + \frac{1}{2} \enorm{p - \Go(\tilq_0, \Fo(q_0, p))}^2
\end{align*}
The second term is easier to bound, for $p = \Go(q_0, y)$, we have 
\begin{align}
  \label{eq:G_tilG_bound}
  \enorm{\Go(q_0, y) - \Go(\tilq_0, y)} \leq \sup_{q \in \real^\dims} \vecnorm{{\D_1} \Go(q, y)}{2} \enorm{q_0 - \tilq_0} \overset{Lem.~\ref{lem:J_F_p_bound_better}}{\leq} \frac{17}{15 K\step}\enorm{q_0 - \tilq_0}.
\end{align}
The first term requires Gaussian integration by parts. Isolating the second term, from Eq.~\eqref{eq:KL_div_second_step} we have
\begin{align}
  \label{eq:KL_div_third_step}
  &\quad \kldiv{\proposal_{q_0}}{\proposal_{\tilq_0}} - \frac{1}{2} \int \varphi(p)  \enorm{p - \Go(\tilq_0, \Fo(q_0, p))}^2 dp \notag \\
  &= \int \varphi(p) \brackets{ - p \tp (p - \Go(\tilq_0, \Fo(q_0, p))) + \log\det \D_2 \Go(q_0, \Fo(q_0, p)) - \log\det \D_2 \Go (\tilq_0, \Fo(q_0, p))  } dp \notag \\
  &\overset{(i)}{=} \int \varphi(p) \brackets{ - p \tp (p - \Go(\tilq_0, \Fo(q_0, p))) - \log\det \parenth{\D_2 \Go (\tilq_0, \Fo(q_0, p)) \D_2 \Fo(q_0, p)}  } dp \notag \\
  &\overset{(ii)}{=}  \int \varphi(p) \braces{ - \trace\brackets{\Ind_\dims - \D_2 \Go(\tilq_0, \Fo(q_0, p)) \D_2 \Fo(q_0, p)} - \log\det \parenth{\D_2 \Go (\tilq_0, \Fo(q_0, p)) \D_2 \Fo(q_0, p)}  } dp \notag \\
  &\overset{(iii)}{\leq } \int 2\varphi(p) \vecnorm{{\Ind_\dims - \D_2 \Go(\tilq_0, \Fo(q_0, p)) \D_2 \Fo(q_0, p)} }{F}^2  dp
\end{align}
where (i) follows from the identity~\eqref{eq:JG_vs_JF}, (ii) follows from Gaussian integration by parts and the boundary term is $0$. (iii) follows from Lemma~\ref{lem:trace_logdet}.

Using the notation in Lemma~\ref{lem:D2D2_expression}, we have
\begin{align*}
  \Ind_\dims - \D_2 \Go(\tilq_0, \Fo(q_0, p)) \D_2 \Fo(q_0, p) = Q_0^{-1} \parenth{Q_1 + Q_2}.
\end{align*}
According to Lemma~\ref{lem:J_F_p_bound_better}, we have
\begin{align*}
  \vecnorm{\Ind_\dims - Q_0}{2} &= \vecnorm{\frac{\step}{K} \sum_{j=1}^{K-1} (K-j) \hessf_{\Fo_j(q_0, \tilp)} \D_2 \Fo_j(q_0, \tilp)}{2} \\
  &\leq \vecnorm{\frac{\step}{K} \sum_{j=1}^{K-1} (K-j) \Lparam \frac{9}{8} K\step \Ind_\dims }{2} \\
  &\leq \frac{9}{16} K^2 \step^2 \Lparam. 
\end{align*}
Consequently, for $K^2 \step^2 \leq \frac{1}{4}$, we obtain $\vecnorm{Q_0^{-1}}{2} \leq 2$. 
Next we bound $Q_1$.
\begin{align*}
  \vecnorm{Q_1}{F} &\leq \frac{\step}{K} \sum_{j=1}^{K-1} (K-j) \vecnorm{ \hessf_{\Fo_j(q_0, p)}  - \hessf_{\Fo_j(\tilq_0, \tilp)} }{F} \vecnorm{\D_2 \Fo_j(q_0, p)}{2} \\
  &\overset{(i)}{\leq} \frac{\step}{K} \sum_{j=1}^{K-1} (K-j) \frac{9}{8} j \step \vecnorm{ \hessf_{\Fo_j(q_0, p)}  - \hessf_{\Fo_j(\tilq_0, \tilp)} }{F} \\
  &\overset{(ii)}{\leq} \frac{\step}{K} \sum_{j=1}^{K-1} (K-j) \frac{9}{8} j \step \SSC \Lparam^{\frac32} \enorm{\Fo_j(q_0, p) - \Fo_j(\tilq_0, \tilp)} \\
  &\overset{(iii)}{\leq} \frac{\step}{K} \sum_{j=1}^{K-1} (K-j) \frac{9}{8} j \step \SSC \Lparam^{\frac32} \parenth{2\enorm{q_0 - \tilq_0} + 2j\step \enorm{p - \tilp}} \\
  &\overset{(iv)}{\leq} 10 K^2 \step^2 \SSC \Lparam^{\frac32} \enorm{q_0 - \tilq_0}. 
\end{align*} 
(i) follows from Lemma~\ref{lem:J_F_p_bound_better}. (ii) follows from the $\SSC \Lparam^{\frac32}$-strongly Hessian Lipschitz condition. (iii) follows from Lemma~\ref{lem:Fj_q_p_bound}. (iv) follows from the fact that $\tilp = \Go(\tilq_0, \Fo(q_0, p))$ satisfies the implicit equation
\begin{align*}
  \Fo(q_0, p) &= \tilq_0 + K\step \tilp - \frac{K\step^2}{2} \gradf(\tilq_0) - \step^2 \sum_{j=1}^{K-1} (K-j) \gradf(\Fo_j(\tilq_0, \tilp)) \\
  \Fo(q_0, p) &= q_0 + K\step p - \frac{K\step^2}{2} \gradf(q_0) - \step^2 \sum_{j=1}^{K-1} (K-j) \gradf(\Fo_j(q_0, p)).
\end{align*}
Subtracting the above two equations leads to 
\begin{align*}
  \enorm{p - \tilp} \leq \frac{1}{K \step} \parenth{\enorm{q_0  - \tilq_0} + \frac{K\step^2}{2} \Lparam \enorm{q_0 - \tilq_0} + \step^2 \sum_{j=1}^{K-1} (K-j) \parenth{2\enorm{q_0 - \tilq_0} + 2 j \step \enorm{p - \tilp}} },
\end{align*}
and $\enorm{p - \tilp} \leq \frac{4 }{K \step} \enorm{q_0  - \tilq_0}$. 

Similar to what we did for $Q_1$, we bound $Q_2$ as follows
\begin{align*}
  \vecnorm{Q_2}{F} &\leq \frac{\step}{K} \sum_{j=1}^{K-1} (K-j) \vecnorm{\hessf_{\Fo_j(\tilq_0, \tilp)}}{2} \vecnorm{\D_2 \Fo_j(q_0, p) - \D_2 \Fo_j(\tilq_0, \tilp)}{F} \\
  &\leq \frac{\step}{K} \sum_{j=1}^{K-1} (K-j) \Lparam \vecnorm{\D_2 \Fo_j(q_0, p) - \D_2 \Fo_j(\tilq_0, \tilp)}{F} \\
  &\overset{(i)}{\leq} \frac{\step}{K} \sum_{j=1}^{K-1} (K-j) \Lparam 10 j^3 \step^3 \SSC\Lparam^{\frac32} \enorm{q_0 - \tilq_0} \\
  &\leq 10 K^2 \step^2 \SSC \Lparam^{\frac32} \enorm{q_0 - \tilq_0}.
\end{align*}
To see why (i) hold, we have 
\begin{align*}
  &\quad \vecnorm{\D_2 \Fo_j(q_0, p) - \D_2 \Fo_j(\tilq_0, \tilp)}{F} \\
  &\overset{\eqref{eq:D2_Foj}}{=} \vecnorm{ \step^2 \sum_{\ell=1}^{j-1} (j-\ell) \brackets{ \parenth{\hessf_{\Fo_\ell(q_0, p)} - \hessf_{\Fo_\ell(\tilq_0, \tilp)} } \D_2 \Fo_\ell(q, p) + \hessf_{\Fo_\ell(\tilq_0, \tilp)} \parenth{\D_2\Fo_\ell(q_0, p) - \D_2\Fo_\ell(\tilq_0, \tilp)}  } }{F} \\
  &\leq \step^2 \sum_{\ell=1}^{j-1}(j-\ell) \frac{9}{8} \ell \step 10 \SSC\Lparam^{\frac32} \enorm{q_0 - \tilq_0} + \step^2 \Lparam \sum_{\ell=1}^{j-1} (j-\ell)  \vecnorm{\D_2 \Fo_\ell(q_0, p) - \D_2 \Fo_\ell(\tilq_0, \tilp)}{F},
\end{align*}
and by induction, we obtain
\begin{align*}
  \vecnorm{\D_2 \Fo_j(q_0, p) - \D_2 \Fo_j(\tilq_0, \tilp)}{F} \leq 10 j^3 \step^3 \SSC\Lparam^{\frac32} \enorm{q_0 - \tilq_0}. 
\end{align*}

Combining the bounds on $Q_0$, $Q_1$ and $Q_2$ into Eq.~\eqref{eq:claim_form_D2D2}, we obtain
\begin{align*}
  \vecnorm{\Ind_\dims - \D_2 \Go(\tilq_0, \Fo(q_0, p)) \D_2 \Fo(q_0, p)}{F} \leq 40 K^2 \step^2 \SSC \Lparam^{\frac32} \enorm{q_0 - \tilq_0}.
\end{align*}
Together with Eq.~\eqref{eq:G_tilG_bound}, we conclude that
\begin{align*}
  \kldiv{\proposal_{q_0}}{\proposal_{\tilq_0}} &\leq \frac12\cdot\frac{17^2}{ 15^2} \frac{\enorm{q_0 - \tilq_0}^2}{K^2\step^2} + 40^2 K^4 \step^4 \SSC^2 \Lparam^3 \enorm{q_0 - \tilq_0}^2  \\
  &\leq \frac{1}{64} +  K^6 \step^6 \SSC^2 \Lparam^3,
\end{align*}
for $\enorm{q_0 - \tilq_0} \leq \frac1{64} K\step$.
\end{proof}

\begin{proof}[Proof of Lemma~\ref{lem:D2D2_expression}]
  $\Go(\tilq_0, \Fo(q_0, p))$ satisfies the implicit equation
  \begin{align*}
    \Fo(q_0, p) &= \tilq_0 + K\step \Go(\tilq_0, \Fo(q_0, p)) - \frac{K\step^2}{2} \gradf(\tilq_0) - \step^2 \sum_{j=1}^{K-1} (K-j) \gradf(\Fo_j(\tilq_0, \Go(\tilq_0, \Fo(q_0, p)))) \\
    \Fo(q_0, p) &= q_0 + K\step p - \frac{K\step^2}{2} \gradf(q_0) - \step^2 \sum_{j=1}^{K-1} (K-j) \gradf(\Fo_j(q_0, p)).
  \end{align*}
  Taking the difference, we obtain
  \begin{align*}
    &\quad p - \Go(\tilq_0, \Fo(q_0, p)) \\
    &= \frac{1}{K\step} \brackets{\frac{K\step^2}{2} \parenth{\gradf(q_0) - \gradf(\tilq_0)} + \step^2\sum_{j=1}^{K-1}(K-j) \parenth{\gradf(\Fo_j(q_0, p)) - \gradf(\Fo_j(\tilq_0, \Go(\tilq_0, \Fo(q_0, p)) ))}  }.
  \end{align*}
  Take derivative with respect to $p$, we obtain
  \begin{align*}
    &\quad \Ind_\dims - \D_2\Go(\tilq_0, \Fo(q_0, p)) \D_2 \Fo(q_0, p) \\
    &= \frac{\step}{K} \brackets{ \sum_{j=1}^{K-1}(K-j) \parenth{\hessf_{\Fo_j(q_0, p)} \D_2 \Fo_j(q_0, p)  - \hessf_{\Fo_j(\tilq_0, \tilp) } \D_2 \Fo_j(\tilq_0, \tilp) \D_2 \Go(\tilq_0, \Fo(q_0, p)) \D_2 \Fo(q_0, p)  }  },
  \end{align*}
  where $\tilp = \Go(\tilq_0, \Fo(q_0, p))$. Rearranging the terms that depend on $ \D_2 \Go(\tilq_0, \Fo(q_0, p)) \D_2 \Fo(q_0, p)$, we obtain the claimed result
  \begin{align*}
    &\quad \parenth{\Ind_\dims - \frac{\step}{K} \sum_{j=1}^{K-1} (K-j) \hessf_{\Fo_j(q_0, \tilp)} \D_2 \Fo_j(q_0, \tilp)  } \parenth{\Ind_\dims - \D_2 \Go(\tilq_0, \Fo(q_0, p)) \D_2 \Fo(q_0, p)} \notag \\
    &= \frac{\step}{K} \sum_{j=1}^{K-1} (K-j) \brackets{ \hessf_{\Fo_j(q_0, p)} \D_2 \Fo_j(q_0, p) - \hessf_{\Fo_j(\tilq_0, \tilp)} \D_2 \Fo_j(\tilq_0, \tilp)   } \notag  \\
    &= \frac{\step}{K} \sum_{j=1}^{K-1} (K-j) \brackets{ \parenth{ \hessf_{\Fo_j(q_0, p)}  - \hessf_{\Fo_j(\tilq_0, \tilp)} }\D_2 \Fo_j(q_0, p) + \hessf_{\Fo_j(\tilq_0, \tilp)} \parenth{ \D_2 \Fo_j(q_0, p) - \D_2 \Fo_j(\tilq_0, \tilp)}   }.
  \end{align*}
\end{proof}

\section{Acceptance rate}
\label{sec:acceptance}
In this section, we lower bound the HMC acceptance rate $\Accept_{K, \step}(q_0, p_0)$ in Eq.~\eqref{eq:HMC_acceptance} and derive the conditions needed on the choices of $K$ and $\step$.

\subsection{Notations and proof strategy}
We use $\target \times \Normal(0, \Ind_\dims)$ to denote the product measure of the target measure $\target$ and an independent standard Gaussian.
We introduce a single leapfrog step of length $\step$ of the HMC dynamics starting from $q_0, p_0$
\begin{align}
  \label{eq:def_q_0_step}
  q_{0, \step} &\defn q_0 + \step p_0 - \frac{\step^2}{2} \gradf(q_0) \notag \\
  p_{0, \step} &\defn p_0 - \frac{\step}{2} \gradf(q_0) - \frac{\step}{2} \gradf(q_{0, \step}).
\end{align}
Note that $(q_{0,0}, p_{0, 0}) = (q_0, p_0)$ and $(q_{0, \step}, p_{0, \step}) = (q_1, p_1)$. To simplify the notation, when the context is clear, we allow ourselves to overload the notation and to use $(q_\step, p_\step)$ as a shorthand for $(q_{0,\step}, p_{0, \step})$. 
\paragraph{Proof strategy.} First, observe that we have
\begin{align*}
  -f(q_K) - \frac{1}{2} \vecnorm{p_K}{2}^2 + f(q_0) + \frac{1}{2} \vecnorm{p_0}{2}^2 = \sum_{k=0}^{K-1} \brackets{-f(q_{k+1}) - \frac{1}{2}\vecnorm{p_{k+1}}{2}^2 + f(q_k) + \frac{1}{2} \vecnorm{p_k}{2}^2}.
\end{align*}
Hence, it is sufficient to bound each following term one by one inductively,
\begin{align*}
  f(q_{k+1}) - f(q_k) + \frac{1}{2}\vecnorm{p_{k+1}}{2}^2 - \frac{1}{2} \vecnorm{p_k}{2}^2.
\end{align*}
\begin{itemize}
  \item We first bound each of them by first assuming $(q_k, p_k)$ has approximately the same law as $\target \times \Normal(0, \Ind_\dims)$.
  \item We connect all the bounds at different $k$'s by showing inductively that as long as $(q_{k-1}, p_{k-1})$ has approximately the same law as $\target \times \Normal(0, \Ind_\dims)$, $(q_k, p_k)$ will have approximately the same law as well.
\end{itemize}

\begin{lemma}
  \label{lem:single_leapfrog_acceptance_lower_bound}
  Under Assumption~\ref{ass:assumption_main} on the target density, let integer $\ell \geq 1$ and let $\dl = \dims + 2(\ell-1)$, there exists a universal constant $c > 0$ such that a single leapfrog step with step-size $\step$ satisfies 
  \begin{align*}
    &\quad \brackets{\Exs_{(q_0, p_0) \sim \mu \times \Normal(0, \Ind_\dims)} \parenth{- f(q_\step) - \frac{1}{2} \enorm{p_\step}^2 + f(q_0) + \frac{1}{2} \enorm{p_0}^2}^{\ell}}^{\frac{1}{\ell}} \\
    &\leq  c (\SSC+1) \parenth{  \step^3 \ell^{\frac32} \Lparam^{\frac32} \dl^{\frac12} + \step^5 \ell^{\frac12} \Lparam^{\frac{5}{2}} \dl + \step^7 \Lparam^{\frac72} \dl^{\frac32}}.
  \end{align*}
\end{lemma}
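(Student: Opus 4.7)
The plan is to Taylor expand the one-step energy change $-f(q_\step)-\tfrac12\enorm{p_\step}^2+f(q_0)+\tfrac12\enorm{p_0}^2$ around $q_0$, use the strongly Hessian Lipschitz property to control remainders, exploit the fact that all contributions through order $\step^2$ cancel (reflecting that leapfrog is a second-order accurate symplectic integrator of $\Ham(q,p)=f(q)+\tfrac12\enorm{p}^2$), and then bound the $L^\ell$-norm of each surviving polynomial in $(p_0,\gradf(q_0))$ by combining Gaussian chaos inequalities with subexponential moment estimates under $\target$.

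Concretely, setting $\Delta q = q_\step - q_0 = \step p_0 - \tfrac{\step^2}{2}\gradf(q_0)$, I would Taylor expand $f(q_\step)-f(q_0)$ to second order and $\gradf(q_\step)$ to first order around $q_0$, each with an integral remainder controlled by the oscillation of $\hessf$ (which by strong Hessian Lipschitz is bounded by $\SSC\Lparam^{3/2}\enorm{\Delta q}$ in Frobenius norm). Substituting $\gradf(q_\step)$ into $p_\step = p_0 - \tfrac{\step}{2}(\gradf(q_0)+\gradf(q_\step))$ and expanding $\tfrac12(\enorm{p_\step}^2-\enorm{p_0}^2) = p_0\tp(p_\step-p_0)+\tfrac12\enorm{p_\step-p_0}^2$, a direct calculation shows that the $\step^1$ and $\step^2$ terms cancel, leaving a leading $\step^3$ contribution of the form $\tfrac{\step^3}{12}\jerkf_{q_0}[p_0,p_0,p_0] - \tfrac{\step^3}{4}\,p_0\tp\hessf_{q_0}\gradf(q_0)$ together with a collection of higher-order polynomials at orders $\step^4,\step^5,\step^6,\step^7$ in $(p_0,\gradf(q_0))$ contracted against $\hessf_{q_0}$ or $\jerkf_{q_0}$, plus an integral remainder of order $\SSC\Lparam^{3/2}\enorm{\Delta q}^3$.

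The $L^\ell$-norm bound then proceeds piece by piece. The leading cubic chaos $\step^3\jerkf_{q_0}[p_0,p_0,p_0]$ is estimated by Latala's inequality for degree-$3$ Gaussian chaos: its $L^\ell$ norm is at most $C\bigl(\ell^{1/2}\vecnorm{\jerkf_{q_0}}{\braces{1,2,3}} + \ell\,\vecnorm{\jerkf_{q_0}}{\braces{1,2}\braces{3}} + \ell^{3/2}\vecnorm{\jerkf_{q_0}}{\braces{1}\braces{2}\braces{3}}\bigr)$ up to constants; applying the hypothesis $\vecnorm{\jerkf_{q_0}}{\braces{1,2}\braces{3}}\le\SSC\Lparam^{3/2}$ together with the norm comparisons $\vecnorm{\jerkf_{q_0}}{\braces{1}\braces{2}\braces{3}}\le \vecnorm{\jerkf_{q_0}}{\braces{1,2}\braces{3}}$ and $\vecnorm{\jerkf_{q_0}}{\braces{1,2,3}}\le \dims^{1/2}\vecnorm{\jerkf_{q_0}}{\braces{1,2}\braces{3}}$ yields the first summand $\SSC\step^3\ell^{3/2}\Lparam^{3/2}\dl^{1/2}$. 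Each remaining higher-order term has a schematic form $\step^k\cdot(\text{polynomial in }p_0)\cdot(\text{polynomial in }\gradf(q_0))\cdot(\text{contractions of }\hessf_{q_0}\text{ or }\jerkf_{q_0})$; the $p_0$-factors are bounded via the $\chi^2$ moment $\Exs[\enorm{p_0}^{2m}]^{1/m}\le c\,\dl$ for $m\le\ell$ (hence the appearance of $\dl$ and $\dl^{3/2}$), and the $\gradf(q_0)$-factors are controlled by subexponential moments of $\gradf(q_0)$ under $\target$, which follow from Cheeger's isoperimetric inequality (via Gromov--Milman) combined with $\Lparam$-smoothness. Grouping contributions by powers of $\step$ and absorbing the $\step^4$ and $\step^6$ pieces into the neighbouring bounds produces the remaining two summands $\SSC\step^5\ell^{1/2}\Lparam^{5/2}\dl$ and $\SSC\step^7\Lparam^{7/2}\dl^{3/2}$.

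The main obstacle is the careful combinatorial bookkeeping needed after the $\step^0,\step^1,\step^2$ cancellation: writing down every residual monomial, identifying for each the right tensor norm of $\jerkf_{q_0}$ or $\hessf_{q_0}$ that is compatible with the Frobenius-type bound $\vecnorm{\jerkf_\cdot}{\braces{1,2}\braces{3}}\le\SSC\Lparam^{3/2}$, and applying the appropriate Gaussian chaos plus $\target$-moment estimate to each. A secondary subtlety is that the hypothesis bounds $\jerkf$ but does not give any control of a fourth derivative, so the ``third-order'' Taylor step must be framed as a second-order expansion of $\gradf$ with integral remainder bounded through oscillation of $\hessf$; this is precisely where the Frobenius-norm strengthening of the usual Hessian Lipschitz hypothesis becomes essential, since it is the norm that pairs naturally with Gaussian chaos estimates for $\jerkf_{q_0}[p_0,p_0,p_0]$.
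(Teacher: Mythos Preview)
Your plan correctly identifies the $\step^3$ leading term and the right tools (Lata{\l}a's Gaussian-chaos inequality, subexponential moments of $\gradf$ under $\mu$). However, the claim that one can simply ``absorb the $\step^4$ and $\step^6$ pieces into the neighbouring bounds'' hides the crux of the argument, and as stated it fails.

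A direct Taylor expansion around $q_0$ leaves, at order $\step^4$, the uncancelled combination
\[
\frac{\step^4}{8}\bigl(\,p_0^{\!\top}\hessf_{q_0}^{\,2}p_0-\gradf(q_0)^{\!\top}\hessf_{q_0}\gradf(q_0)\,\bigr)
\]
(one can check this already for the quadratic $f(q)=\tfrac12\|q\|^2$, where $\Delta\Ham=\tfrac{\step^3}{4}q_0^{\!\top}p_0+\tfrac{\step^4}{8}(\|p_0\|^2-\|q_0\|^2)+O(\step^5)$). Each summand separately has $\ell$-norm of order $\Lparam^2\dl$, giving a contribution $\step^4\Lparam^2\dl$. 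This is \emph{not} dominated by any of the three target summands: comparing with $\step^3\Lparam^{3/2}\dl^{1/2}$ leaves an excess factor $\dl^{1/2}$, and comparing with $\step^5\Lparam^{5/2}\dl$ would require $\step\Lparam^{1/2}\gtrsim 1$, the wrong regime. So the absorption you assert does not hold without an additional cancellation or a sharper moment bound on the combination, neither of which your sketch provides.

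The paper obtains the needed cancellation by a different route: it compares the leapfrog step to the continuous Hamiltonian flow $(\cq_t,\cp_t)$ (which conserves $\Ham$ exactly and leaves $\mu\times\Normal(0,\Ind_\dims)$ invariant) and writes the energy error as $\gradf(\cq_\step)^{\!\top}(\cq_\step-q_\step)+\cp_\step^{\!\top}(\cp_\step-p_\step)$ up to $O(\step^6\Lparam^3\dl)$. Expanding these via time integrals, the $\step^4\|\hessf_{q_0}p_0\|^2$ contributions appear with explicit numerical coefficients in three pieces ($-\tfrac{3}{24}$, $+\tfrac{1}{24}$, $+\tfrac{2}{24}$) that sum to zero; the residuals are integrals of differences like $\hessf_{\cq_t}\cp_t-\hessf_{\cq_0}\cp_0$, which carry an extra factor of $t$ and are bounded using the stationarity of the continuous flow (Lemmas~\ref{lem:Hp_diff_bound} and~\ref{lem:pHp_diff_bound}). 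This structural cancellation is precisely what a direct Taylor expansion around $q_0$ obscures. If you want to make your approach work, you would need to isolate the same $\step^4$ cancellation by hand and then show that the residual combination concentrates at the $\dl^{1/2}$ rather than $\dl$ scale---essentially reproducing the paper's mechanism in different coordinates.
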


\subsection{Intermediate concentration results}
\label{sub:intermediate_concentration_results}
We state the intermediate concentration results needed to prove Lemma~\ref{lem:single_leapfrog_acceptance_lower_bound}. The proofs are postponed to Appendix~\ref{app:add_proofs_acceptance}. 
To simplify notation, with $\trH$ and $\ell$, define the constant
\begin{align}
  \label{eq:def_Cl}
  \Cl \defn \trH + 2 (\ell-1) \Lparam.
\end{align}
With $\dims$ and $\ell$, define the constant
\begin{align}
  \label{eq:def_dl}
  \dl \defn \parenth{\dims + 2\ell-2} .
\end{align}
\begin{lemma}
  \label{lem:grad_norm_bound}
  Let integer $\ell \geq 1$. Suppose $e^{-f}$ is subexponential-tailed, $f$ is $\Lparam$-smooth and $\sup_{x \in \real^\dims} \trace(\hessf_x) \leq \trH$, let $\Cl$ be defined in Eq.~\eqref{eq:def_Cl}, then
  \begin{align*}
    \brackets{\Exs_{q \sim e^{-f}} \enorm{\gradf(q)}^{2\ell}}^{\frac{1}{\ell}} \leq \Cl \leq \Lparam \dl.
  \end{align*}
\end{lemma}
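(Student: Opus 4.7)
The plan is to bound $\Exs_{q \sim e^{-f}} \enorm{\gradf(q)}^{2\ell}$ via integration by parts against the target density, then iterate the resulting recursion. The starting point is the identity $\nabla \cdot (e^{-f} v) = e^{-f}(\nabla \cdot v - \gradf \cdot v)$, which under suitable decay at infinity yields the Stein-type identity $\Exs[\gradf \cdot v] = \Exs[\nabla \cdot v]$ for vector fields $v \colon \real^\dims \to \real^\dims$. I would apply this with the choice $v = \enorm{\gradf}^{2(\ell-1)} \gradf$, giving $\gradf \cdot v = \enorm{\gradf}^{2\ell}$ on the left side and, on the right side,
\begin{align*}
\nabla \cdot v = \enorm{\gradf}^{2(\ell-1)} \trace(\hessf) + 2(\ell-1) \enorm{\gradf}^{2(\ell-2)} \gradf\tp \hessf \gradf.
\end{align*}

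The next step is to plug in the two structural bounds available from the hypotheses: $\trace(\hessf_q) \leq \trH$ pointwise and $\gradf\tp \hessf \gradf \leq \Lparam \enorm{\gradf}^2$ from $\Lparam$-smoothness. These two bounds collapse the identity into the clean recursion
\begin{align*}
\Exs \enorm{\gradf}^{2\ell} \leq \bigl(\trH + 2(\ell-1)\Lparam\bigr) \Exs \enorm{\gradf}^{2(\ell-1)} = \Cl \cdot \Exs \enorm{\gradf}^{2(\ell-1)}.
\end{align*}
Iterating this recursion down to the base case $\ell = 1$, where it gives $\Exs \enorm{\gradf}^2 \leq \trH = C_1$, and using that the sequence $\{C_k\}$ is nondecreasing in $k$, I obtain $\Exs \enorm{\gradf}^{2\ell} \leq \prod_{k=1}^{\ell} C_k \leq \Cl^{\ell}$, which upon taking the $\ell$-th root yields the claimed bound. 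The auxiliary inequality $\Cl \leq \Lparam \dl$ is immediate from $\Lparam$-smoothness (which forces $\trH \leq \dims \Lparam$) and the definitions of $\Cl$ and $\dl$.

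The main technical obstacle is justifying the integration-by-parts step, since there is no compact support and the identity requires the boundary flux $\int_{\enorm{x}=R} e^{-f(x)} \enorm{\gradf(x)}^{2(\ell-1)} \gradf(x) \cdot \hat n \, dS$ to vanish as $R \to \infty$. This is exactly what the subexponential-tail hypothesis is for: $\Lparam$-smoothness gives the polynomial growth estimate $\enorm{\gradf(x)} \leq \Lparam \enorm{x} + \enorm{\gradf(0)}$, so the integrand is bounded by a polynomial in $\enorm{x}$ times $e^{-f(x)}$, and subexponential decay of $e^{-f}$ overwhelms any such polynomial, making the flux vanish and thereby legitimizing the identity $\Exs[\gradf \cdot v] = \Exs[\nabla \cdot v]$ for every $\ell \geq 1$.
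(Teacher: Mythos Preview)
Your proof is correct and follows essentially the same approach as the paper: the same Stein-type integration by parts with $v = \enorm{\gradf}^{2(\ell-1)}\gradf$, the same pointwise bounds $\trace(\hessf_q)\leq \trH$ and $\gradf\tp\hessf\gradf \leq \Lparam\enorm{\gradf}^2$, and the same justification of the vanishing boundary term via the subexponential tail. The only cosmetic difference is the final step: the paper applies H\"older's inequality once (writing $\Exs\enorm{\gradf}^{2(\ell-1)} \leq (\Exs\enorm{\gradf}^{2\ell})^{(\ell-1)/\ell}$ and solving $M_\ell \leq \Cl M_\ell^{(\ell-1)/\ell}$), whereas you iterate the recursion and use monotonicity of $k\mapsto C_k$---both arrive at the same bound.
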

\begin{lemma}
  \label{lem:pHp_bound}
  Let integer $\ell \geq 1$ and $x\in \real^\dims$. Suppose $f$ is $L$-smooth, then
  \begin{align*}
    \brackets{\Exs_{p \sim \Normal(0, \Ind_\dims)} \parenth{p\tp \parenth{\hessf_{x}}^2 p}^{\ell}}^{\frac{1}{\ell}} \leq \parenth{\sup_{x \in \real^\dims} \trace( (\hessf_x)^2) + 2(\ell-1) L^2} \leq \Lparam^2 \dl.
  \end{align*}
\end{lemma}


\begin{lemma}
  \label{lem:gradHp}
  Suppose $e^{-f}$ is subexponential-tailed, $f$ is $\Lparam$-smooth and $\sup_{x \in \real^\dims} \trace(\hessf_x) \leq \trH$, let $\Cl$ be defined in Eq.~\eqref{eq:def_Cl}, then
  \begin{align*}
    \brackets{\Exs_{(q, p) \sim {e^{-f} \times \Normal(0, \Ind_\dims)}} \parenth{\gradf(q) \tp \hessf_{q} p}^\ell }^{\frac{1}{\ell}} \leq \ell^{\frac12} L\Cl^{\frac12} \leq \ell^{\frac12} \Lparam^{\frac32} \dl^{\frac12}
  \end{align*}
\end{lemma}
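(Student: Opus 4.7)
The plan is to exploit the bilinear structure of the quantity $X \defn \gradf(q)\tp \hessf_q p$ with respect to $p$, together with independence of $q$ and $p$. For each fixed $q$, $X$ is a linear function of $p \sim \Normal(0, \Ind_\dims)$, hence conditionally Gaussian with mean zero and variance $\sigma_q^2 = \enorm{\hessf_q \gradf(q)}^2$. This lets me reduce the $\ell$-th moment bound to a product of a standard Gaussian moment bound in $p$ and a moment bound in $q$ that can be directly fed into Lemma~\ref{lem:grad_norm_bound}.

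Concretely, I would first write
\begin{align*}
  \Exs_{(q, p)} \abss{\gradf(q)\tp \hessf_q p}^\ell = \Exs_q \brackets{ \Exs_{p} \abss{\gradf(q)\tp \hessf_q p}^\ell \mid q } = \Exs_q \enorm{\hessf_q \gradf(q)}^\ell \cdot \Exs_{Z \sim \Normal(0,1)} \abss{Z}^\ell,
\end{align*}
using that conditionally on $q$, $\gradf(q)\tp \hessf_q p$ is $\Normal(0, \enorm{\hessf_q \gradf(q)}^2)$. A direct computation via the Gamma function (or Stirling) gives $\parenth{\Exs \abss{Z}^\ell}^{1/\ell} \leq c \ell^{1/2}$ for a universal constant.

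Next I would bound the $q$-side moment by using $\Lparam$-smoothness to obtain $\enorm{\hessf_q \gradf(q)} \leq \Lparam \enorm{\gradf(q)}$, so that
\begin{align*}
  \parenth{\Exs_q \enorm{\hessf_q \gradf(q)}^\ell}^{1/\ell} \leq \Lparam \parenth{\Exs_q \enorm{\gradf(q)}^{\ell}}^{1/\ell} \leq \Lparam \parenth{\Exs_q \enorm{\gradf(q)}^{2\ell}}^{1/(2\ell)} \leq \Lparam \Cl^{1/2},
\end{align*}
where the second step is monotonicity of $L^p$ norms (Jensen) and the third step invokes Lemma~\ref{lem:grad_norm_bound}. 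Multiplying the two bounds delivers
\begin{align*}
  \brackets{\Exs \parenth{\gradf(q)\tp \hessf_q p}^\ell}^{1/\ell} \leq c\, \ell^{1/2} \Lparam \Cl^{1/2},
\end{align*}
which is exactly the stated inequality, and the final bound $\ell^{1/2} \Lparam^{3/2} \dl$ follows from $\Cl \leq \Lparam \dl$ already established in Lemma~\ref{lem:grad_norm_bound}.

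There is no real obstacle here; the only thing to be careful about is the conditional Gaussianity step, which requires that $p$ is independent of $q$ (which holds by construction of the product measure) and that the linear functional $v \mapsto \gradf(q)\tp \hessf_q v$ has squared $\ell_2$ norm of coefficients exactly $\enorm{\hessf_q \gradf(q)}^2$. The absolute value inside the expectation can be dropped for even $\ell$; for odd $\ell$ the bound above on the absolute moment dominates the signed one, so the stated inequality holds in either parity.
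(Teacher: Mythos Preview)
Your proposal is correct and follows essentially the same route as the paper: condition on $q$ to get conditional Gaussianity of $\gradf(q)\tp\hessf_q p$, apply the Gaussian moment bound, use $\Lparam$-smoothness to reduce $\enorm{\hessf_q\gradf(q)}$ to $\Lparam\enorm{\gradf(q)}$, and invoke Lemma~\ref{lem:grad_norm_bound} (with the monotonicity step you make explicit). The only cosmetic difference is that the paper uses the exact even-moment formula $\Exs Z^\ell=(\ell-1)!!\le\ell^{\ell/2}$ (and the odd signed moments vanish), which yields the stated constant-free inequality rather than your $c\,\ell^{1/2}$.
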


In the course of analyzing the acceptance rate of HMC, we need several concentrations bounds that involve the third order derivatives. The proof of these bounds are postponed to Appendix~\ref{sub:add_proofs_acceptance_third_order_bounds}.
\begin{lemma}
  \label{lem:jerk_ppp_bound}
  For any $x\in \real^\dims$, there exists a universal constant $c > 0$ such that 
  \begin{align*}
    \brackets{\Exs_{p \sim \Normal(0, \Ind_\dims)} \parenth{\jerkf_{x}[p, p, p]}^{\ell}}^{\frac{1}{\ell}} \leq c \parenth{\ell^{\frac32} \vecnorm{\jerkf_x}{\braces{123}} + \ell^{\frac12} \dims^{\frac12} \vecnorm{\jerkf_x}{\braces{12}\braces{3} } }.
  \end{align*}
  The bound is $O\parenth{\SSC \ell^{\frac32}  \Lparam^{\frac32} \dims^{\frac12}}$ if $f$ is $\SSC \Lparam^{\frac32}$-strongly Hessian Lipschitz.
\end{lemma}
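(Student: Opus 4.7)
My plan is to decompose the cubic Gaussian chaos into its Hermite components and apply hypercontractivity to each component separately. Write $T \defn \jerkf_x[p,p,p] = \sum_{i,j,k} A_{ijk} p_i p_j p_k$ where $A \defn \jerkf_x$ is a symmetric $3$-tensor, and split $T = T_3 + T_1$, where $T_3 \defn \sum_{i,j,k} A_{ijk} {:}p_i p_j p_k{:}$ lives in the third Wiener chaos and $T_1 \defn T - T_3$ is the orthogonal projection onto the first Wiener chaos (the constant projection vanishes since centered Gaussians have zero third moment). For symmetric $A$, computing the Hermite coefficient $\Exs[T p_k]$ via Gaussian integration by parts identifies the linear piece as $T_1 = 3 \sum_k v_k p_k$ with $v_k \defn \sum_i A_{iik} = \trace(A^{(k)})$, where $A^{(k)}_{ij} \defn A_{ijk}$.

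The key tool is the Gaussian hypercontractivity inequality (Nelson/Gross): for a polynomial $P$ of degree $\leq n$ in independent standard Gaussians, $\matsnorm{P}{\ell} \leq (\ell - 1)^{n/2} \matsnorm{P}{2}$ for $\ell \geq 2$. Applied to the two pieces of degree $3$ and $1$, this gives
\begin{align*}
\matsnorm{T_3}{\ell} \leq (\ell-1)^{3/2} \matsnorm{T_3}{2}, \qquad \matsnorm{T_1}{\ell} \leq (\ell-1)^{1/2} \matsnorm{T_1}{2}.
\end{align*}
It then remains to estimate the two $L^2$ norms in terms of the tensor partition norms. For $T_3$, expanding in the orthonormal Hermite basis $\{H_\alpha/\sqrt{\alpha!}\}_{|\alpha|=3}$ and using the symmetry of $A$ to absorb combinatorial factors yields $\matsnorm{T_3}{2}^2 \lesssim \sum_{i,j,k} A_{ijk}^2 = \vecnorm{A}{\braces{123}}^2$. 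For $T_1$, we have $\matsnorm{T_1}{2}^2 = 9 \enorm{v}^2$, and testing the definition~\eqref{eq:def_multi_index_norm} of $\vecnorm{A}{\braces{12}\braces{3}}$ against the rank-one matrix $\Ind_\dims/\sqrt{\dims}$ in the $\{1,2\}$-block and any unit vector $z$ in the $\{3\}$-block gives $\enorm{v}/\sqrt{\dims} \leq \vecnorm{A}{\braces{12}\braces{3}}$, hence $\matsnorm{T_1}{2} \leq 3\sqrt{\dims}\,\vecnorm{A}{\braces{12}\braces{3}}$.

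Combining via the triangle inequality produces exactly the claimed bound
\begin{align*}
\matsnorm{T}{\ell} \leq \matsnorm{T_3}{\ell} + \matsnorm{T_1}{\ell} \lesssim \ell^{3/2}\,\vecnorm{A}{\braces{123}} + \ell^{1/2}\,\sqrt{\dims}\,\vecnorm{A}{\braces{12}\braces{3}}.
\end{align*}
The secondary claim under the $\SSC\Lparam^{3/2}$-strongly Hessian Lipschitz assumption follows by observing that treating $A$ as a $\dims^2 \times \dims$ matrix with rows indexed by $(i,j)$ gives the inequality $\vecnorm{A}{\braces{123}} \leq \sqrt{\dims}\,\vecnorm{A}{\braces{12}\braces{3}} \leq \sqrt{\dims}\,\SSC\Lparam^{3/2}$, so both terms collapse to $O(\SSC\,\ell^{3/2}\Lparam^{3/2}\dims^{1/2})$.

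The main obstacle I anticipate is purely bookkeeping: correctly separating the Wick-ordered cubic part from the Gaussian linear part for a symmetric (not Wick-ordered) tensor, which is where the multiplicative factor of $3$ in $v$ and the explicit computation of $\matsnorm{T_3}{2}^2$ through orthonormality of $H_\alpha$ need careful combinatorial attention. Everything else — the hypercontractive step and the Cauchy--Schwarz-style tensorial comparison between the Frobenius-type partial-trace vector $v$ and the $\braces{12}\braces{3}$-norm — is essentially routine.
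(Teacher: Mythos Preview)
Your proof is correct. Both your argument and the paper's rest on Gaussian hypercontractivity for polynomial chaos, but the executions differ: the paper invokes a general moment bound for Gaussian polynomials (Theorem~2.15 in Adamczak et al., \cite{adamczak2021moments}), which returns a sum over all partitions $\mathcal{P}$ of subsets $J\subseteq[k]$ of the norms $\vecnorm{\Exs\nabla^k g(p)}{\mathcal{P}}$ and then identifies the dominant terms case-by-case ($k=3$ gives the $\ell^{3/2}\vecnorm{\jerkf_x}{\braces{123}}$ contribution, $k=1$ gives the $\ell^{1/2}\dims^{1/2}\vecnorm{\jerkf_x}{\braces{12}\braces{3}}$ contribution, $k=2$ vanishes). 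You instead carry out the Wiener chaos decomposition $T=T_3+T_1$ by hand and apply hypercontractivity to each piece directly. Your route is more elementary and self-contained, avoids the external reference, and makes the origin of the two terms transparent (the trace vector $v_k=\sum_iA_{iik}$ is exactly the $k=1$ term in the paper's enumeration). The paper's route is shorter on the page and reuses the same cited theorem for the companion Lemma~\ref{lem:jerk_pp_norm_bound}, where the chaos is degree four and a by-hand decomposition would be more tedious.
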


\begin{lemma}
  \label{lem:jerk_pp_norm_bound}
  For any $x\in \real^\dims$, there exists a universal constant $c > 0$ such that 
  \begin{align*}
    \brackets{\Exs_{p \sim \Normal(0, \Ind_\dims)} \enorm{\jerkf_{x}[p, p, \cdot]}^{2\ell}}^{\frac{1}{\ell}} \leq c \parenth{\ell^2\vecnorm{\jerkf_x}{\braces{123}}^2 +  \ell^2 \dims \vecnorm{\jerkf_x}{\braces{12}\braces{3} }^2 }.
  \end{align*}
  The bound is $O\parenth{\SSC^2  \ell^2 \Lparam^3 \dims }$ if $f$ is $\SSC \Lparam^{\frac32}$-strongly Hessian Lipschitz.
\end{lemma}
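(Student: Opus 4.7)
The plan is to reduce the task to a family of scalar Gaussian quadratic forms, one per output coordinate, and then invoke the moment form of the Hanson--Wright inequality. For each $k \in [\dims]$, introduce the symmetric matrix $B_k \in \real^{\dims \times \dims}$ with entries $(B_k)_{ij} = (\jerkf_x)_{ijk}$; symmetry uses that mixed partials commute. Then $(\jerkf_x[p,p,\cdot])_k = p\tp B_k p$, so
\begin{align*}
  \enorm{\jerkf_x[p,p,\cdot]}^{2} = \sum_{k=1}^{\dims} (p\tp B_k p)^2,
\end{align*}
and since $\Exs_{p} \enorm{\jerkf_x[p,p,\cdot]}^{2\ell} = \big\| \enorm{\jerkf_x[p,p,\cdot]}^2 \big\|_{L^\ell}^{\ell}$, the triangle inequality in $L^\ell$ gives
\begin{align*}
  \Big\| \enorm{\jerkf_x[p,p,\cdot]}^2 \Big\|_{L^\ell} \leq \sum_{k=1}^{\dims} \Big\| p\tp B_k p \Big\|_{L^{2\ell}}^{2}.
\end{align*}

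Next I would apply the standard Hanson--Wright moment bound: for any symmetric $B \in \real^{\dims \times \dims}$ and $q \geq 1$,
\begin{align*}
  \Big\| p\tp B p \Big\|_{L^q} \leq \abss{\trace(B)} + c_1 \sqrt{q}\, \vecnorm{B}{F} + c_2\, q\, \vecnorm{B}{2}.
\end{align*}
Taking $q = 2\ell$, squaring, and summing over $k$ reduces the task to bounding three quantities: $\sum_k (\trace B_k)^2$, $\sum_k \vecnorm{B_k}{F}^2$, and $\sum_k \vecnorm{B_k}{2}^2$. The last two are immediate from the definition~\eqref{eq:def_multi_index_norm}: $\sum_k \vecnorm{B_k}{F}^2 = \vecnorm{\jerkf_x}{\braces{123}}^2$, and $\vecnorm{B_k}{2} \leq \vecnorm{B_k}{F}$ gives the same bound for the operator-norm sum. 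These two contributions produce the $\ell\, \vecnorm{\jerkf_x}{\braces{123}}^2$ and $\ell^2\, \vecnorm{\jerkf_x}{\braces{123}}^2$ terms of the claimed inequality.

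The only non-routine step is bounding $\sum_k (\trace B_k)^2$ by $\dims\, \vecnorm{\jerkf_x}{\braces{12}\braces{3}}^2$. Writing $v_k \defn \trace(B_k) = \sum_i (\jerkf_x)_{iik}$, I would dualise,
\begin{align*}
  \enorm{v} = \sup_{\enorm{y} \leq 1} \sum_{i,k} (\jerkf_x)_{iik} y_k = \sqrt{\dims}\, \sup_{\enorm{y} \leq 1} \sum_{i,j,k} (\jerkf_x)_{ijk} X^{\star}_{ij} y_k,
\end{align*}
where $X^{\star}_{ij} \defn \delta_{ij}/\sqrt{\dims}$ has unit Frobenius norm, so the right-hand side is at most $\sqrt{\dims}\, \vecnorm{\jerkf_x}{\braces{12}\braces{3}}$ by~\eqref{eq:def_multi_index_norm}; squaring yields the desired estimate. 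Combining the three contributions proves the main inequality (in fact with a sharper constant, since the trace term appears without the $\ell^2$ factor in my bound). The corollary under the strongly Hessian-Lipschitz assumption follows because symmetry of $\jerkf_x$ identifies the $\braces{12}\braces{3}$-norm with the $\braces{1}\braces{23}$-norm of Subsection~\ref{subsec:regularityprop}, giving $\vecnorm{\jerkf_x}{\braces{12}\braces{3}} \leq \SSC \Lparam^{3/2}$ and, by applying this slicewise, $\vecnorm{\jerkf_x}{\braces{123}}^2 \leq \dims\, \SSC^2 \Lparam^{3}$. The only real subtlety is the tensor-norm duality in the $\trace(B_k)$ step; the remainder is classical Gaussian quadratic-form concentration.
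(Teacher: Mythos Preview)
Your argument is correct and takes a genuinely different route from the paper's proof. The paper treats $g(p)=\enorm{\jerkf_x[p,p,\cdot]}^2$ as a single fourth-order Gaussian chaos with coefficient tensor $B_{ijlm}=\sum_k A_{ijk}A_{lmk}$ and invokes the general chaos moment bound of Adamczak et al.\ (Theorem~2.15), then computes and bounds the various $\vecnorm{\Exs\nabla^k g}{\mathcal P}$ norms for $k\le 4$. You instead decompose coordinate-wise, $\enorm{\jerkf_x[p,p,\cdot]}^2=\sum_k (p\tp B_k p)^2$, apply the triangle inequality in $L^\ell$, and use only the second-order Hanson--Wright moment bound on each quadratic form. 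Your approach is more elementary (no fourth-order chaos machinery) and in fact delivers a slightly sharper inequality: the $\dims\,\vecnorm{\jerkf_x}{\braces{12}\braces{3}}^2$ contribution arises from the mean term and carries no $\ell^2$ prefactor. The paper's approach has the advantage of reusing the same general chaos lemma already needed for Lemma~\ref{lem:jerk_ppp_bound}, so it is uniform across the third-order bounds; your approach trades that uniformity for a shorter, self-contained argument here.
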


\begin{lemma}
  \label{lem:pHp_diff_bound}
  Given $(\cq_0, \cp_0)$, let $(\cq_t, \cp_t)$ be the continuous Hamiltonian dynamics at time $t$ according to Eq.~\eqref{eq:continuous_HMC_form}, then there exists a universal constant $c>0$ such that 
  \begin{align*}
    &\quad \brackets{\Exs_{(\cq_0, \cp_0) \sim \target \times \Normal(0, \Ind_\dims)} \parenth{\cp_t\tp \hessf_{\cq_t} \cp_t - \cp_0 \tp \hessf_{\cq_0} \cp_0}^{\ell} }^{\frac{1}{\ell}} \\
    &\leq c t \parenth{\ell^{\frac12} \Lparam \Cl^{\frac12} +   \ell^{\frac32} \vecnorm{\jerkf_\cdot}{\braces{123}} + \ell^{\frac12} \dims^{\frac12} \vecnorm{\jerkf_\cdot}{\braces{12}\braces{3} } }.
  \end{align*}
  The bound is $O\parenth{t \parenth{\SSC + 1} \ell^{\frac32} \Lparam^{\frac32} \dl^{\frac12}}$ if $f$ is $\SSC \Lparam^{\frac32}$-strongly Hessian Lipschitz.
\end{lemma}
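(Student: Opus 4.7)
The plan is to express $\cp_t\tp \hessf_{\cq_t} \cp_t - \cp_0\tp \hessf_{\cq_0} \cp_0$ as a time integral along the continuous trajectory and then invoke the concentration lemmas already proved to bound each piece of the integrand. First, I would apply the chain rule using Hamilton's equations~\eqref{eq:continuous_HMC_ODE} to compute
\begin{align*}
\frac{d}{ds}\bigl(\cp_s\tp \hessf_{\cq_s} \cp_s\bigr) &= 2\, \dot{\cp}_s\tp \hessf_{\cq_s} \cp_s + \jerkf_{\cq_s}[\dot{\cq}_s, \cp_s, \cp_s] \\
&= -2\,\gradf(\cq_s)\tp \hessf_{\cq_s} \cp_s + \jerkf_{\cq_s}[\cp_s, \cp_s, \cp_s],
\end{align*}
using $\dot{\cp}_s = -\gradf(\cq_s)$ and $\dot{\cq}_s = \cp_s$. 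The fundamental theorem of calculus then gives
\begin{align*}
\cp_t\tp \hessf_{\cq_t} \cp_t - \cp_0\tp \hessf_{\cq_0} \cp_0 = \int_0^t \bigl(-2\,\gradf(\cq_s)\tp \hessf_{\cq_s} \cp_s + \jerkf_{\cq_s}[\cp_s, \cp_s, \cp_s]\bigr)\, ds.
\end{align*}

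Next I would invoke the invariance of the product measure $\target \times \Normal(0, \Ind_\dims)$ under the continuous Hamiltonian flow. Since $\Ham(q, p) = f(q) + \tfrac{1}{2}\enorm{p}^2$, the density $e^{-\Ham}$ is preserved by Hamilton's equations (Liouville's theorem combined with energy conservation), so whenever $(\cq_0, \cp_0) \sim \target \times \Normal(0, \Ind_\dims)$ we have $(\cq_s, \cp_s) \sim \target \times \Normal(0, \Ind_\dims)$ for every $s \in [0, t]$. In particular the $\matsnorm{\cdot}{\ell}$-norms of each integrand term do not depend on $s$. Applying Minkowski's inequality in the form of Eq.~\eqref{eq:random_vector_norm_integral} then yields
\begin{align*}
\matsnorm{\cp_t\tp \hessf_{\cq_t} \cp_t - \cp_0\tp \hessf_{\cq_0} \cp_0}{\ell} \leq 2t\, \matsnorm{\gradf(\cq_0)\tp \hessf_{\cq_0} \cp_0}{\ell} + t\, \matsnorm{\jerkf_{\cq_0}[\cp_0, \cp_0, \cp_0]}{\ell}.
\end{align*}

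The first term is bounded directly by Lemma~\ref{lem:gradHp} by $\ell^{1/2} \Lparam \Cl^{1/2}$. For the second term I would condition on $\cq_0$, apply Lemma~\ref{lem:jerk_ppp_bound} to the Gaussian $\cp_0$, and then take a uniform bound in $\cq_0$ using the suprema $\vecnorm{\jerkf_\cdot}{\braces{1,2,3}}$ and $\vecnorm{\jerkf_\cdot}{\braces{1,2}\braces{3}}$. Summing the two contributions produces the claimed general bound. The specialization under the strongly Hessian Lipschitz hypothesis then follows by using $\vecnorm{\jerkf_\cdot}{\braces{1,2}\braces{3}} \leq \SSC \Lparam^{3/2}$ by assumption, the elementary relation $\vecnorm{\jerkf_\cdot}{\braces{1,2,3}} \leq \dims^{1/2} \vecnorm{\jerkf_\cdot}{\braces{1,2}\braces{3}}$, and $\Cl \leq \Lparam \dl$ from Lemma~\ref{lem:pHp_bound}.

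The main obstacle is getting the derivative identity right — in particular handling the fact that a priori the chain rule produces $\jerkf_{\cq_s}[\dot{\cq}_s, \cdot, \cdot]$ contracted against $\cp_s \otimes \cp_s$, which must be rewritten as the symmetric cubic form $\jerkf_{\cq_s}[\cp_s, \cp_s, \cp_s]$ via $\dot{\cq}_s = \cp_s$ before Lemma~\ref{lem:jerk_ppp_bound} can be applied. The only other subtlety is justifying the invariance of $\target \times \Normal(0, \Ind_\dims)$ under the continuous flow carefully enough to conclude that the integrand's $\ell$-th moment is constant in $s$, which allows localizing the estimate to $s = 0$.
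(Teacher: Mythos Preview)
Your proposal is correct and follows essentially the same route as the paper: differentiate $\cp_s\tp \hessf_{\cq_s}\cp_s$ along the flow to obtain the integral of $-2\gradf(\cq_s)\tp\hessf_{\cq_s}\cp_s+\jerkf_{\cq_s}[\cp_s,\cp_s,\cp_s]$, use invariance of $\target\times\Normal(0,\Ind_\dims)$ under the continuous dynamics to reduce the integrand's moments to $s=0$, and conclude via Lemmas~\ref{lem:gradHp} and~\ref{lem:jerk_ppp_bound}. The only cosmetic difference is that the paper splits the integrand with a $2^{\ell-1}$ factor inside the $\ell$-th power rather than applying Minkowski first, and the paper does not spell out the strongly-Hessian-Lipschitz specialization, which you handle correctly via $\vecnorm{\jerkf_\cdot}{\braces{123}}\le \dims^{1/2}\vecnorm{\jerkf_\cdot}{\braces{12}\braces{3}}$ and $\Cl\le \Lparam\dl$.
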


\begin{lemma}
  \label{lem:Hp_diff_bound}
  Given $(\cq_0, \cp_0)$, let $(\cq_t, \cp_t)$ be the continuous Hamiltonian dynamics at time $t$ according to Eq.~\eqref{eq:continuous_HMC_form}, then there exists a universal constant $c>0$ such that 
  \begin{align*}
    \brackets{\Exs \enorm{\hessf_{\cq_t} \cp_t - \hessf_{\cq_0} \cp_0}^{2\ell}}^{\frac{1}{2\ell}} &\leq c t \parenth{\vecnorm{\jerkf_\cdot}{\braces{123}} + \dims^{\frac12} \vecnorm{\jerkf_\cdot}{\braces{12}\braces{3}} + \Lparam\Cl^{\frac12} }
  \end{align*}
  The bound is $O\parenth{t \parenth{\SSC + 1} \ell^{\frac12} \Lparam^{\frac32} \dl^{\frac12}}$ if $f$ is $\SSC \Lparam^{\frac32}$-strongly Hessian Lipschitz.
\end{lemma}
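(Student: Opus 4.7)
The plan is to differentiate $\hessf_{\cq_s}\cp_s$ along the continuous Hamiltonian flow, integrate back in $s$, and then control each term in the $L^{2\ell}$ norm using the invariance of the joint measure $\mu \times \Normal(0, \Ind_\dims)$ under the continuous HMC dynamics. The chain rule combined with the Hamilton equations~\eqref{eq:continuous_HMC_ODE} yields
\[
\frac{d}{ds}\parenth{\hessf_{\cq_s}\cp_s} \;=\; \jerkf_{\cq_s}[\cp_s,\cp_s,\cdot] \;-\; \hessf_{\cq_s}\gradf(\cq_s),
\]
so the fundamental theorem of calculus writes $\hessf_{\cq_t}\cp_t - \hessf_{\cq_0}\cp_0$ as the sum of two time integrals corresponding to the two summands above.

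Next, I would take the $\matsnorm{\cdot}{2\ell}$ norm on both sides, apply the triangle inequality, and use the integral norm inequality~\eqref{eq:random_vector_norm_integral} to reduce the proof to uniform-in-$s$ bounds on $\matsnorm{\jerkf_{\cq_s}[\cp_s,\cp_s,\cdot]}{2\ell}$ and $\matsnorm{\hessf_{\cq_s}\gradf(\cq_s)}{2\ell}$. Here I invoke the crucial fact that the continuous Hamiltonian flow preserves $\mu \times \Normal(0, \Ind_\dims)$, so that for every $s \in [0,t]$ the pair $(\cq_s, \cp_s)$ has the same product law as $(\cq_0, \cp_0)$; in particular $\cq_s$ and $\cp_s$ are independent with marginals $\mu$ and $\Normal(0, \Ind_\dims)$ respectively. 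Conditioning on $\cq_s$, this independence allows me to apply Lemma~\ref{lem:jerk_pp_norm_bound} pointwise at $x = \cq_s$ and then take the supremum over $x$, which produces the $\vecnorm{\jerkf_\cdot}{\braces{123}}$ and $\dims^{\frac12}\vecnorm{\jerkf_\cdot}{\braces{12}\braces{3}}$ contributions. For the second integrand, I bound $\enorm{\hessf_{\cq_s}\gradf(\cq_s)} \leq \Lparam \enorm{\gradf(\cq_s)}$ using $\Lparam$-smoothness and then invoke Lemma~\ref{lem:grad_norm_bound} applied to $\cq_s \sim \mu$, which yields the $\Lparam\Cl^{\frac12}$ term. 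Assembling these two contributions gives the stated main inequality.

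The simplified bound under $\SSC\Lparam^{\frac32}$-strongly Hessian Lipschitz follows by substituting $\vecnorm{\jerkf_\cdot}{\braces{12}\braces{3}} \leq \SSC\Lparam^{\frac32}$ and the induced estimate $\vecnorm{\jerkf_\cdot}{\braces{123}} \leq \dims^{\frac12}\vecnorm{\jerkf_\cdot}{\braces{12}\braces{3}} \leq \SSC\Lparam^{\frac32}\dims^{\frac12}$, together with $\Cl \leq \Lparam\dl$, and then combining the three terms into a single $O\bigparenth{t(\SSC+1)\ell^{\frac12}\Lparam^{\frac32}\dl^{\frac12}}$ estimate using $\dl \geq 2(\ell-1)$.

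The main obstacle I expect is the invariance step. Lemma~\ref{lem:jerk_pp_norm_bound} controls $\matsnorm{\jerkf_x[p,p,\cdot]}{2\ell}$ only at a fixed base point $x$ and under $p \sim \Normal(0,\Ind_\dims)$, so to transfer this to the time-$s$ iterate one really needs the marginal distribution of $\cp_s$ to still be an isotropic Gaussian, independent of $\cq_s$. This is exactly what the invariance of the Gibbs measure $\mu \times \Normal(0, \Ind_\dims)$ under the Hamiltonian flow provides, and it is the reason the bound does not accumulate in $s$ beyond the explicit factor of $t$. Once this step is in place, the remainder is routine tensor-norm bookkeeping to combine the two integrand bounds and fold the $\SSC$ and $\dl$ dependencies into the simplified estimate.
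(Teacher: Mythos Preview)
Your proposal is correct and follows essentially the same route as the paper: differentiate $\hessf_{\cq_s}\cp_s$ along the flow to obtain the integral representation $\int_0^t \jerkf_{\cq_s}[\cp_s,\cp_s,\cdot] - \hessf_{\cq_s}\gradf(\cq_s)\, ds$, pass to $L^{2\ell}$ via the integral inequality, invoke invariance of $\mu\times\Normal(0,\Ind_\dims)$ under the continuous dynamics to replace $(\cq_s,\cp_s)$ by $(\cq_0,\cp_0)$, and then apply Lemma~\ref{lem:jerk_pp_norm_bound} and Lemma~\ref{lem:grad_norm_bound}. The only cosmetic difference is that the paper applies Jensen to the $2\ell$-th power of the norm before splitting the integrand, whereas you split first via the triangle inequality in $\matsnorm{\cdot}{2\ell}$; the resulting bounds coincide up to universal constants.
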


\begin{lemma}
  \label{lem:qp_discrete_continuous_diff}
  Given $(\cq_0, \cp_0)$, let $(\cq_t, \cp_t)$ be the continuous Hamiltonian dynamics at time $t > 0$ according to Eq.~\eqref{eq:continuous_HMC_form} and $(q_t, p_t)$ be the defined in Eq.~\eqref{eq:def_q_0_step} with the same initialization, we have
  \begin{align*}
    \matsnorm{\cq_t - q_t}{2\ell} & \leq  t^3 \Lparam \dl^{\frac12} \\
    \matsnorm{\cp_t - p_t}{2\ell} &\leq  ct^3 \parenth{\vecnorm{\jerkf_\cdot}{\braces{123}} + \dims^{\frac12} \vecnorm{\jerkf_\cdot}{\braces{12}\braces{3}} + \Lparam^{\frac32}\dl^{\frac12}}.
  \end{align*}
  The second bound is $O\parenth{t^3 \parenth{\SSC + 1} \Lparam^{\frac32} \dl^{\frac12}}$ if $f$ is $\SSC \Lparam^{\frac32}$-strongly Hessian Lipschitz.
\end{lemma}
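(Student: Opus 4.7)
The plan is to write each of the two errors as an explicit quadrature remainder along the continuous Hamiltonian flow and then exploit the exact invariance of $\target \times \Normal(0, \Ind_\dims)$ under that flow, so that every time-dependent moment $\matsnorm{\cdot}{2\ell}$ along the trajectory collapses to a stationary moment already covered by Lemmas~\ref{lem:grad_norm_bound}--\ref{lem:jerk_pp_norm_bound}.

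\textbf{Position bound.} Substituting $\cp_s = \cp_0 - \int_0^s \gradf(\cq_u)\, du$ into $\cq_t = \cq_0 + \int_0^t \cp_s\, ds$ and comparing with $q_t = \cq_0 + t\cp_0 - \tfrac{t^2}{2}\gradf(\cq_0)$ yields the clean expression
$\cq_t - q_t = \int_0^t \!\int_0^s [\gradf(\cq_0) - \gradf(\cq_u)]\, du\, ds.$
Using $\tfrac{d}{dv}\gradf(\cq_v) = \hessf_{\cq_v}\cp_v$ rewrites this as a triple integral of $\hessf_{\cq_v}\cp_v$; push $\matsnorm{\cdot}{2\ell}$ inside via Eq.~\eqref{eq:random_vector_norm_integral}. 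By invariance $(\cq_v,\cp_v)\sim \target\times\Normal(0,\Ind_\dims)$, so the integrand is stationary in $v$; the sharp form $\matsnorm{\hessf_{\cq_v}\cp_v}{2\ell}^{2\ell} = \Exs(\cp_v\tp \hessf_{\cq_v}^2 \cp_v)^\ell \leq (\Lparam\Cl)^\ell$ follows from the Gaussian chaos argument behind Lemma~\ref{lem:pHp_bound} applied to $\hessf_{\cq_v}^2$. Triple integration supplies the $t^3/6$ factor and delivers $t^3 \Lparam^{1/2}\Cl^{1/2}$; the weaker form $t^3\Lparam\dl^{1/2}$ comes from the pointwise $\|\hessf_{\cq_v}\cp_v\| \leq \Lparam \enorm{\cp_v}$ and $\matsnorm{\cp_v}{2\ell} \leq \dl^{1/2}$.

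\textbf{Momentum bound.} Subtracting definitions gives $\cp_t - p_t = -\int_0^t \gradf(\cq_s)\, ds + \tfrac{t}{2}(\gradf(\cq_0) + \gradf(q_t))$. Insert the exact trapezoidal identity $\int_0^t g(s)\, ds = \tfrac{t}{2}(g(0)+g(t)) - \tfrac{1}{2}\int_0^t s(t-s)\, g''(s)\, ds$ with $g(s) = \gradf(\cq_s)$; two applications of Hamilton's equations give $g''(s) = \jerkf_{\cq_s}[\cp_s,\cp_s,\cdot] - \hessf_{\cq_s}\gradf(\cq_s)$. Hence
$\cp_t - p_t = \tfrac{t}{2}\bigl(\gradf(q_t) - \gradf(\cq_t)\bigr) + \tfrac{1}{2}\int_0^t s(t-s)\, g''(s)\, ds.$
Apply $\matsnorm{\cdot}{2\ell}$ and invoke invariance: Lemma~\ref{lem:jerk_pp_norm_bound} bounds $\matsnorm{\jerkf_{\cq_s}[\cp_s,\cp_s,\cdot]}{2\ell}$ by the two jerk-norm terms, while Lemma~\ref{lem:grad_norm_bound} and $\|\hessf_{\cq_s}\|_2 \leq \Lparam$ give $\matsnorm{\hessf_{\cq_s}\gradf(\cq_s)}{2\ell} \leq \Lparam\Cl^{1/2}$. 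Integrating $s(t-s)$ over $[0,t]$ contributes $t^3/6$ and produces the three terms in the stated bound. The residual piece $\tfrac{t}{2}(\gradf(q_t) - \gradf(\cq_t))$ is controlled by $\Lparam$-smoothness combined with the just-proved position bound, yielding an $O(t^4 \Lparam^{3/2}\Cl^{1/2})$ contribution that is absorbed into $t^3 \Lparam\Cl^{1/2}$ under the standing step-size condition $t\Lparam^{1/2} \lesssim 1$ implied by the main theorem.

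\textbf{Main obstacle.} The only non-routine step is the repeated invocation of stationarity $(\cq_v,\cp_v)\sim \target\times\Normal(0,\Ind_\dims)$ along the continuous flow: this is what turns path-dependent objects such as $\hessf_{\cq_v}\cp_v$ and $\jerkf_{\cq_s}[\cp_s,\cp_s,\cdot]$ into $v$-independent stationary moments and is the reason the cubic rate $t^3$ (rather than merely $t^2$) is accessible from the gradient bounds of Lemma~\ref{lem:grad_norm_bound}. Beyond that, the argument is careful bookkeeping of Taylor and trapezoidal remainders combined with Eq.~\eqref{eq:random_vector_norm_integral} to move $\matsnorm{\cdot}{2\ell}$ past the iterated integrals.
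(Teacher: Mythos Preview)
Your proposal is correct and matches the paper's proof: the position bound is identical, and for the momentum bound your direct second-order trapezoidal remainder $\tfrac{1}{2}\int_0^t s(t-s)\,g''(s)\,ds$ is equivalent to the paper's route of writing $-\int_0^t(\tfrac{t}{2}-s)(\hessf_{\cq_s}\cp_s - \hessf_{\cq_0}\cp_0)\,ds$ and then invoking Lemma~\ref{lem:Hp_diff_bound} (which internally differentiates once more and applies the same Lemmas~\ref{lem:grad_norm_bound} and~\ref{lem:jerk_pp_norm_bound}). Both arguments hinge on the same key point---stationarity of $(\cq_s,\cp_s)$ along the continuous flow---and both absorb the residual $\tfrac{t}{2}(\gradf(q_t)-\gradf(\cq_t))$ via $\Lparam$-smoothness, the position bound, and the step-size condition $t^2\Lparam \leq 1$.
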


\subsection{Proof of Lemma~\ref{lem:single_leapfrog_acceptance_lower_bound}}
Lemma~\ref{lem:single_leapfrog_acceptance_lower_bound} controls the acceptance rate, which is the key technical novelty of this paper. It is proved via a careful treatment of the terms in the acceptance rate using the high dimensional concentration results in the above subsection. 

The Hamiltonian difference for a single leapfrog step is
\begin{align*}
  - f(q_\step) + f(q_0) - \frac12 \enorm{p_\step}^2 + \frac12 \enorm{p_0}^2.
\end{align*}
Let integer $\ell \geq 1$, we are interested in bounding $\matsnorm{- f(q_\step) + f(q_0) - \frac12 \enorm{p_\step}^2 + \frac12 \enorm{p_0}^2}{\ell}$.
Since the continuous HMC dynamics conserves the Hamiltonian, we have
\begin{align*}
  &\quad - f(q_\step) + f(q_0) - \frac12 \enorm{p_\step}^2 + \frac12 \enorm{p_0}^2 \\
  &= - f(q_\step) + f(\cq_\step) - \frac12 \enorm{p_\step}^2 + \frac12 \enorm{\cp_\step}^2\\
  &= \int_0^1 \gradf(\omega \cq_\step + (1-\omega)q_\step) \tp (\cq_\step - q_\step) d\omega +  \frac12 \parenth{\cp_\step + p_\step}\tp \parenth{\cp_\step - p_\step}
\end{align*}
According to Lemma~\ref{lem:qp_discrete_continuous_diff}, $\cq_\step - q_\step$ is approximately $\step^3\Lparam \dl^{\frac12}$ close and so as $\cp_\step - p_\step$. Hence, we have
\begin{align}
  \label{eq:single_leapfrog_cont_version_bigO}
  \matsnorm{- f(q_\step) + f(q_0) - \frac12 \enorm{p_\step}^2 + \frac12 \enorm{p_0}^2}{\ell} = \matsnorm{\gradf(\cq_\step) \tp (\cq_\step - q_\step) +  \cp_\step \tp \parenth{\cp_\step - p_\step}}{\ell} + O(\step^6 \Lparam^3 \dl).
\end{align}
It remains to Bound
\begin{align}
  \label{eq:single_leapfrog_cont_version}
  \gradf(\cq_\step) \tp (\cq_\step - q_\step) +  \cp_\step \tp \parenth{\cp_\step - p_\step}.
\end{align}
Recall from Eq.~\eqref{eq:continuous_HMC_form} and Eq.~\eqref{eq:def_q_0_step} that
\begin{align*}
  \cq_\step &= q_0 + \step p_0 - \int_0^\step \int_0^t \gradf(\cq_\tau) d\tau dt \\
  q_\step &= q_0 + \step p_0 - \frac{\step^2}{2} \gradf(q_0) \\
  \cp_\step &= p_0 - \int_0^\step \gradf(\cq_t) dt \\
  p_\step &= p_0 - \frac{\step}{2} \gradf(q_0) - \frac{\step}{2} \gradf(q_\step).
\end{align*}
We have that
\begin{align*}
  \cq_\step - q_\step &= - \int_0^\step \int_0^t (\gradf(\cq_s) - \gradf(\cq_0)) ds dt\\
  &= - \int_0^\step \int_0^t \int_0^s \hessf_{\cq_\tau} \cp_\tau d\tau ds dt \\
  \cp_\step - p_\step &= - \int_0^\step \gradf(\cq_t) - \frac12 \gradf(\cq_0) - \frac12 \gradf(\cq_\step) dt  -  \frac{\step}{2} \parenth{\gradf(\cq_\step) - \gradf(q_\step) } \\
  &= - \int_0^\step \parenth{\frac12 \step - t} \hessf_{\cq_t} \cp_t dt - \frac{\step}{2} \parenth{\gradf(\cq_\step) - \gradf(q_\step) }\\
  \cp_\step \tp (\cp_\step - p_\step) 
  &= \cp_\step \tp \parenth{- \int_0^\step \parenth{\frac12 \step - t} \hessf_{\cq_t} \cp_t dt - \frac{\step}{2} \hessf_{q_{\text{mid}}} \parenth{\cq_\step - q_\step}}
\end{align*}
where there exists $\omega \in [0,1]$ such that $q_{\text{mid}} = \omega \cq_\step + (1-\omega) q_\step$. 

In the first part of Eq.~\eqref{eq:single_leapfrog_cont_version}, we have
\begin{align*}
  &\quad \gradf(\cq_\step) \tp (\cq_\step - q_\step) \\
  &=  -\int_0^\step \int_0^t \int_0^s \gradf(\cq_\step) \tp \hessf_{\cq_\tau} \cp_\tau d\tau ds dt \\
  &= - \int_0^\step \int_0^t \int_0^s \parenth{\gradf(\cq_\step) - \gradf(\cq_\tau)} \tp \hessf_{\cq_\tau} \cp_\tau d\tau ds dt - \underbrace{\int_0^\step \int_0^t \int_0^s \gradf(\cq_\tau) \tp \hessf_{\cq_\tau} \cp_\tau d\tau ds dt}_{A_1 \lesssim \step^3 \ell^{\frac12} \Lparam^{\frac32} \dl^{\frac12} \text{ from Lem.~\ref{lem:gradHp}}} \\
  &= - \int_0^\step \int_0^t \int_0^s \parenth{\int_{\tau}^\step \hessf_{\cq_\iota} \cp_\iota d\iota} \tp \hessf_{\cq_\tau} \cp_\tau d\tau ds dt + A_1 \\
  &= - \int_0^\step \int_0^t \int_0^s \parenth{\int_{\tau}^\step \hessf_{\cq_\iota} \cp_\iota d\iota} \tp \parenth{\hessf_{\cq_\tau} \cp_\tau - \hessf_{\cq_0} \cp_0} d\tau ds dt  \\
  &\quad + \int_0^\step \int_0^t \int_0^s \parenth{\int_{\tau}^\step \parenth{\hessf_{\cq_\iota} \cp_\iota - \hessf_{\cq_0} \cp_0} d\iota} \tp  \hessf_{\cq_0} \cp_0 d\tau ds dt \\
  &\quad - \frac{3\step^4}{24} \enorm{\hessf_{\cq_0} \cp_0}^2 + A_1
\end{align*}
Applying the bound $\matsnorm{\hessf_{\cq_\tau} \cp_\tau - \hessf_{\cq_0} \cp_0}{\ell} \lesssim \tau \parenth{\SSC + 1} \ell^{\frac12} \Lparam^{\frac32} \dl^{\frac12}$ in Lemma~\ref{lem:Hp_diff_bound} and the bound $\matsnorm{\hessf_{\cq_0} \cp_0}{2\ell} \lesssim \Lparam \dl^{\frac12}$ in Lemma~\ref{lem:pHp_bound}, we obtain 
\begin{align}
  \label{eq:first_part_in_accept_reject_lem8}
  \matsnorm{\gradf(\cq_\step) \tp (\cq_\step - q_\step) + \frac{3\eta^4}{24} \enorm{\hessf_{\cq_0} \cp_0}^2}{\ell} =  O\parenth{\step^3 \ell^{\frac12} \Lparam^{\frac32} \dl^{\frac12} + \parenth{\SSC + 1} \step^5 \ell^{\frac12} \Lparam^{\frac{5}{2}} \dl }.
\end{align}

In the term $\cp_\step \tp (\cp_\step - p_\step)$, the first half is
\begin{align*}
  &\cp_\step \tp \int_0^\step \parenth{t- \frac12 \step} \hessf_{\cq_t} \cp_t dt \\
  &= \int_0^\step \parenth{t - \frac12 \step} \parenth{\cp_\step - \cp_t}  \tp \hessf_{\cq_t} \cp_t dt +  \int_0^\step \parenth{t- \frac12 \step} \cp_t \tp \hessf_{\cq_t} \cp_t dt \\
  &= \int_0^\step \parenth{t - \frac12 \step} \parenth{\cp_\step - \cp_t}  \tp \hessf_{\cq_t} \cp_t dt - \underbrace{\int_0^\step \parenth{t - \frac12 \step} \brackets{\cp_t \tp \hessf_{\cq_t} \cp_t - \cp_0 \tp \hessf_{\cq_0} \cp_0}}_{A_2 \lesssim \parenth{\SSC + 1} \step^3 \ell^{\frac32} \Lparam^{\frac32} \dl^{\frac12} \text{ from Lem.~\ref{lem:pHp_diff_bound}}} dt
\end{align*}
The rest of it
\begin{align*}
  &\quad \int_0^\step \parenth{t - \frac12 \step} \parenth{\cp_\step - \cp_t}  \tp \hessf_{\cq_t} \cp_t dt \\
  &=\int_0^\step \parenth{t - \frac12 \step}  \int_t^\step -\gradf(\cq_s) \tp ds  \hessf_{\cq_t} \cp_t dt  \\
  &=\int_0^\step \parenth{t - \frac12 \step}  \int_t^\step \parenth{\gradf(\cq_t) -\gradf(\cq_s)} \tp ds  \hessf_{\cq_t} \cp_t dt - \underbrace{\int_0^\step \parenth{t - \frac12 \step}  (\step - t) \gradf(\cq_t) \tp \hessf_{\cq_t} \cp_t dt}_{A_3 \lesssim \step^3 \ell^{\frac12} \Lparam^{\frac32} \dl^{\frac12} \text{ from Lem.~\ref{lem:gradHp}}} \\
  &= \int_0^\step \parenth{t - \frac12 \step}  \parenth{\int_t^\step \int_s^t \hessf_{\cq_\tau} \cp_\tau d\tau ds} \tp \hessf_{\cq_t} \cp_t dt + A_3
\end{align*}
Applying the bound $\matsnorm{\hessf_{\cq_\tau} \cp_\tau - \hessf_{\cq_0} \cp_0}{\ell} \lesssim \tau \parenth{\SSC + 1} \ell^{\frac12} \Lparam^{\frac32} \dl^{\frac12}$ in Lemma~\ref{lem:Hp_diff_bound} and the bound $\matsnorm{\hessf_{\cq_0} \cp_0}{2\ell} \lesssim \Lparam \dl^{\frac12}$ in Lemma~\ref{lem:pHp_bound}, we obtain 
\begin{align}
  \label{eq:second_part_first_half_in_accept_reject_lem8}
  \matsnorm{\cp_\step \tp \int_0^\step \parenth{t- \frac12 \step} \hessf_{\cq_t} \cp_t dt - \frac{\eta^4}{24} \enorm{\hessf_{\cq_0} \cp_0}^2}{\ell} =  O\parenth{\parenth{\SSC + 1}\parenth{ \step^3 \ell^{\frac32} \Lparam^{\frac32} \dl^{\frac12} + \step^5 \ell^{\frac12} \Lparam^{\frac{5}{2}} \dl} }.
\end{align}

In the term $\cp_\step \tp (\cp_\step - p_\step)$, the second half is
\begin{equation*}
  - \frac{\step}{2} \cp_\step\tp \hessf_{q_{\text{mid}}} (\cq_\step - q_\step) 
  =  \frac{\step}{2} \cp_\step\tp \hessf_{q_{\text{mid}}}  \int_0^\step \int_0^t \int_0^s \hessf_{\cq_\tau} \cp_\tau d\tau ds dt.    
\end{equation*}

\begin{align*}
  \enorm{\hessf_{q_{\text{mid}}} \cp_\step - \hessf_{\cq_\step} \cp_\step} &\leq \enorm{\hessf_{q_{\text{mid}}} - \hessf_{\cq_\step}} \enorm{\cp_\step} \\
  &\leq \vecnorm{\jerkf_\cdot}{\braces{12}\braces{3}} \enorm{q_{\text{mid}} - \cq_\step} \enorm{\cp_\step}\\
  &\leq \vecnorm{\jerkf_\cdot}{\braces{12}\braces{3}} \enorm{\cq_\step - q_\step} \enorm{\cp_\step}.
\end{align*}
Applying the bound $\matsnorm{\hessf_{\cq_\tau} \cp_\tau - \hessf_{\cq_0} \cp_0}{\ell} \lesssim \tau \parenth{\SSC + 1} \ell^{\frac12} \Lparam^{\frac32} \dl^{\frac12}$ in Lemma~\ref{lem:Hp_diff_bound}, the bound $\matsnorm{\hessf_{\cq_0} \cp_0}{2\ell} \lesssim \Lparam \dl^{\frac12}$ in Lemma~\ref{lem:pHp_bound} and the bound $\matsnorm{\cq_\step - q_\step}{\ell} \lesssim \step^3 \Lparam \dl^{\frac12}$ in Lemma~\ref{lem:qp_discrete_continuous_diff}, we obtain 
\begin{align}
  \label{eq:second_part_second_half_in_accept_reject_lem8}
  \matsnorm{- \frac{\step}{2} \cp_\step\tp \hessf_{q_{\text{mid}}} (\cq_\step - q_\step)- \frac{\eta^4}{12} \enorm{\hessf_{\cq_0} \cp_0}^2}{\ell} =  O\parenth{\parenth{\SSC + 1} \parenth{ \step^5 \ell^{\frac12} \Lparam^{\frac{5}{2}} \dl + \step^7 \Lparam^{\frac72} \dl^{\frac32}}}.
\end{align}
Overall, all the terms with $\enorm{\hessf_{\cq_0} \cp_0}^2$ gets perfectly cancelled, combining Eq.~\eqref{eq:single_leapfrog_cont_version_bigO},~\eqref{eq:first_part_in_accept_reject_lem8},~\eqref{eq:second_part_first_half_in_accept_reject_lem8} and~\eqref{eq:second_part_second_half_in_accept_reject_lem8}, we obtain that the Hamiltonian difference for a single leapfrog step is
\begin{align*}
  \matsnorm{- f(q_\step) + f(q_0) - \frac12 \enorm{p_\step}^2 + \frac12 \enorm{p_0}^2}{\ell} = 
  O\parenth{ \parenth{\SSC + 1} \parenth{ \step^3 \ell^{\frac32} \Lparam^{\frac32} \dl^{\frac12} + \step^5 \ell^{\frac12} \Lparam^{\frac{5}{2}} \dl + \step^7 \Lparam^{\frac72} \dl^{\frac32}}}. 
\end{align*}


\subsection{Proof of Lemma~\ref{lem:acceptance_lower_bound}}
\label{sub:proof_of_lem:acceptance_lower_bound}
The proof of Lemma~\ref{lem:acceptance_lower_bound} relies on the fact that $(q_k, p_k)$ is approximately distributed as $\target \times \Normal(0, \Ind_\dims)$ for any $k$. We are going to show this by induction over $k$. 

Let $\alpha = \frac{\log(2)}{K}$ be a constant. Define the events $W_k$ as follows
\begin{align*}
  W_k \defn \bigcap_{j=0}^k \braces{(q_0, p_0) \in \real^\dims \times \real^\dims \mid \abss{\Ham(q_{j+1}, p_{j+1}) - \Ham(q_j, p_j)} \leq \alpha}
\end{align*}
For any test function $h$, $k\geq 1$ and $\ell \geq 2$ even integer. Conditioned on $W_{k-1}$, we have
\begin{align*}
  \Ham(q_k, p_k) - \Ham(q_0, p_0) \leq \log(2).  
\end{align*}
Then 
\begin{align}
  \label{eq:h_conditional_prob}
  &\quad \Exs_{(q_0, p_0) \sim \mu} \braces{h(q_k, p_k)^\ell \mathbf{1}_{(q_0, p_0) \in W_{k-1} }}  \notag \\
  &= \int \int \braces{h(q_k, p_k)^\ell \mathbf{1}_{(q_0, p_0) \in W_{k-1} }} \mu(q_0, p_0) dq_0 dp_0 \notag \\
  &= \int \int \braces{h(q_k, p_k)^\ell \mathbf{1}_{(q_0, p_0) \in W_{k-1} }} e^{-\Ham(q_k, p_k)} e^{-\Ham(q_0, p_0) + \Ham(q_k, p_k)} dq_0 dp_0 \notag\\
  &\overset{(i)}{\leq}  2 \int \int \braces{h(q_k, p_k)^\ell \mathbf{1}_{(q_0, p_0) \in W_{k-1} }} e^{-\Ham(q_k, p_k)} dq_0 dp_0 \notag \\
  &\leq 2 \int \int \braces{h(q_k, p_k)^\ell } e^{-\Ham(q_k, p_k)} dq_0 dp_0 \notag\\
  &\overset{(ii)}{=}  2 \int \int \braces{h(q_k, p_k)^\ell } e^{-\Ham(q_k, p_k)} dq_k dp_k \notag \\
  &= 2 \Exs_{(q, p) \sim \mu} \braces{h(q, p)^\ell}.
\end{align} 
(i) follows from the previous Hamiltonian difference conditioned on $W_{k-1}$. (ii) makes use of the change of variable $(q_0, p_0) \to (q_k,p_k)$ and the Jacobian is 1. In words, the above derivation shows that conditioned on the event $W_{k-1}$, we can upper-bound the expectation of any quantities derived from $(q_k, p_k)$ as if $(q_k, p_k)$ were distributed as $\mu$, up to a factor of 2. 

Next with the step-size choices in the lemma statement, we prove by induction that $\Prob(W_{K-1})$ is large. We have
\begin{align*}
    \Prob_{(q_0,p_0)\sim \mu}(W_k^c) &= \Exs_{(q_0, p_0) \sim \mu}\braces{\mathbf{1}_{(q_0,p_0)\notin W_{k}}}\\
    & = \Exs_{(q_0, p_0) \sim \mu}\braces{\mathbf{1}_{(q_0,p_0)\notin W_{k}} \mathbf{1}_{(q_0,p_0)\in W_{k-1}}} + \Exs_{(q_0, p_0) \sim \mu}\braces{\mathbf{1}_{(q_0,p_0)\notin W_{k-1}}}\\
    &=\Exs_{(q_0, p_0) \sim \mu}\braces{\mathbf{1}_{ \abss{\mathcal H(q_k, p_k) - \mathcal H(q_{k-1}, p_{k-1})} > \alpha} \mathbf{1}_{(q_0,p_0)\in W_{k-1}}} + \Prob_{(q_0,p_0)\sim \mu}(W_{k-1}^c)\\
    &\leq 2\Exs_{(q, p) \sim \mu}\braces{\mathbf{1}_{\abss{\mathcal H(\F_1(q, p)) - \mathcal H(q, p)} > \alpha }} + \Prob_{(q_0,p_0)\sim \mu}(W_{k-1}^c)\\
    &= 2\Prob_{(q, p) \sim \mu}\parenth{\braces{ \abss{\mathcal H(\F_1(q, p)) - \mathcal H(q, p)} > \alpha}} + \Prob_{(q_0,p_0)\sim \mu}(W_{k-1}^c).
\end{align*}
Here $\F_1$ is the forward map that runs a single leapfrog step of HMC with length $\step$, defined below Eq.~\eqref{eq:leapfrog_HMC_recursion}. Applying Markov's inequality, we obtain 
\begin{align*}
  &\Prob_{(q, p) \sim \mu}\parenth{\braces{ \abss{\mathcal H(\F_1(q, p)) - \mathcal H(q, p)} > \alpha}} \\
  \leq &\frac{{\Exs_{(q_0, p_0) \sim \mu \times \Normal(0, \Ind_\dims)} \parenth{- f(q_\step) - \frac{1}{2} \enorm{p_\step}^2 + f(q_0) + \frac{1}{2} \enorm{p_0}^2}^{\ell}}}{\alpha^{\ell}} \\
  \overset{(i)}{\leq} &\frac{c^\ell (\SSC+1)^\ell\parenth{  \step^3 \ell^{\frac32} \Lparam^{\frac32} \dl^{\frac12} + \step^5 \ell^{\frac12} \Lparam^{\frac{5}{2}} \dl + \step^7 \Lparam^{\frac72} \dl^{\frac32}}^\ell}{\alpha^{\ell}}.
\end{align*}
(i) follows from Lemma~\ref{lem:single_leapfrog_acceptance_lower_bound}. We control the error probability as long as
\begin{align*}
  \frac{c \parenth{\SSC + 1} \parenth{ \step^3 \ell^{\frac32} \Lparam^{\frac32} \dl^{\frac12} + \step^5 \ell^{\frac12} \Lparam^{\frac{5}{2}} \dl +  \step^7 \Lparam^{\frac72} \dl^{\frac32}} }{\alpha} \leq \parenth{\frac{\delta}{2K}}^{\frac{1}{\ell}}.
\end{align*}
It is sufficient to take $\ell = 2\ceils{\log(1/\delta)}$ and 
\begin{align*}
  K^{1+\frac1\ell} \parenth{\SSC + 1} \parenth{ \step^3 \ell^{\frac32} \Lparam^{\frac32} \dl^{\frac12} +  \step^5 \ell^{\frac12} \Lparam^{\frac{5}{2}} \dl + \step^7 \Lparam^{\frac72} \dl^{\frac32}} \leq \frac{1}{c'},
\end{align*}
for a constant $c'$.

\section{Examples of functions with small strongly Hessian Lipschitz coefficient}
\label{sec:examples}
In this section, we discuss several target density examples which have small strongly Hessian Lipschitz coefficient. We start by introducing ridge separable functions.

\paragraph{Ridge separable functions:}
we say a function $f: \real^\dims \to \real$ is \textit{ridge separable}, if there exist an integer $n\geq 1$ and univariate functions $\{u_i\}_{i=1}^n$, vectors $\{a_i\}$ in $\real^\dims$ such that
\begin{align}
    f(\theta) = \sum_{i=1}^n u_i(a_i^\top \theta).\label{eq:ridgedef}
\end{align}
As discussed in~\cite{mou2019high}, ridge separability naturally arises in posterior sampling in Bayesian statistical models such as Bayesian ridge regression, generalized linear models, etc. Lemma~\ref{lem:ridgeseparable} bounds various norms of the third order derivatives tensor of $f$. Its proof is provided in Appendix~\ref{app:separableexamples}.

\begin{lemma}[Third derivative tensor bound for ridge separable $f$]
  \label{lem:ridgeseparable}
Given a ridge separable function $f$ defined in Equation~\eqref{eq:ridgedef}, assume that $a_i$'s are of unit norm. Suppose we have second and third order derivative bounds for $u_i$, i.e. $\sup_{z\in \real}|u''_i(z)| \leq \rho_2$  and $\sup_{z\in \real}|u_i'''(z)| \leq \rho_3$. Then,  
\begin{gather*}
    \vecnorm{\hessf(\theta)}{2} \leq n \rho_2,\\
    \vecnorm{\jerkf(\theta)}{\braces{12}\braces{3}} \leq \vecnorm{\jerkf(\theta)}{\braces{123}} \leq n \rho_3.
\end{gather*}
\end{lemma}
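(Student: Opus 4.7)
The plan is to compute the second and third derivatives of $f$ explicitly, exploit the rank-one structure of each summand, and then invoke the triangle inequality for the appropriate tensor norms together with the partition-norm monotonicity already recorded in the paper.

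First I would compute the derivatives directly. Since $f(\theta) = \sum_{i=1}^n u_i(a_i\tp \theta)$, the chain rule gives
\begin{align*}
  \nabla^2 f(\theta) &= \sum_{i=1}^n u_i''(a_i\tp \theta)\, a_i a_i\tp,\\
  \nabla^3 f(\theta) &= \sum_{i=1}^n u_i'''(a_i\tp \theta)\, a_i \otimes a_i \otimes a_i,
\end{align*}
where $a_i\otimes a_i \otimes a_i$ denotes the symmetric rank-one $3$-tensor with entries $(a_i)_j (a_i)_k (a_i)_\ell$.

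For the Hessian bound, I would use the triangle inequality for the spectral norm together with the identity $\vecnorm{a_i a_i\tp}{2} = \enorm{a_i}^2 = 1$ and the assumed pointwise bound $|u_i''| \leq \rho_2$. This immediately gives $\vecnorm{\hessf(\theta)}{2} \leq \sum_{i=1}^n \rho_2 = n\rho_2$.

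For the third-derivative bound, I would first handle the $\{123\}$-norm, which by the definition~\eqref{eq:def_multi_index_norm} taken with $k=1$ is exactly the Frobenius norm of $\jerkf(\theta)$ viewed as a flattened vector. The triangle inequality for this Frobenius-type norm yields
\begin{align*}
  \vecnorm{\jerkf(\theta)}{\braces{123}} \leq \sum_{i=1}^n |u_i'''(a_i\tp \theta)| \cdot \vecnorm{a_i \otimes a_i \otimes a_i}{\braces{123}},
\end{align*}
and since $\vecnorm{a_i \otimes a_i \otimes a_i}{\braces{123}}^2 = \sum_{j,k,\ell}(a_i)_j^2 (a_i)_k^2 (a_i)_\ell^2 = \enorm{a_i}^6 = 1$, combined with $|u_i'''| \leq \rho_3$ this gives $\vecnorm{\jerkf(\theta)}{\braces{123}} \leq n\rho_3$. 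The bound on the $\braces{12}\braces{3}$-norm is then an immediate consequence of the partition-norm monotonicity already established in Section~\ref{sec:preliminaries} (the statement that $\vecnorm{A}{\mathcal{P}} \leq \vecnorm{A}{[3]}$ for any partition $\mathcal{P}$ of $[3]$).

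There is no substantial obstacle here: the proof is essentially a two-line calculation once one recognizes that each summand is a rank-one tensor with unit $\{123\}$-norm. The only point to verify carefully is that the triangle inequality indeed holds for the $\{123\}$-norm, which follows because that norm coincides with a Euclidean norm on $\real^{d^3}$.
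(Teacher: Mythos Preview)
Your proposal is correct and matches the paper's own proof almost line for line: both compute the derivatives explicitly, use the triangle inequality with $\vecnorm{a_i^{\otimes 3}}{\braces{123}} = \enorm{a_i}^3 = 1$ for the $\braces{123}$-norm, and then invoke the partition-norm monotonicity $\vecnorm{A}{\braces{12}\braces{3}} \leq \vecnorm{A}{\braces{123}}$ to finish. The paper additionally writes out the $\braces{1}\braces{2}\braces{3}$-norm bound, but that is not needed for the stated lemma, so your version is in fact slightly more streamlined.
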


Note that $n$ can be much smaller than $d$ in practice. When the second and third order derivatives of the activation are bounded as above, then $f$ is $\gamma L^{3/2}$-strongly Hessian Lipschitz for small $\gamma$, which means that using $K > 1$ in HMC is beneficial in reducing gradient complexity, as we discussed around Corollary~\ref{cor:best_HMC}.  
To see this argument quantitatively, we introduce a corollary that bounds the gradient complexity of HMC when $f$ is ridge separable and the corresponding density satisfies an isoperimetric inequality.

\begin{corollary}\label{cor:ridgesep}
Let $f$ be a ridge separable function defined in Equation~\eqref{eq:ridgedef}. Given that the target measure $\mu$ with density $\propto e^{-f}$ satisfies an isoperimetric inequality with constant $\psi_\mu$ and satisfies the derivative bounds in Lemma~\ref{lem:ridgeseparable}, then the gradient complexity of HMC with the optimal choices of free parameters as in Corollary~\ref{cor:best_HMC} is $\Ot\parenth{d^{\frac14}\brackets{n\rho_2 +(n\rho_3)^{\frac23}} {\psi}^{-2}_\mu}$.
\end{corollary}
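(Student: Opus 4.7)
The plan is to directly plug the bounds from Lemma~\ref{lem:ridgeseparable} into the parameters of Corollary~\ref{cor:best_HMC} and identify the effective smoothness constant $\Lparam$ and the strongly Hessian Lipschitz coefficient $\SSC$ for the ridge separable case. First, I would read off from Lemma~\ref{lem:ridgeseparable} that $f$ is $\Lparam$-smooth with $\Lparam = n\rho_2$, since $\vecnorm{\hessf(\theta)}{2} \leq n\rho_2$ for all $\theta$. Second, I would interpret the third-derivative bound $\vecnorm{\jerkf(\theta)}{\braces{12}\braces{3}} \leq n\rho_3$ as the strongly Hessian Lipschitz assumption in Subsection~\ref{subsec:regularityprop}: by definition this bound equals $\SSC \Lparam^{\frac32}$, so
\begin{align*}
    \SSC = \frac{n\rho_3}{\Lparam^{\frac32}} = \frac{n\rho_3}{(n\rho_2)^{\frac32}} = \frac{\rho_3}{n^{\frac12}\rho_2^{\frac32}}.
\end{align*}

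Next, I would substitute these expressions into the gradient complexity statement of Corollary~\ref{cor:best_HMC}, which reads $\tilde{O}\!\left(\Lparam(\SSC+1)\dims^{\frac14}\isop_\target^{-2}\right)$ up to logarithmic factors. Computing the key product gives
\begin{align*}
    \Lparam(\SSC + 1) \;=\; n\rho_2\cdot\left(\frac{\rho_3}{n^{\frac12}\rho_2^{\frac32}} + 1\right) \;=\; n^{\frac12}\rho_3\rho_2^{-\frac12} + n\rho_2,
\end{align*}
which is exactly the factor appearing in the corollary's claim. Combining with the $\dims^{\frac14}$ dimension factor and the $\isop_\target^{-2}$ isoperimetric factor yields the asserted bound $\tilde{O}(\dims^{\frac14}(n^{\frac12}\rho_3\rho_2^{-\frac12} + n\rho_2)\isop_\target^{-2})$.

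The one subtlety I would need to address carefully is that Corollary~\ref{cor:best_HMC} was stated under the assumption that $\SSC$ is of order a universal constant. In the ridge separable setting, $\SSC = \rho_3/(n^{\frac12}\rho_2^{\frac32})$ need not be constant, so strictly speaking I should appeal directly to Theorem~\ref{thm:HMC_main} rather than to its corollary, verify that the parameter choices $\step^2 \asymp \Lparam^{-1}\dims^{-\frac12}(\SSC+1)^{-\frac23}$ and $K \asymp \Lparam^{-\frac12}\step^{-1}(\SSC+1)^{-\frac13}$ still satisfy both conditions $K\step\Lparam^{\frac12}\leq \frac{1}{2\SSC^{\frac13}}$ and the more involved bound on $K(\step^3\Lparam^{\frac32}\dl^{\frac12} + \ldots)$ in Theorem~\ref{thm:HMC_main}, and then trace through the mixing time $\tilde{O}((K\step\isop_\target)^{-2})$ and per-iteration cost $K$. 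The verification is bookkeeping rather than new mathematics, but it is the only step that requires care; once done, the stated dependence follows directly from the substitutions above.
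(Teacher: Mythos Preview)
Your proposal is correct and follows essentially the same approach as the paper: identify $\Lparam = n\rho_2$ and $\SSC = \rho_3/(n^{1/2}\rho_2^{3/2})$ from Lemma~\ref{lem:ridgeseparable}, then substitute into the gradient complexity expression of Corollary~\ref{cor:best_HMC}. The paper's own proof is in fact terser than yours and simply invokes Corollary~\ref{cor:best_HMC} directly without addressing the subtlety you raise about the constant-$\SSC$ hypothesis; your observation that one should in principle trace back through Theorem~\ref{thm:HMC_main} is a fair point, though the paper does not do so.
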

\begin{proof}
    Note that from Lemma~\ref{lem:ridgeseparable}, $f$ is $L$ smooth and $\gamma L^{\frac{3}{2}}$-strongly Hessian Lipschitz with $L = n\rho_2$ and $\gamma = (n\rho_3)/(n\rho_2)^{3/2} = \rho_3/(n^{1/2}\rho_2^{3/2})$. The mixing time then follows directly from Corollary~\ref{cor:best_HMC}.
\end{proof}
Next, we present an example of a posterior distribution that arises in  Bayesian logistic regression where its log density is ridge separable.

\begin{lemma}[Logistic regression]\label{lem:logisticreg}
    Consider the logistic regression problem.
    We observe a set of data points $\braces{(x_i, y_i)}_{i=1}^n$ where $x_i \in \real^d$ and $y_i \in \braces{0,1}$. For simplicity, assume $\enorm{x_i} = 1, \ \forall i\in [n]$. The logistic loss is defined as
    \begin{align*}
        &\ell(z,1) \defn \log{\big(1 + \exp{\parenth{-z}}\big)},\\
        &\ell(z,0) \defn z +\log{\big(1 + \exp{\parenth{-z}}\big)}.
    \end{align*}
    The total loss on the dataset is given by
    \begin{align*}
        \mathcal L(\theta) \coloneqq 
        \sum_{i=1}^n \ell(\theta^\top x_i, y_i).
    \end{align*}
    Given $\braces{(x_i, y_i)}_{i=1}^n$, we would like to sample from the posterior distribution $\mu$ on $\theta$ with likelihood $\propto e^{-\mathcal L(\theta)}$ and prior $\Normal(0, \alpha^{-2} \Ind_\dims)$. That is, 
    \begin{align*}
      \mu(\theta) \propto e^{-\mathcal L(\theta) -\frac{\alpha^2}{2} \enorm{\theta}^2}.
    \end{align*} 
    Then from a warm start, HMC has gradient complexity $\tilde O(\dims^{\frac14} (n + \alpha^2)/\alpha^2)$.
\end{lemma}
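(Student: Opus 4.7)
The plan is to directly apply Corollary~\ref{cor:best_HMC} to the potential $f(\theta) = \mathcal L(\theta) + \frac{\alpha^2}{2}\enorm{\theta}^2$, after reading off the smoothness, strongly Hessian Lipschitz, and isoperimetric constants from the ridge separable structure of $\mathcal L$ plus the Gaussian prior contribution. The slogan is: the quadratic prior contributes $\alpha^2$ to the smoothness and to the strong convexity, but \emph{nothing} to the third derivative, so the Hessian Lipschitz coefficient remains controlled by $\mathcal L$ alone.

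First I would bound the scalar derivatives of the logistic loss. Writing $\sigma(z) = 1/(1+e^{-z})$, one has $\ell''(z, y) = \sigma(z)(1-\sigma(z))$ and $\ell'''(z, y) = \sigma(z)(1-\sigma(z))(1-2\sigma(z))$, so $\rho_2 \defn \sup_z |\ell''(z,y)| \leq 1/4$ and $\rho_3 \defn \sup_z |\ell'''(z,y)| = O(1)$. Second, since $\hessf(\theta) = \sum_{i=1}^n \ell''(x_i\tp \theta, y_i) x_i x_i\tp + \alpha^2 \Ind_\dims$ and $\enorm{x_i} = 1$, we get $\Lparam \defn \vecnorm{\hessf}{2} \leq n\rho_2 + \alpha^2 = O(n + \alpha^2)$. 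Third, since $\jerkf = \nabla^3 \mathcal L$ (the quadratic prior has zero third derivative), Lemma~\ref{lem:ridgeseparable} applied to $\mathcal L$ gives $\vecnorm{\jerkf_\theta}{\{12\}\{3\}} \leq n\rho_3 = O(n)$. Writing this bound as $\SSC \Lparam^{3/2}$, we extract
\begin{align*}
\SSC = O\!\parenth{\frac{n}{(n+\alpha^2)^{3/2}}},
\end{align*}
which is bounded by a universal constant in every regime (when $\alpha^2 \lesssim n$ we get $\SSC = O(n^{-1/2})$; when $\alpha^2 \gtrsim n$ we get $\SSC = O(n/\alpha^3)$, both $O(1)$).

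Fourth, since $\mathcal L$ is convex (the logistic loss is convex in $z$) and the prior is $\alpha^2$-strongly convex, $f$ is $\alpha^2$-strongly convex, hence $\mu$ is $\alpha^2$-strongly log-concave and therefore satisfies a Cheeger isoperimetric inequality with coefficient $\isop_\target \geq \log(2)\,\alpha$ (the constant used throughout the paper in the strongly log-concave case). Plugging $\Lparam = O(n + \alpha^2)$, $\SSC = O(1)$, and $\isop_\target = \Omega(\alpha)$ into the gradient complexity bound of Corollary~\ref{cor:best_HMC} yields
\begin{align*}
\tilde{O}\!\parenth{\frac{\Lparam (\SSC + 1) \dims^{1/4}}{\isop_\target^2}} = \tilde{O}\!\parenth{\dims^{1/4} \cdot \frac{n+\alpha^2}{\alpha^2}},
\end{align*}
as claimed. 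The only step that requires any care is checking that $\SSC$ stays $O(1)$ across both regimes of $\alpha$ relative to $\sqrt{n}$, but this is the algebra above; everything else is a direct substitution into Corollary~\ref{cor:best_HMC}.
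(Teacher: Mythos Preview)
Your proposal is correct and follows essentially the same approach as the paper: bound the scalar second and third derivatives of the logistic loss, invoke the ridge-separable tensor bounds (Lemma~\ref{lem:ridgeseparable}) for $\mathcal L$, observe that the quadratic prior adds $\alpha^2$ to smoothness and strong convexity but nothing to the third derivative, and substitute $L=O(n+\alpha^2)$, $\SSC=O(1)$, $\psi_\mu=\Omega(\alpha)$ into the main HMC complexity bound. The only cosmetic difference is that the paper routes the final substitution through Corollary~\ref{cor:ridgesep} rather than Corollary~\ref{cor:best_HMC} directly; your direct route is if anything cleaner, since it transparently accounts for the prior's contribution to $L$.
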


Finally, we consider the problem of posterior sampling for two layer neural networks with fixed second-layer weights. This is another example where the log-density is ridge separable. 
\paragraph{Two-layer neural networks:} Let $f_{NN}$ denote the neural network function
\begin{align}
    f_{NN}(x) &\defn \sum_{j=1}^m w_j\sigma(\theta_j^\top x),\label{eq:twolayernetwork}
\end{align}
where $\sigma$ is an activation function $\real \to \real$. $\braces{w_j}_{j=1}^m$ are fixed weights in $\real$ and $\{\theta_j\}_{j=1}^m$ are the parameters where for any $j$, $\theta_j \in \mathbb R^{d'}$. Given a set of data points $\braces{(x_i, y_i)}_{i=1}^n$, the risk function under the squared loss is 
\begin{align}
  \label{eq:twolayerloss}
  \mathcal L(\theta) \defn \sum_{i=1}^n \parenth{y_i - f_{NN}(x_i)}^2.
\end{align}
Note that $f_{NN}$ is implicitly a function of $\theta$.
Lemma~\ref{lem:twolayernet} bounds the second and third derivatives of $\mathcal L$ for this two-layer network.
\begin{lemma}[Two layer neural  network]\label{lem:twolayernet}
    Consider the two-layer neural network model defined in Equation~\eqref{eq:twolayernetwork} with weights $\abss{w_j} \leq 1, \forall j \in [m]$ and centered data points having unit norm $\enorm{x_i} = 1, \forall i \in [n]$. Suppose that for the activation function $\sigma$, there exists a constant $c>0$ such that the derivatives up to third order are bounded as follows
    \begin{align*}
|\sigma(z)|,|\sigma'(z)|,|\sigma''(z)|,|\sigma'''(z)| \leq c, \forall z \in \real.
    \end{align*}
    Suppose that the labels are also bounded as $\abss{y_i} \leq mc, \forall i \in [n]$, 
Then the second and third order derivative tensors of $\mathcal L$ are bounded as follows

\begin{equation}\label{eq:derivativebounds}
\begin{aligned}
&\vecnorm{\nabla^2 \mathcal L(\theta)}{2} \lesssim mnc^2,\quad
\matsnorm{\nabla^3 \mathcal L(\theta)}{\braces{1}\braces{2}\braces{3}} \lesssim mnc^2\\
&\matsnorm{\nabla^3 \mathcal L(\theta)}{\braces{12}\braces{3}} \leq \matsnorm{\nabla^3 \mathcal L(\theta)}{\braces{123}} \lesssim m^2 nc^2.
\end{aligned}
\end{equation}

\end{lemma}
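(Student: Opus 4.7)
The plan is to expand $\mathcal{L}(\theta) = \sum_{i=1}^n r_i(\theta)^2$ with $r_i(\theta) \defn y_i - \sum_{j=1}^m w_j \sigma(\theta_j^\top x_i)$ and analyze each derivative of $\mathcal{L}$ via the product rule applied to $r_i^2$, combined with the block-sparse structure of $\nabla^k r_i$. Since $\theta_j$ enters $r_i$ only through $\sigma(\theta_j^\top x_i)$, the derivatives of $r_i$ are block diagonal in the partition $\theta = (\theta_1,\ldots,\theta_m)$: the $(j,j)$-block of $\nabla^2 r_i$ is $-w_j \sigma''(\theta_j^\top x_i)\, x_i x_i^\top$, and the only nonvanishing blocks of $\nabla^3 r_i$ are the $(j,j,j)$-blocks $-w_j \sigma'''(\theta_j^\top x_i)\, x_i^{\otimes 3}$. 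Using $\enorm{x_i}=1$, $|w_j|\leq 1$, $|y_i|\leq mc$, and $|\sigma^{(k)}|\leq c$, I would first record the per-sample bounds $|r_i|\leq 2mc$, $\enorm{\nabla r_i}\leq c\sqrt{m}$, $\vecnorm{\nabla^2 r_i}{2}\leq c$, $\vecnorm{\nabla^2 r_i}{F}\leq c\sqrt{m}$, $\vecnorm{\nabla^3 r_i}{\braces{123}}\leq c\sqrt{m}$, and $\vecnorm{\nabla^3 r_i}{\braces{12}\braces{3}}\leq c$.

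For the Hessian bound, I would write $\nabla^2 \mathcal{L} = 2\sum_i \nabla r_i (\nabla r_i)^\top + 2\sum_i r_i \nabla^2 r_i$ and use the triangle inequality: the two sums have spectral norm at most $2\sum_i \enorm{\nabla r_i}^2 \leq 2nmc^2$ and $2\sum_i |r_i|\vecnorm{\nabla^2 r_i}{2} \leq 4mnc^2$ respectively, giving $\vecnorm{\nabla^2 \mathcal{L}}{2} \lesssim mnc^2$. For the Frobenius bound on $\nabla^3 \mathcal{L}$, I would decompose it into three symmetric permutations of $2\, \nabla^2 r_i \otimes \nabla r_i$ plus $2\, r_i \nabla^3 r_i$; the identity $\vecnorm{A\otimes B}{F}=\vecnorm{A}{F}\vecnorm{B}{F}$ together with the triangle inequality gives $O(nmc^2)$ per symmetric permutation and $O(mn\sqrt{m}c^2)$ for the last term, yielding $\vecnorm{\nabla^3 \mathcal L}{\braces{123}}\lesssim mn\sqrt{m}c^2$.

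For the $\braces{12}\braces{3}$-norm, I would fix $z$ with $\enorm{z}\leq 1$ and bound $\vecnorm{\nabla^3 \mathcal L[\cdot,\cdot,z]}{F}$. The key observation is that after slicing against $z$, each symmetric $\nabla^2 r_i \otimes \nabla r_i$ contribution and the $\sum_i r_i \nabla^3 r_i$ term remains block diagonal with rank-one blocks of the form $-w_j \sum_i \alpha_{ij}(z)\, x_i x_i^\top$, where $\alpha_{ij}$ depends linearly on $z$ through either $(z_j^\top x_i)$ or $(\nabla r_i)^\top z$. For such sums, the identity $\vecnorm{\sum_i \alpha_{ij}\, x_i x_i^\top}{F}^2 = \alpha_{\cdot j}^\top K \alpha_{\cdot j}$ with $K_{ii'} = \langle x_i, x_{i'}\rangle^2$ gives $\vecnorm{\sum_i \alpha_{ij}\, x_i x_i^\top}{F}^2 \leq \vecnorm{K}{2}\enorm{\alpha_{\cdot j}}^2 \leq n\enorm{\alpha_{\cdot j}}^2$ since $\vecnorm{K}{2}\leq \trace(K) = n$. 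Combining this with the data-covariance bound $\sum_i (z_j^\top x_i)^2 \leq \beta \enorm{z_j}^2$ and, by expanding and applying Cauchy--Schwarz, $\sum_i ((\nabla r_i)^\top z)^2 \leq mc^2 \beta \enorm{z}^2$, I would obtain per-term bounds of order $mc^2\sqrt{n\beta}$, which is dominated by the claimed $(m\sqrt{n}(1+\sqrt\beta) + n\sqrt{m})c^2$ since $\sqrt{n\beta}\leq \sqrt{n}(1+\sqrt\beta)$ whenever $\beta\geq 0$.

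The main obstacle will be the bookkeeping in the $\braces{12}\braces{3}$-norm step, where the three symmetric permutations of $\nabla^2 r_i \otimes \nabla r_i$ require slightly different treatments: the permutation with the Hessian on indices $\braces{1,2}$ gives a slice $\sum_i ((\nabla r_i)^\top z)\, \nabla^2 r_i$ (amenable to the rank-one identity directly), while the other two yield $\sum_i (\nabla^2 r_i\, z)\, (\nabla r_i)^\top$ or its transpose, whose Frobenius norms I would bound via $\vecnorm{UV^\top}{F}\leq \vecnorm{U}{F}\vecnorm{V}{2}$ after stacking $\nabla^2 r_i\, z$ and $\nabla r_i$ as columns. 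Controlling $\vecnorm{V}{2}^2 = \vecnorm{\sum_i \nabla r_i (\nabla r_i)^\top}{2} \leq mc^2 \beta$ will require the same expansion through the data spectral bound $\beta$ as above, and once the block-diagonal structure is exploited, the rank-one identity will deliver the claim cleanly.
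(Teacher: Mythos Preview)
Your proposal is correct and uses the same decomposition as the paper: both write $\nabla^3\mathcal{L}$ as $2\sum_i r_i\nabla^3 r_i$ (the paper's $A_1$) plus the symmetrized cross terms from $\nabla^2 r_i\otimes\nabla r_i$ (the paper's $A_2$), and both exploit the block-diagonal structure in the neuron index $j$. The Hessian and $\braces{123}$ bounds proceed identically.

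The one point of divergence is the $\braces{12}\braces{3}$-norm of the cross terms. The paper handles $A_2$ by direct Cauchy--Schwarz without invoking $\beta$, obtaining $m\sqrt{n}c^2+n\sqrt{m}c^2$; the covariance bound $\beta$ appears only in its treatment of $A_1$. Your route instead threads $\beta$ through all terms via the Gram-matrix identity $\vecnorm{\sum_i\alpha_{ij}x_ix_i^\top}{F}^2=\alpha_{\cdot j}^\top K\alpha_{\cdot j}$ with $\vecnorm{K}{2}\leq\trace(K)=n$, together with the factorization $\vecnorm{UV^\top}{F}\leq\vecnorm{U}{F}\vecnorm{V}{2}$ for the two permutations that are not block-diagonal after slicing. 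This is slightly different bookkeeping that in fact delivers the sharper intermediate bound $O(mc^2\sqrt{n\beta}+c^2\sqrt{m}\beta)$, still dominated by the lemma's stated $(m\sqrt{n}(1+\sqrt\beta)+n\sqrt{m})c^2$ since $\beta\leq n$. Both arguments are valid; yours is a bit more systematic and can be tighter when $\beta\ll n$.
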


Note that $\theta$ has a dimension of $md'$, thus a na\"ive upper bound of
$\matsnorm{\nabla^3 \Law}{\braces{12}\braces{3}}$ by $\matsnorm{\nabla^3 \Law}{\braces{1}\braces{2}\braces{3}}$ would lose a factor of $\sqrt{md'}$. The bound in Eq.~\eqref{eq:derivativebounds} is tighter when $d'\gg m$.

Using the upper bounds in~\eqref{eq:derivativebounds} we get that $\mathcal{L}$ is $L$-smooth and $\gamma L^{3/2}$-strongly Hessian Lipschitz with $\SSC = O(1)$ assuming $c\geq 1$ and $m\leq n$. In this case, since $\SSC=O(1)$, according to our mixing time upper bound in Corollary~\ref{cor:best_HMC}, to sample from the distribution $e^{-\mathcal{L}(\theta)}$, HMC with $K > 1$ reduces the gradient complexity when compared to the choice $K=1$. We don't state an explicit mixing time bound because bounding the isoperimetric coefficient of posterior distribution $e^{-\mathcal{L}(\theta)}$ that arises from the two-layer network model is difficult in general.  It is worth noting that in the mean field approximation where $m$ tends to infinity, logarithmic Sobolev Inequalities were derived for the asymptotic distribution of a single weight in~\cite{chizat2022meanfield,zhang2022mean}. Satisfying a logarithmic Sobolev inequality implies satisfying an isoperimetric inequality. However, it is not clear whether any reasonable dimension-independent isoperimetric coefficient can be obtained via the same strategy for the joint distribution of weights in finite $m$ scenarios.


\bibliographystyle{alpha}
\bibliography{ref}

\newpage
\appendix

\section{Additional proofs for the proposal overlap}
\label{app:add_proofs_proposal}

In order to show that for any $q_0\in\real^d$, the inverse map $q\mapsto\Go(q_0,q)$ of $p\mapsto\Fo(q_0,p)$ is well-defined, we need the following lemma.

\begin{lemma}[Hadamard, see also in \cite{gordon_diffeomorphisms_1972}]\label{lem:hadamard_reverse}
A $C^1$-map $f$ from $\real^n$ to $\real^n$ is a diffeomorphism if and only if f is
proper and the Jacobian $\det (\partial f_i/\partial x_j)$ never vanishes. 
\end{lemma}
$f$ is said to be proper if $f^{-1}(K)$ is compact for any compact set $K$. In Euclidean space, it is equivalent to require $f$ to be continuous and $|f(x)|\to\infty$ as  $|x|\to \infty$. Now we provide the proof of Lemma \ref{lem:J_F_p_bound_better}.

\begin{proof}[Proof of Lemma~\ref{lem:J_F_p_bound_better}]
  $\Fo_j$ satisfies
  \begin{align*}
    \Fo_j(q, p) &= q + j \step p - \frac{j\step^2}{2} \gradf(q) - \step^2 \sum_{\ell=1}^{{j-1}} (j-\ell) \gradf(\Fo_\ell(q, p)).
  \end{align*}
  Taking one derivative with respect to $q$, we obtain 
  \begin{align*}
      \D_1 \Fo_j(q,p)=\Ind_\dims -\frac{j\step^2}{2}\hessf_q-\step^2 \sum_{\ell=1}^{j-1} (j-\ell)\hessf_{\Fo_\ell(q,p)} \D_1 \Fo_\ell(q,p)
  \end{align*}
  Taking one derivative with respect to $p$, we obtain
  \begin{align*}
    \D_2 \Fo_j(q, p) &= j\step \Ind_\dims - \step^2 \sum_{\ell=1}^{j-1} (j-\ell) \hessf_{\Fo_\ell(q, p)} \D_2 \Fo_\ell(q, p).
  \end{align*}
  By induction, we conclude that
  \begin{align*}
    \enorm{\D_1 \Fo_j(q,p)-\Ind_\dims}\leq j^2 \step^2 \Lparam,\quad 
    \enorm{\D_2 \Fo_j(q, p) - j\step \Ind_\dims} \leq j^3  \step^3 \Lparam.
  \end{align*}

Given that we assumed $K\step L^{\frac12}\leq \frac14$, we have $\vecnorm{\D_2\Fo(q,p)-K\step\Ind_d}{2}\leq\frac1{16}{K\step}$. So the Jacobian of $p\mapsto \Fo(q,p)$ never vanishes for fixed $q\in\real^d$. By this bound on the Jacobian, we also have that $\vecnorm{\Fo(q,p)}{2}\to\infty$ as $\enorm{p}\to \infty$. 

As a result, Lemma \ref{lem:hadamard_reverse} ensures that for fixed $q\in\real^d$, the mapping $p\mapsto \Fo(q,p)$ is a bijection from $\real^n$ to $\real^n$. We denote its inverse by $\tilq \mapsto \Go(q,\tilq)$. Moreover, $\D_2\Go(q,\cdot)$ exists and is continuous.

For any $x,y\in\real^\dims$ we have $\Go(x, \Fo(x,y))=y$, take the derivative with respect to $y$
  \begin{align}
    \label{eq:JG_vs_JF}
    \D_2 \Go(x, \Fo(x, y)) \ \D_2 \Fo(x, y) = \Ind_\dims.
  \end{align}
  Thus, we have $\enorm{\D_2\Go(x,y)-\frac{1}{K\step}\Ind_d}\leq\frac{1}{15}\frac{1}{K\step}$. 
  Similarly, since 
  $\Fo(x, \Go(x, y)) = y$, take derivative with respect to $x$,
  \begin{align*}
      \D_1\Fo(x,\Go(x,y))+\D_2 \Fo(x,\Go(x,y))\D_1 \Go(x,y)=0,
  \end{align*}
so $\enorm{\D_1 \Go(x,y)}=\enorm{-\D_2 \Fo(x,\Go(x,y))^{-1}\D_1\Fo(x,\Go(x,y))}\leq \frac{16}{15K\step}\cdot\frac{17}{16}=\frac{17}{15K\step}$.
\end{proof}

\begin{proof}[Proof of Lemma~\ref{lem:Fj_q_p_bound}]
  For $1 \leq j \leq K$, we have
  \begin{align*}
    \Fo_j(q, p) = q + j\step p - \frac{j\step^2}{2} \gradf(\cq) - \step^2 \sum_{\ell=1}^{j-1} (j-\ell) \gradf(\Fo_\ell(q, p)). 
  \end{align*}
  We prove the result by induction. For $j=1$, we have
  \begin{align*}
    \enorm{\Fo_1(q, p) - \Fo_1(\tilq, \tilp)} &\leq \enorm{q - \tilq} + \step \enorm{p - \tilp} + \frac12 \step^2 \Lparam \enorm{q - \tilq} \\
    &\leq 2 \enorm{q - \tilq} + \step \enorm{p - \tilp}.
  \end{align*}
  Suppose the result is verified up to $j-1$, for $j$, we have
  \begin{align*}
    &\quad \enorm{\Fo_j(q, p) - \Fo_j(\tilq, \tilp)} \\
    & \leq \enorm{q - \tilq} + j \step \enorm{p - \tilp} + \frac12 j \step^2 \Lparam \enorm{q - \tilq} + \step^2\sum_{\ell=1}^{j-1} (j-\ell) \parenth{2 \enorm{q - \tilq} + 2 \ell \step \enorm{p - \tilp} } \\
    &\leq 2 \enorm{q - \tilq} + 2 j \step \enorm{p - \tilp}.
  \end{align*}
\end{proof}

Before proving Lemma \ref{lem:trace_logdet}, we need the following bound on the operator norms of the inverse matrices.

\begin{lemma}\label{lem:inverse_enorm_control}
    Assume $A\in\real^{d\times d}$ is an invertible matrix, and $\enorm{A-\Ind_d}\leq \alpha$ for some $\alpha\in (0,1)$, then we have
    \begin{equation*}
        \enorm{A^{-1}-\Ind_d}\leq \frac{\alpha}{1-\alpha}
    \end{equation*}
\end{lemma}
\begin{proof}[Proof of Lemma \ref{lem:inverse_enorm_control}]
By assumption we have 
\begin{equation*}
\begin{aligned}
    \alpha \enorm{A^{-1}-\Ind_d}&\geq \enorm{A-\Ind_d}\enorm{A^{-1}-\Ind_d}\\
    &\geq \enorm{(A-\Ind_d)(A^{-1}-\Ind_d)}\\
    &\geq \enorm{2\Ind_d-A-A^{-1}}\\
    &\geq \enorm{A^{-1}-\Ind_d}-\enorm{A-\Ind_d},
\end{aligned}
\end{equation*}
and the inequality follows.
\end{proof}
Now we can bound the log-determinant and trace term together in Lemma \ref{lem:trace_logdet}.
\begin{proof}[Proof of Lemma \ref{lem:trace_logdet}]
We define a real-valued function $f$ on $[0,1]$ by 
\begin{equation*}
    f(s)\defn -\log\det(\Ind_d-sA)-s\trace(A)-2s^2\trace\parenth{AA\tp}.
\end{equation*}
Take first-order derivative,
\begin{equation*}
    f'(s)=\trace\parenth{(\Ind_d-sA)^{-1}A}-\trace(A)-4s\trace\parenth{AA\tp},
\end{equation*}
We notice that $f'(0)=0$. Take second-order derivative, 
\begin{equation*}
\begin{aligned}
f''(s)&=\trace\parenth{(\Ind_d-sA)^{-1}A(\Ind_d-sA)^{-1} A}-4\trace\parenth{AA\tp}\\
&\leq \vecnorm{(\Ind_d-sA)^{-1}A}{F}^2- 4\vecnorm{A}{F}^2\\
&\leq \vecnorm{A}{F}^2\brackets{\enorm{(\Ind_d-sA)^{-1}}^2-4},
\end{aligned}
\end{equation*}
thus $f''(s)\leq 0$ for $s\in [0,1]$. Because by Lemma \ref{lem:inverse_enorm_control}, $\enorm{(\Ind-sA)^{-1}-\Ind_d}\leq 1$ when $\enorm{sA}\leq \frac12$. 

As a result, $f$ is a decreasing on $[0,1]$, and the inequality follows by $f(1)\leq f(0)=0$.
\end{proof}

\section{Additional proofs for the acceptance rate}
\label{app:add_proofs_acceptance}
In this section, we prove the results in Subsection~\ref{sub:intermediate_concentration_results}.

\subsection{Basic concentration bounds}
\label{sub:add_proofs_acceptance_basic_concentration}
\begin{proof}[Proof of Lemma~\ref{lem:grad_norm_bound}.]
  We have
\begin{align*}
  &\quad \Exs_{q \sim e^{-f}} \enorm{\gradf(q)}^{2\ell} \\
  &=  \Exs \vecnorm{\gradf(q)}{2}^{2(\ell-1)} \gradf(q)\tp \gradf(q) \\
  &\overset{(i)}{=} \Exs \enorm{\gradf(q)}^{2(\ell-1)} \trace\parenth{\hessf_q} + (\ell-1) \Exs \enorm{\gradf(q)}^{2(\ell-2)} \trace\parenth{2 \gradf(q) \tp \hessf_q \gradf(q)} \\
  &\leq \Exs \enorm{\gradf(q)}^{2(\ell-1)} \parenth{\trace\parenth{\hessf_q} + 2(\ell-1) \Lparam}.
\end{align*}
(i) follows from the multivariate integration by parts (or Green's identities, $\nabla u = \gradf(q) e^{-f(q)}$, $\nabla v = \vecnorm{\gradf(q)}{2}^{2(\ell-1)} \gradf(q) $), and the boundary term vanishes because $e^{-f}$ is subexponential-tailed. 
By H\"older's inequality, we obtain that
\begin{align*}
  \Exs_{q \sim e^{-f}} \enorm{\gradf(q)}^{2\ell} \leq (\trH + 2 (\ell-1) \Lparam)^\ell = \Cl^\ell.
\end{align*}
\end{proof}

\begin{proof}[Proof of Lemma~\ref{lem:pHp_bound}]
Given fixed $x$, we define $H\defn \parenth{\hessf_{x}}^2$ for ease of notation. We have
  \begin{align*}
    &\quad \Exs_{p \sim \Normal(0, \Ind_\dims)} \parenth{p \tp H p}^{\ell} \\
    &= \Exs_{p \sim \Normal(0, \Ind_\dims)} \parenth{p \tp H p}^{\ell-1} p \tp H p \\
    &\overset{(i)}{=} \Exs_{p \sim \Normal(0, \Ind_\dims)} \parenth{p \tp H p}^{\ell-1}  \trace\parenth{H} + 2(\ell-1) \Exs_{p \sim \Normal(0, \Ind_\dims)} \parenth{p \tp H p}^{\ell-2} p \tp  H^2 p \\
    &\overset{(ii)} \leq \Exs_{p \sim \Normal(0, \Ind_\dims)}  \parenth{p \tp  H p}^{\ell-1} \parenth{\trace\parenth{H} + 2 (\ell-1) \Lparam^2}\\
    &\leq \Exs_{p \sim \Normal(0, \Ind_\dims)}\parenth{p \tp  H p}^{\ell-1} \parenth{L^2 d+2(l-1)L^2}
  \end{align*}
  (i) follows from the multivariate integration by parts, where we used $\nabla u = p e^{-\frac{1}{2} \enorm{p}^2}$, $\nabla v = \parenth{p \tp H p}^{\ell-1} H p $, and the boundary term vanishes. Inequality $(ii)$ holds since $0\preceq H\preceq L^2 \Ind_{d}$. By H\"older's inequality, we obtain the desired result. 
\end{proof}


\begin{proof}[Proof of Lemma~\ref{lem:gradHp}]
  Given $q$, the random variable $\gradf(q) \tp \hessf_{q} p$ is Gaussian with variance $\enorm{\nabla^2 f_q \nabla f(q)}^2$. Therefore, from moments of the standard Gaussian, we know
  \begin{align*}
    \brackets{\Exs_{(q, p) \sim {e^{-f} \times \Normal(0, \Ind_\dims)}} \parenth{\gradf(q) \tp \hessf_{q} p}^\ell }^{\frac{1}{\ell}} &\leq \brackets{\Exs_{q \sim e^{-f}} \enorm{\nabla^2 f_q \gradf(q)}^{\ell} (\ell-1)!! }^{\frac{1}{\ell}}\\
    & \leq \brackets{\Exs_{q \sim e^{-f}}\enorm{\gradf(q)}^{\ell} \Lparam^\ell \ell^{\frac{\ell}{2}}}^{\frac{1}{\ell}}\\
    & \leq \ell^{\frac12} L\Cl^{\frac12}.
  \end{align*}
  The last step follows from Lemma~\ref{lem:grad_norm_bound}.
\end{proof}

\subsection{Third-order derivative bounds}
\label{sub:add_proofs_acceptance_third_order_bounds}
\begin{proof}[Proof of Lemma~\ref{lem:jerk_ppp_bound}]
  Define 
  \begin{align*}
    g(p,p,p) \defn \nabla^3 f_x[p,p,p].
  \end{align*}
  We have $\Exs g(p) = 0$. Applying Theorem 2.\textcolor{red}{16} in~\cite{adamczak2021moments} for Gaussian chaos, we obtain
  \begin{align}
    \parenth{\Exs g(p)^\ell}^{\frac{1}{\ell}} \leq c\sum_{k=1}^3 \sum_{J \subseteq [k]}\sum_{\mathcal P \in \mathcal P(J)} \ell^{|\mathcal P|/2} \vecnorm{\mathbb E\nabla^k g(p)}{\mathcal P},\label{eq:gaussianchaos}  
  \end{align}
  where $\mathcal P(J)$ is the set of partitions of $J$ and $c$ is a universal constant. For a partition $\mathcal P = \{I_1,\dots,I_r\}$ and $k$ dimensional multi-index tensor $(\mathbf{a}_{i_1\dots i_k})$, the $\|\mathbf a\|_{\mathcal P}$ is defined in an analogous fashion as Eq.~\eqref{eq:def_multi_index_norm}
  \begin{align*}
    \vecnorm{\mathbf{a}}{\mathcal P}^2 \defn \sup_x\sum_{\mathbf{i}_{[k] \backslash J}} \brackets{\sum_{\mathbf{i}_J}\mathbf a_{\mathbf{i}} \parenth{\prod_{\rho=1}^r x^{\rho}_{\mathbf{i}_{I_\rho}}} }^2,
\end{align*}
where the sup is over the set $\braces{\forall \rho \leq r, \sum_{\mathbf{i}_{I_\rho}}\parenth{x_{\mathbf{i}_{I_\rho}}^\rho }^2 \leq 1}$ with
\begin{align*}
    \sum_{\mathbf i_{I_r}} x_{\mathbf i_{I_r}}
\end{align*}

  We bound summation by separating into several cases. 
  \begin{itemize}
    \item $k = 3, J = \braces{1,2,3}$, then 
    \begin{align*}
      &\quad \sum_{\mathcal P \in \mathcal P(J)} \ell^{|\mathcal P|/2}\vecnorm{\Exs \nabla^k g(p)}{\mathcal P} \\
      &= \ell^{3/2}\vecnorm{3 \nabla^3 f_x}{\braces{1}\braces{2}\braces{3}} + 3 \ell \vecnorm{3 \nabla^3 f_x}{\braces{12}\braces{3}} + \ell^{1/2}\vecnorm{3 \nabla^3 f_x}{\braces{123}} \\
      &\leq 15 \ell^{3/2}\vecnorm{\nabla^3 f_x}{\braces{123}}.
    \end{align*}
    \item $k = 3, J = \braces{1,2}$ or $k = 3, J = \braces{1}$, the term is smaller than $c \ell \vecnorm{\nabla^3 f_x}{\braces{123}}$.
    \item $k = 2$, then $\Exs \nabla^2 g(p) = 0$. 
    \item $k = 1$, then 
    \begin{align*}
      \vecnorm{\Exs \nabla g(p)}{\braces{1}} = 3 \ell^{\frac12} \enorm{ \parenth{\partial_i \trace(\hessf)}_{i=1}^\dims } \leq  3 \ell^{\frac12} \dims^{\frac12} \vecnorm{\jerkf_x}{\braces{12}\braces{3}}.
    \end{align*}
  \end{itemize}
  Combining the above terms, we conclude.
\end{proof}

\begin{proof}[Proof of Lemma~\ref{lem:jerk_pp_norm_bound}]
  Let $A \defn \jerkf_{q}$ and $g(p) \defn \enorm{\jerkf_{q}[p, p, \cdot]}^2$. We have
  \begin{align*}
    &\quad g(p) =  \sum_k \parenth{ \sum_{ij} A_{ijk} p_i p_j \sum_{l m} A_{lmk} p_l p_m }.
  \end{align*}
  This is a fourth order (coupled) Gaussian chaos with tensor $B$ which has $ijlm$-th coefficient being $\sum_k A_{ijk} A_{lmk}$. Its derivatives are
  \begin{align*}
    &\Exs \nabla g (p) = 0 \\
    &\Exs \nabla^2 g(p) = \Big(4 \sum_{k=1}^n B_{ijkk} + 8\sum_{k=1}^n B_{ikjk}\Big)_{i,j=1}^n\numberthis\label{eq:secondder}\\
    &\Exs \nabla^3 g(p) = 0\\
    &\Exs \nabla^4 g(p) = 24 (B_{ijlm})_{i,j,l,m=1}^n.
  \end{align*}
    First, we bound the $\braces{1234}$ norm of the tensor $\frac{1}{24}\Exs \nabla^4 g(p) = B$. From the definition of $B$, we have
    \begin{align*}
       \vecnorm{B}{\braces{1234}}^2 &= \sum_{ijlm}B_{ijlm}^2 \\
       &= \sum_{ijlm} \parenth{\sum_k A_{ijk} A_{lmk}}^2 \\
       &\overset{(i)}{\leq} \sum_{ijlm} (\sum_kA_{ijk}^2)(\sum_lA_{lmk}^2) \\
       &=\vecnorm{A}{\braces{123}}^4.
    \end{align*}
    (i) follows from Cauchy-Schwarz inequality. 
  Next, for any partition $\mathcal{P}$ of $[4]$, we have that 
  \begin{align*}
    \vecnorm{B}{\mathcal{P}} \leq  \vecnorm{B}{\braces{1234}} \leq \vecnorm{A}{\braces{123}}^2.
  \end{align*}
  Therefore
  \begin{align}
    \label{eq:firstpart}
    \sum_{J \subseteq [4]}\sum_{\mathcal P \in \mathcal P(J)} \ell^{|\mathcal P|/2} \vecnorm{\Exs \nabla^4 g(p)}{\mathcal P}
      \lesssim \ell^{2}\vecnorm{B}{\braces{1234}} \leq \ell^{2}\vecnorm{A}{\braces{123}}^2.
  \end{align}
  
  Next, we bound the Frobenius norm of $\Exs \nabla^2 g(p)$. For the first term in Equation~\eqref{eq:secondder}, we have
  \begin{align*}
    \sum_{ij}(\sum_l B_{ijll})^2 &=
    \sum_{ij} (\sum_{lk}A_{llk}A_{ijk})^2\\
    &\leq \|A\|^2_{\braces{12}\braces{3}}\enorm{(\trace{A_{i\cdot\cdot}})_{i=1}^d}^2\\
    &\leq \dims \|A\|^4_{\braces{12}\braces{3}}.
  \end{align*}
  For the second part in~\eqref{eq:secondder}, we have
  \begin{align*}
    \sum_{ij}(\sum_l B_{iljl})^2
    &= \sum_{ij}(\sum_l \sum_k A_{ilk}A_{jlk})^2\\
    &\leq \sum_{ij}(\sum_{lk}A_{ilk}^2)(\sum_{lk}A_{jlk}^2)\\
    &\leq (\sum_{ilk} A_{ilk}^2)^2\\
    &=\|A\|_{\braces{123}}^4.
  \end{align*}
Hence
\begin{align}
    \vecnorm{\Exs \nabla^2 g(p)}{F}^2 \lesssim \dims^\frac12 \|A\|^2_{\braces{12}\braces{3}}+\|A\|_{\braces{123}}^2.\label{eq:frobeniusbound}
\end{align}
Therefore
\begin{align*}
  \sum_{J \subseteq [2]}\sum_{\mathcal P \in \mathcal P(J)} \ell^{|\mathcal P|/2} \vecnorm{\Exs \nabla^2 g(p)}{\mathcal P}
  &\lesssim \ell \parenth{ \dims^{\frac12} \|A\|^2_{\braces{12}\braces{3}}+\|A\|_{\braces{123}}^2 }.\numberthis\label{eq:secondpart}
\end{align*}
   From Equation~\eqref{eq:gaussianchaos} in the proof of Lemma~\ref{lem:jerk_ppp_bound} which is based on Theorem 2.15 in~\cite{adamczak2021moments}, and combining Equations~\eqref{eq:firstpart} and~\eqref{eq:secondpart}, we obtain
   \begin{align*}
    \matsnorm{g(p) - \Exs g(p)}{\ell}
    &\lesssim
     \ell^2\parenth{\dims^{\frac12}\|A\|^2_{\braces{12}\braces{3}}+\|A\|_{\braces{123}}^2}.
   \end{align*}
   Finally, we upper bound the expectation $\Exs g(p)$. Note that only the $B_{iijj}$ and $B_{ijij}$ terms survive in the expectation. Hence
   \begin{align*}
       \Exs g(p) &= \sum_{\braces{ij},\ i\neq j}2B_{iijj} + 4B_{ijij} + 3\sum_i B_{iiii}\\
       &=2\sum_{ij}\sum_{k}A_{ijk}A_{ijk} + \sum_{ij}\sum_{k}A_{jjk}A_{iik} + 3\sum_{i} \sum_k A^2_{iik} \\
       &=4\|A\|_{\braces{123}}^2 + 2\sum_k(\sum_{i} A_{iik})^2 + 3\sum_{ik} A^2_{iik}\\
       &\leq 4\|A\|_{\braces{123}}^2 
        + 2d\|A\|_{\braces{12}\braces{3}}^2 + 3\|A\|_{\braces{123}}^2\\
        &\lesssim \|A\|_{\braces{123}}^2 + d\vecnorm{A}{\braces{12}\braces{3}}^2.
   \end{align*}
   Therefore, overall
   \begin{align*}
       \matsnorm{g(p)}{\ell} &\lesssim 
       \ell^2\parenth{\dims^{\frac12} \|A\|^2_{\braces{12}\braces{3}}+\|A\|_{\braces{123}}^2} + \|A\|_{\braces{123}}^2 + d\vecnorm{A}{\braces{12}\braces{3}}^2 
   \end{align*}
\end{proof}

\begin{proof}[Proof of Lemma~\ref{lem:pHp_diff_bound}]
  We have
  \begin{align*}
    \cp_t\tp \hessf_{\cq_t} \cp_t - \cp_0 \tp \hessf_{\cq_0} \cp_0 = \int_0^t -2\gradf(\cq_s)\tp \hessf_{\cq_s}\cp_s + \jerkf_{\cq_s}[\cp_s, \cp_s, \cp_s] ds.
  \end{align*}
  Then
  \begin{align*}
    &\quad \Exs_{(\cq_0, \cp_0) \sim \target \times \Normal(0, \Ind_\dims)} \parenth{\cp_t\tp \hessf_{\cq_t} \cp_t - \cp_0 \tp \hessf_{\cq_0} \cp_0}^{\ell}  \\
    &\leq \Exs t^{\ell-1} \int_0^t \abss{ -2\gradf(\cq_s)\tp \hessf_{\cq_s}\cp_s + \jerkf_{\cq_s}[\cp_s, \cp_s, \cp_s]}^\ell ds \\
    &\leq t^{\ell-1} 2^{\ell-1} \brackets{\int_0^t \Exs 2^\ell \parenth{\gradf(\cq_s)\tp \hessf_{\cq_s}\cp_s}^\ell + \Exs \parenth{\jerkf_{\cq_s}[\cp_s, \cp_s, \cp_s]}^\ell  ds}  \\
    &\overset{(i)}{=} t^{\ell} 2^{\ell-1} \brackets{ 2^\ell \Exs \parenth{\gradf(\cq_0)\tp \hessf_{\cq_0}\cp_0}^\ell + \Exs \parenth{\jerkf_{\cq_0}[\cp_0, \cp_0, \cp_0]}^\ell} \\
    &\overset{(ii)}{\leq} \brackets{c t \parenth{\ell^{\frac12} \Lparam \Cl^{\frac12} +  \ell^{\frac32} \vecnorm{\jerkf_\cdot}{\braces{123}} + \ell^{\frac12} \dims^{\frac12} \vecnorm{\jerkf_\cdot}{\braces{12}\braces{3} } }}^{\ell}.
  \end{align*}
  (i) uses the fact that the continuous Hamiltonian dynamics preserves the Hamiltonian and $(\cq_s, \cp_s)$ has the same law as $(\cq_0, \cp_0)$. 
  (ii) follows from Lemma~\ref{lem:gradHp} and Lemma~\ref{lem:jerk_ppp_bound}.
\end{proof}

\begin{proof}[Proof of Lemma~\ref{lem:Hp_diff_bound}]
  We have
  \begin{align*}
    &\quad \hessf_{\cq_t} \cp_t - \hessf_{\cq_0} \cp_0 \\
    &= \int_0^t \jerkf_{\cq_s}[\cp_s, \cp_s, \cdot] - \hessf_{\cq_s} \gradf(\cq_s) ds.
  \end{align*}
  Hence
  \begin{align*}
    &\quad \Exs \enorm{\hessf_{\cq_t} \cp_t - \hessf_{\cq_0} \cp_0}^{2\ell} \\
    &\leq t^{2\ell-1} \int_0^t \Exs \enorm{\jerkf_{\cq_s}[\cp_s, \cp_s, \cdot] - \hessf_{\cq_s} \gradf(\cq_s)}^{2\ell} ds \\
    &\leq t^{2\ell-1} \int_0^t 2^{2\ell-1} \parenth{\Exs \enorm{\jerkf_{\cq_s}[\cp_s, \cp_s, \cdot]}^{2\ell} + \Lparam^{2\ell} \enorm{\gradf(\cq_s)}^{2\ell} } ds \\
    &\overset{(i)}{=} t^{2\ell} 2^{2\ell-1} \Exs \parenth{ \enorm{\jerkf_{\cq_0}[\cp_0, \cp_0, \cdot]}^{2\ell} + \Lparam^{2\ell} \enorm{\gradf(\cq_0)}^{2\ell} } \\
    &\overset{(ii)}{\leq} \brackets{c t \parenth{\vecnorm{\jerkf_\cdot}{\braces{123}} + \dims^{\frac12} \vecnorm{\jerkf_\cdot}{\braces{12}\braces{3}} + \Lparam\Cl^{\frac12} }}^{2\ell}.
  \end{align*}
  (i) uses the fact that the continuous Hamiltonian dynamics preserves the Hamiltionian and $(\cq_s, \cp_s)$ has the same law as $(\cq_0, \cp_0)$. 
  (ii) follows from Lemma~\ref{lem:grad_norm_bound} and Lemma~\ref{lem:jerk_pp_norm_bound}.
\end{proof}

\begin{proof}[Proof of Lemma~\ref{lem:qp_discrete_continuous_diff}]
  We have
  \begin{align*}
    \cq_t - q_t &= - \int_0^t \int_0^s \parenth{\gradf(\cq_\tau) - \gradf(\cq_0)} d\tau ds \\
    &= - \int_0^t \int_0^s \int_0^\tau \hessf_{\cq_\iota} \cp_\iota d\iota d\tau ds.
  \end{align*}
  Then we have
  \begin{align*}
    \matsnorm{\cq_t - q_t}{2\ell} 
    &\overset{(i)}{\leq} t^{3} \matsnorm{\hessf_{\cq_0} \cp_0}{2\ell} \\
    &\overset{(ii)}{\leq} t^{3} \Lparam \dl^{\frac12}.
  \end{align*}
  (i) follows from Eq.~\eqref{eq:random_vector_norm_integral} and the fact that $(\cq_s, \cp_s)$ has the same law as $(\cq_0, \cp_0)$. (ii) follows from Lemma~\ref{lem:pHp_bound}.
   For the other term, we have
  \begin{align*}
    \cp_\step - p_\step &= - \int_0^\step \gradf(\cq_t) - \frac12 \gradf(\cq_0) - \frac12 \gradf(\cq_\step) dt  -  \frac{\step}{2} \parenth{\gradf(\cq_\step) - \gradf(q_\step) } \\
    &= - \int_0^\step \parenth{\frac12 \step - t} \hessf_{\cq_t} \cp_t dt - \frac{\step}{2} \parenth{\gradf(\cq_\step) - \gradf(q_\step) }\\
    &= - \int_0^\step \parenth{\frac12 \step - t} \parenth{\hessf_{\cq_t} \cp_t -\hessf_{\cq_0} \cp_0} dt - \frac{\step}{2} \parenth{\gradf(\cq_\step) - \gradf(q_\step) }.
  \end{align*}
  Hence
  \begin{align*}
    \matsnorm{\cp_\step - p_\step}{2\ell} &\leq \frac{\step^2}{2} \sup_{t \in [0, \step]} \matsnorm{\hessf_{\cq_t} \cp_t -\hessf_{\cq_0} \cp_0}{2\ell} + \frac{\step\Lparam}{2} \matsnorm{\cq_\step - q_\step}{2\ell} \\
    &\overset{(i)}{\leq} c\step^3 \parenth{\vecnorm{\jerkf_\cdot}{\braces{123}} + \dims^{\frac12} \vecnorm{\jerkf_\cdot}{\braces{12}\braces{3}} + \Lparam^{\frac32}\dl^{\frac12}}
  \end{align*}
  (i) follows from Lemma~\ref{lem:Hp_diff_bound}, the $\cq_t - q_t$ bound above and $\step^2 \Lparam \leq 1$. 
\end{proof}

\section{Proofs for examples of functions with small third order derivative tensor bounds}\label{app:separableexamples}

\begin{proof}[Proof of Lemma~\ref{lem:ridgeseparable}]
  Note that the second derivative and the third derivative of $f$ at $x$ are given by
  \begin{align}
    \label{eq:logisticthirdtensor}
    \hessf_\theta &= \sum_{i=1}^n u''_i(a_i \tp \theta) a_i \otimes a_i, \notag \\
    \jerkf_\theta &= \sum_{i=1}^n u'''_i(a_i \tp \theta) a_i \otimes a_i \otimes a_i.
  \end{align}
  First, for the operator norm of the Hessian, we have
  \begin{align*}
    \enorm{\hessf_\theta} &\leq \sum_{i=1}^n \abss{u''_i(a_i\tp \theta)} \big\|a_i^{2\otimes}\big\|\\
    &= \sum_{i=1}^n \abss{u''_i(a_i\tp \theta)} \enorm{a_i}^2\\
    &\leq n\rho_2.
  \end{align*}
Similarly, for the $\braces{1}\braces{2}\braces{3}$-norm of the third derivative, we have
\begin{align*}
    \vecnorm{\nabla^3 f_\theta}{\braces{1}\braces{2}\braces{3}} &\leq \sum_{i=1}^n \big|u'''_i(a_i^\top \theta)\big| \vecnorm{{a_i}^{3\otimes}}{\braces{1}\braces{2}\braces{3}}\\
    & \leq \sum_{i=1}^n |u'''_i(a_i^\top \theta)| \enorm{a_i}^3\\
    &\leq n \rho_3.
\end{align*}

To bound the $\|.\|_{\braces{123}}$ norm:
\begin{align*}
    \vecnorm{\jerkf_\theta}{\braces{123}} &\leq \sum_{i=1}^n \abss{u'''_i(a_i^\top \theta)}\vecnorm{a_i^{3\otimes}}{\braces{123}}\\
    & \leq \sum_{i=1}^n \abss{u'''_i(a_i^\top \theta)}\enorm{a_i}^3\\
    & \leq  n \rho_3.
\end{align*}

Finally note that the $\|.\|_{\braces{12}\braces{3}}$ norm is upper bounded by the $\|.\|_{\braces{123}}$ norm.

\end{proof}

\begin{proof}[Proof of Lemma~\ref{lem:logisticreg}]

Calculating the second derivative of the logistic loss function:
    \begin{align*}
        \frac{d^2}{dy^2}\ell(y,-1) = \frac{d^2}{dy^2}\ell(y,+1) = \frac{e^{-y}}{(1+e^{-y})^2} > 0.
    \end{align*}
    Therefore, $\mathcal L$ is convex in $\theta$. Additionally, for $f(\theta) \defn \mathcal L(\theta) + \frac{\alpha^2}{2} \enorm{\theta}^2$, $e^{-f}$ satisfies isoperimetric inequality with coefficient $\Omega(\alpha)$.
    Moreover, for smoothness of $\ell$, note that
    \begin{align*}
        \Big|\frac{d^2}{dy^2}\ell(y,-1)\Big| 
        \leq 1.
    \end{align*}
    Moreover,
    the third derivative of $\ell$ is given by
    \begin{align*}
        \frac{d^3}{dy^3}\ell(y,-1) = \frac{-e^{-y}(1+e^{-y})^2 - 2e^{-y}(1+e^{-y})(-e^{-y})}{(1+e^{-y})^4} &= \frac{e^{-3y} - e^{-y}}{(1+e^{-y})^4},
    \end{align*}
    which is bounded as
    \begin{align*}
        \big|\frac{d^3}{dy^3}\ell(y,-1)\big| \leq 1.
    \end{align*}
    Therefore, using Lemma~\ref{lem:ridgeseparable} we have
    \begin{align*}
        \|\nabla^2\mathcal L(\theta)\|_2 \leq n,
    \end{align*}
    which implies that $f$ is $n + \alpha$ smooth. Moreover, using Lemma~\ref{lem:ridgeseparable} for the third derivative
    \begin{align*}
       \|\jerkf(\theta)\|_{\braces{12}\braces{3}} \leq n.
    \end{align*}
    Plugging this into the mixing bound in Corollary~\ref{cor:ridgesep} for Ridge separable functions completes the proof.
\end{proof}

\begin{proof}[Proof of Lemma~\ref{lem:twolayernet}]

We note that $\theta_j\in\real^{d'}$ for each $j\in \{1,\ldots,m\}$. We write $\theta\in\real^{m d'}$ as their concatenation, that is, $\theta\tp =(\theta_1\tp\, \theta_2\tp\,\ldots\theta_m\tp)$. 

For any $\xi,\phi,\chi\in\real^{md'}$. The derivative of the loss function $\Law$ with respect to $\theta$ can be computed by

\begin{equation*}
\begin{aligned}
\nabla \Law(\theta)[\xi]&=\sum_{i=1}^n \braces{2(f_{NN}(x_i)-y_i)\sum_{j=1}^m w_j \sigma'(\theta_j\tp x_i)x_i\tp \xi_j}\\
\nabla^2 \Law(\theta)[\xi,\phi]&=2\sum_{i=1}^n \brackets{\sum_{k=1}^m w_k \sigma'(\theta_k\tp x_i)x_i\tp \phi_k} \brackets{\sum_{j=1}^m w_j \sigma'(\theta_j\tp x_i)x_i\tp \xi_j}+\\
&\qquad2\sum_{i=1}^n (f_{NN}(x_i)-y_i)\sum_{j=1}^m w_j \sigma''(\theta_j\tp x_i)x_i\tp \xi_j x_i\tp \phi_j 
\end{aligned}
\end{equation*}
As a result, for any $\xi \in \real^{md'}$, we have
\begin{equation*}
\begin{aligned}
\nabla^2 \Law(\theta)[\xi,\xi]&\leq 2\sum_{i=1}^n \brackets{\sum_{j=1}^m w_j \sigma'(\theta_j\tp x_i)x_i\tp \xi_j}^2+2\sum_{i=1}^n 2mc\sum_{j=1}^m \abss{w_j \sigma''(\theta_j\tp x_i)}(x_i\tp \xi_j)^2\\
&\overset{(i)}\leq 2\sum_{i=1}^n c^2\brackets{\sum_{j=1}^m \abss{x_i\tp \xi_j}}^2+2\sum_{i=1}^n 2mc^2\sum_{j=1}^m (x_i\tp \xi_j)^2\\
&\leq 2\sum_{i=1}^n c^2 m\sum_{j=1}^m (x_i\tp \xi_j)^2+2\sum_{i=1}^n 2mc^2\sum_{j=1}^m (x_i\tp \xi_j)^2\\
&\leq 6mc^2 \sum_{i=1}^n\sum_{j=1}^m (x_i\tp \xi_j)^2\overset{(ii)}\leq 6nmc^2\sum_{j=1}^m \enorm{\xi_j}^2= 6nmc^2  \enorm{\xi}^2
\end{aligned}
\end{equation*}
where we used $|w_j|\leq 1$ and $|\sigma'(z)|,|\sigma''(z)|\leq c$ in inequality $(i)$. Inequality $(ii)$ holds since $\enorm{x_i}=1$. Thus we have
\begin{equation*}
    \enorm{\nabla^2 \Law (\theta)}\lesssim mnc^2.
\end{equation*}

The third-order tensor $\nabla^3 \Law(\theta)$ can be computed: 
\begin{equation*}
\begin{aligned}
    \nabla^3 \Law (\theta)[\xi,\phi,\chi]&= 2\sum_{i=1}^n \brackets{\sum_{j=1}^m w_j \sigma'' (\theta_j\tp x_i)x_i\tp\phi_j x_i\tp \chi_j}\brackets{\sum_{j=1}^m w_j \sigma' (\theta_j\tp x_i)x_i\tp\xi_j}+\\
    &\quad 2\sum_{i=1}^n \brackets{\sum_{j=1}^m w_j \sigma'' (\theta_j\tp x_i)x_i\tp\xi_j x_i\tp \chi_j}\brackets{\sum_{j=1}^m w_j \sigma' (\theta_j\tp x_i)x_i\tp\phi_j}+\\
    &\quad 2\sum_{i=1}^n \brackets{\sum_{j=1}^m w_j \sigma'' (\theta_j\tp x_i)x_i\tp\xi_j x_i\tp \phi_j}\brackets{\sum_{j=1}^m w_j \sigma' (\theta_j\tp x_i)x_i\tp\chi_j}+\\
    &\quad2\sum_{i=1}^n (f_{NN}(x_i)-y_i)\sum_{j=1}^m w_j \sigma'''(\theta_j\tp x_i) (x_i\tp \xi_j) (x_i\tp \chi_j)(x_i\tp \phi_j).
\end{aligned}
\end{equation*}
To bound the $\braces{1}\braces{2}\braces{3}$-norm of $\nabla^3 \Law(\theta)$, it is adequate to maximize $\abss{\nabla^3\Law (\theta)[\xi,\xi,\xi]}$ for $\enorm{\xi}\leq 1$ due to symmetry (A proof is provided in \cite{zhang_best_2012}). We have 
\begin{equation*}
\begin{aligned}
  \nabla^3 \Law (\theta)[\xi,\xi,\xi]&= 6\sum_{i=1}^n \brackets{\sum_{j=1}^m w_j \sigma'' (\theta_j\tp x_i)(x_i\tp \xi_j)^2}\brackets{\sum_{j=1}^m w_j \sigma' (\theta_j\tp x_i)x_i\tp\xi_j}+\\
  &\quad 2\sum_{i=1}^n (f_{NN}(x_i)-y_i)\sum_{j=1}^m w_j \sigma'''(\theta_j\tp x_i) (x_i\tp \xi_j)^3,
\end{aligned}
\end{equation*}
thus we have
\begin{equation*}
\begin{aligned}
\abss{\nabla^3 \Law (\theta)[\xi,\xi,\xi]}&\lesssim \sum_{i=1}^n \brackets{\sum_{j=1}^m c (x_i\tp \xi_j)^2}\brackets{\sum_{j=1}^m c \abss{x_i\tp \xi_j}} + \sum_{i=1}^n mc \sum_{j=1}^m
c \abss{x_i\tp\xi_j}^3\\
&\overset{(i)}\lesssim nc^2\brackets{\sum_{j=1}^m  \enorm{\xi_j}^2}\brackets{\sum_{j=1}^m  \enorm{ \xi_j}} + n mc^2 \sum_{j=1}^m \enorm{\xi_j}^3\\
&\lesssim nc^2\enorm{\xi}^2\brackets{\sum_{j=1}^m  \enorm{ \xi_j}} + n mc^2 \enorm{\xi}\sum_{j=1}^m \enorm{\xi_j}^2\\
&\overset{(ii)}\lesssim nc^2\sqrt{m}\enorm{\xi}^3 + n mc^2 \enorm{\xi}^3 \lesssim mnc^2 \enorm{\xi}^3,
\end{aligned}
\end{equation*}
where inequality $(i)$ holds since $\enorm{x_i}= 1$, and inequality $(ii)$ holds by Cauchy-Schwartz inequality. To bound the $\braces{123}$-norm, we define third-order tensors $A$ and $B$ by
\begin{equation}
\begin{aligned}
A[\xi,\phi,\chi]\defn 2\sum_{i=1}^n \brackets{\sum_{j=1}^m w_j \sigma'' (\theta_j\tp x_i)x_i\tp\phi_j x_i\tp \chi_j}\brackets{\sum_{j=1}^m w_j \sigma' (\theta_j\tp x_i)x_i\tp\xi_j}\\
B[\xi,\phi,\chi]\defn 2\sum_{i=1}^n (f_{NN}(x_i)-y_i)\sum_{j=1}^m w_j \sigma'''(\theta_j\tp x_i) (x_i\tp \xi_j) (x_i\tp \chi_j)(x_i\tp \phi_j).
\end{aligned}
\end{equation}
We note that
\begin{equation*}
\nabla^3 \Law (\theta)[\xi,\phi,\chi]= A[\xi,\phi,\chi]+A[\chi,\xi,\phi]+A[\phi,\chi,\xi]+B[\xi,\phi,\chi],
\end{equation*}
so we have $\matsnorm{\nabla^3 \Law(\theta)}{\braces{123}}\leq 3\matsnorm{A}{\braces{123}}+  \matsnorm{B}{\braces{123}}$, and we control $A$ and $B$ separately.

We use $e_{p}\in\real^{md'}$ to denote the unit vector with $p$-th entry to be $1$, and all other entries being $0$. $e_{pj}\in\real^{d'}$ denotes the $j$-th component of $e_p$. Then we have
\begin{equation}
\begin{aligned}
\matsnorm{A}{\braces{123}}^2 &=\sum_{p,q,\ell=1}^{md'} A[e_p,e_q,e_\ell]^2  \\
&=4\sum_{p,q,\ell=1}^{md'}\braces{\sum_{i=1}^n \brackets{\sum_{j=1}^m w_j \sigma'' (\theta_j\tp x_i)x_i\tp e_{pj} x_i\tp e_{qj}}\brackets{\sum_{j=1}^m w_j \sigma' (\theta_j\tp x_i)x_i\tp e_{\ell j}}}^2\\
&\overset{(i)}\leq 4\sum_{p,q,\ell=1}^{md'}\sum_{i=1}^n \brackets{\sum_{j=1}^m w_j \sigma'' (\theta_j\tp x_i)x_i\tp e_{pj} x_i\tp e_{qj}}^2\brackets{\sum_{j=1}^m w_j \sigma' (\theta_j\tp x_i)x_i\tp e_{\ell j}}^2\\
&\overset{(ii)}\leq 4\sum_{p,q,\ell=1}^{md'}\sum_{i=1}^n c^4\brackets{\sum_{j=1}^m \abss{x_i\tp e_{pj} x_i\tp e_{qj}}}^2\brackets{\sum_{j=1}^m \abss{x_i\tp e_{\ell j}}}^2\\
&\overset{(iii)}\leq 4\sum_{p,q,\ell=1}^{md'}\sum_{i=1}^n c^4 m\brackets{\sum_{j=1}^m \abss{x_i\tp e_{pj} }^2}\brackets{\sum_{j=1}^m \abss{x_i\tp e_{\ell j}}^2}
\brackets{\sum_{j=1}^m \abss{x_i\tp e_{q j}}^2}\\
\end{aligned}
\end{equation}
where inequalities $(i),(iii)$ follows from Cauchy-Schawrtz inequality, and inequality $(ii)$ follows since $|w_j|\leq 1$ and $|\sigma'(z)|,|\sigma''(z)|\leq c$. Then $\matsnorm{A}{\braces{123}}$ can be controlled by 
\begin{equation}\label{eq:A_123_norm_bound}
\begin{aligned}
\matsnorm{A}{\braces{123}}^2&\leq 4c^4 m\sum_{i=1}^n\sum_{p,q,\ell=1}^{md'} \brackets{\sum_{j=1}^m \abss{x_i\tp e_{pj} }^2}\brackets{\sum_{j=1}^m \abss{x_i\tp e_{\ell j}}^2}
\brackets{\sum_{j=1}^m \abss{x_i\tp e_{q j}}^2}\\
&\leq 4c^4 m\sum_{i=1}^n \brackets{\sum_{j=1}^m \sum_{p=1}^{md'}\abss{x_i\tp e_{pj} }^2}\brackets{\sum_{j=1}^m\sum_{q=1}^{md'} \abss{x_i\tp e_{\ell j}}^2}
\brackets{\sum_{j=1}^m \sum_{\ell=1}^{md'}\abss{x_i\tp e_{q j}}^2}\\
&\overset{(i)}=  4c^4 m\sum_{i=1}^n \brackets{\sum_{j=1}^m \enorm{x_i }^2}\brackets{\sum_{j=1}^m \enorm{x_i}^2}
\brackets{\sum_{j=1}^m \enorm{x_i}^2}\overset{(ii)}\leq 4nm^4c^4.
\end{aligned}
\end{equation}
where equality $(i)$ holds since for each $j\in \braces{1,\ldots,m}$, $x_i\tp e_{pj}$ is non-zero  only for $p$ ranging over the $j$-th group, thus $\sum_{p=1}^{md'} (x_i\tp e_{pj})^2=\enorm{x_i}^2$. Inequality $(ii)$ holds since $\enorm{x_i}=1$.  

Similarly, $\matsnorm{B}{\braces{123}}$ can be controlled by 
\begin{equation*}
\begin{aligned}
\matsnorm{B}{\braces{123}}^2 &=\sum_{p,q,\ell=1}^{md'} B[e_p,e_q,e_\ell]^2 \\
&= \sum_{p,q,\ell=1}^{md'}\braces{2\sum_{i=1}^n (f_{NN}(x_i)-y_i)\sum_{j=1}^m w_j \sigma'''(\theta_j\tp x_i) (x_i\tp e_{pj}) (x_i\tp e_{qj})(x_i\tp e_{\ell j})}^2\\
&\overset{(i)}\leq  4\sum_{p,q,\ell=1}^{md'}4m^2 c^4\braces{\sum_{i=1}^n \sum_{j=1}^m 
\abss{(x_i\tp e_{pj}) (x_i\tp e_{qj})(x_i\tp e_{\ell j})}}^2\\
&\overset{(ii)}\leq  4\sum_{p,q,\ell=1}^{md'}4m^2 c^4nm\sum_{i=1}^n \sum_{j=1}^m 
(x_i\tp e_{pj})^2 (x_i\tp e_{qj})^2(x_i\tp e_{\ell j})^2\\
\end{aligned}
\end{equation*}
where inequality $(i)$ holds since $\abss{w_j}\leq 1$, $\abss{\sigma'''(z)}\leq 1$, $|f_{NN}(x_i)|\leq mc$ and $|y_i|\leq mc$. Inequality $(ii)$ holds by applying Cauchy-Schwartz inequality twice. Thus $\matsnorm{B}{\braces{123}}$ can be bounded by
\begin{equation}\label{eq:B_123_norm}
\begin{aligned}
\matsnorm{B}{\braces{123}}^2 &\leq 16 nm^3 c^4 \sum_{i=1}^n\sum_{j=1}^m\brackets{\sum_{p=1}^{md'}(x_i\tp e_{pj})^2} 
\brackets{\sum_{q=1}^{md'}(x_i\tp e_{qj})^2}\brackets{\sum_{\ell=1}^{md'}(x_i\tp e_{\ell j})^2}\\
&\leq 16 nm^3 c^4 \sum_{i=1}^n\sum_{j=1}^m \enorm{x_i}^2\cdot \enorm{x_i}^2 \cdot \enorm{x_i}^2\\
&=16 n^2 m^4 c^4,
\end{aligned}
\end{equation}
where the last equality holds since $\enorm{x_i}=1$. Combining Eq.\eqref{eq:A_123_norm_bound} and Eq.\eqref{eq:B_123_norm}, we have
\begin{equation*}
\matsnorm{\nabla^3 \Law (\theta)}{\braces{123}}\lesssim nm^2 c^2 +\sqrt{n}m^2 c^2 \lesssim nm^2 c^2
\end{equation*}
\end{proof}

\end{document}